\definecolor{blu3}{rgb}{.1,.0,.4}
\newcommand*\patchAmsMathEnvironmentForLineno[1]{%
  \expandafter\let\csname old#1\expandafter\endcsname\csname #1\endcsname
  \expandafter\let\csname oldend#1\expandafter\endcsname\csname end#1\endcsname
  \renewenvironment{#1}%
     {\linenomath\csname old#1\endcsname}%
     {\csname oldend#1\endcsname\endlinenomath}}%
\newcommand*\patchBothAmsMathEnvironmentsForLineno[1]{%
  \patchAmsMathEnvironmentForLineno{#1}%
  \patchAmsMathEnvironmentForLineno{#1*}}%
\newtheorem{theorem}{Theorem}
\newtheorem{lemma}[theorem]{Lemma}
\newcommand{\RR}{\ensuremath{\mathbb R}}  
\newcommand{\EE}{\ensuremath{\mathbb E}}  
\newcommand{\A}{\ensuremath{\mathcal A}}  
\newcommand{\I}{\ensuremath{\mathcal I}}  %
\DeclareMathOperator{\bb}{bb}
\DeclareMathOperator{\per}{per}
\DeclareMathOperator{\area}{area}
\DeclareMathOperator{\med}{med}
\DeclareMathOperator{\level}{level}
\DeclareMathOperator{\Out}{out} 
\newcommand{\CH}{\textit{CH\/}}
\DeclareMathOperator{\NE}{ne} 
\DeclareMathOperator{\NW}{nw} 
\DeclareMathOperator{\SE}{se} 
\DeclareMathOperator{\SNE}{NE} 
\DeclareMathOperator{\SNW}{NW} 
\DeclareMathOperator{\SSE}{SE} 
\newcommand{\V}[1]{\ensuremath{v_{\rm #1}}} 
\def\LengthInterval{{\sc LengthInterval}\xspace}
\def\Airport{{\sc Airport}\xspace}
\def\PerimeterBoundingBox{{\sc PerimeterBoundingBox}\xspace}
\def\PerimeterAnchoredBoundingBox{{\sc PerimeterAnchoredBoundingBox}\xspace}
\def\AreaAnchoredRectangles{{\sc AreaAnchoredRectangles}\xspace}
\def\AreaBand{{\sc AreaBand}\xspace}
\def\AreaBoundingBox{{\sc AreaBoundingBox}\xspace}
\def\AreaAnchoredBoundingBox{{\sc AreaAnchoredBoundingBox}\xspace}
\def\AreaEnclosingDisk{{\sc AreaEnclosingDisk}\xspace}
\def\PerimeterEnclosingDisk{{\sc PerimeterEnclosingDisk}\xspace}
\def\AreaConvexHull{{\sc AreaConvexHull}\xspace}
\def\PerimeterConvexHull{{\sc PerimeterConvexHull}\xspace}
\def\DEF#1{\textbf{\emph{#1}}}
\newcommand{\IGNORE}[1]{}
\begin{document}

\title{Computing Shapley values in the plane}

\author{Sergio Cabello\thanks{Department of Mathematics, IMFM, and
			Department of Mathematics, FMF, University of Ljubljana, Slovenia.
            Supported by the Slovenian Research Agency, program P1-0297 and projects J1-8130, J1-8155. 
			Email address: sergio.cabello@fmf.uni-lj.si.}
			\and
			Timothy M. Chan\thanks{Department of Computer Science,
			University of Illinois at Urbana-Champaign, USA.
			Email address: tmc@illinois.edu.}
}

\maketitle

\begin{abstract}
	We consider the problem of computing Shapley values for points in the plane,
	where each point is interpreted as a player, and the value of a coalition
	is defined by the area of usual geometric objects, such as the convex hull 
	or the minimum axis-parallel bounding box.
	
	For sets of $n$ points in the plane, we show how to compute 
	in roughly $O(n^{3/2})$ time the Shapley values for the area of  
	the minimum axis-parallel bounding box and the area of the union
	of the rectangles spanned by the origin and the input points. 
	When the points form an increasing or decreasing chain,
	the running time can be improved to near-linear.
	In all these cases, we use linearity of the Shapley values and
	algebraic methods.
	
	We also show that Shapley values for the area and the perimeter
	of the convex hull 
	or the minimum enclosing disk can be computed in $O(n^2)$ and $O(n^3)$ time, respectively. 
	These problems are closely related to the model of stochastic point sets 
	considered in computational geometry, 
	but here we have to consider random insertion orders of the points instead of 
	a probabilistic existence of points.
	
    \medskip
    \textbf{Keywords:} Shapley values; stochastic computational geometry; convex hull; 
		minimum enclosing disk; bounding box; arrangements; convolutions; airport problem
\end{abstract}

\section{Introduction}

One can associate several meaningful values to a set $P$ of points in the plane, like
for example the area of the convex hull or the area of the axis-parallel bounding box. 
How can we split this value among the points of $P$?
Shapley values are a standard tool in cooperative games to
``fairly" split common cost between different players. Our objective in this paper is
to present algorithms to compute the Shapley values for points in the plane when
the cost of each subset is defined by geometric means.

Formally, a \DEF{coalitional game} is a pair $(N,v)$, 
where $N$ is the set of players and 
$v\colon 2^N \rightarrow \RR$ is the \DEF{characteristic function}, 
which must satisfy $v(\emptyset)=0$. 
Depending on the problem at hand,
the characteristic function can be seen as a cost or a payoff associated to
each subset of players. 

In our setting, the players will be points in the plane, and we will use
$P\subset \RR^2$ for the set of players. 
Such scenario arises naturally in the context of game theory through modeling: 
each point represents an agent, 
and each coordinate of the point represents an attribute of the agent.  
We will consider characteristic functions defined through the area, 
such as the area of the convex hull.
As mentioned before, such area could be interpreted as a cost or a payoff associated
to each subset of the points $P$, depending on the problem.

Shapley values are probably the most popular solution concept 
for coalitional games, which in turn are a very common model for cooperative
games with transferable utility. The objective is to split the 
value $v(N)$ between the different players in a meaningful way.
It is difficult to overestimate the relevance of Shapley values.
See the book edited by Roth~\cite{Roth88} or the survey by Winter~\cite{Winter}
for a general discussion showing their relevance. 
There are also different axiomatic characterizations of the concept, meaning that Shapley
values can be shown to be the only map satisfying certain natural conditions.
Shapley values can be interpreted as a cost allocation, a split of the payoff,
or, after normalization, as a power index. The Shapley-Shubik power index
arises from considering voting games, a particular type of coalitional game.
In our context, the Shapley values that we compute can be interpreted 
as the relevance of each point within the point set for different geometric concepts.

In a nutshell, the Shapley value of a player $p$ is the expected increase in the 
value of the characteristic function 
when inserting the player $p$, if the players are inserted in a uniformly random order.
We provide a formal definition of Shapley values later on.
Since the definition is based on considering all the permutations of the players,
this way to compute Shapley values is computationally unfeasible.
In fact, there are several natural instances where computing Shapley values is difficult.
Thus, there is an interest to find scenarios where computing the Shapley values
is feasible. In this paper we provide some natural scenarios and discuss the
efficient computation.

The problems we consider here can be seen as natural extensions to $\RR^2$
of the classical airport problem considered by Littlechild and Owen~\cite{Littlechild}.
In the airport problem, we have a set $P$ of points with positive coordinate on the real line, 
and the cost of a subset $Q$ of the points is given by $\max(Q)$. It models the portion
of the runway that has to be used by each airplane, and Shapley values provide a way to split
the cost of the runway among the airplanes. 
As pointed out before, the points represent agents, in this case airplanes.
Several other airport problems are discussed in the survey by Thomson~\cite{Thomson13}.

The airport problem is closely related to the problem of allocating the length
of the smallest interval that contains a set of points on the line.
There are several natural generalizations of the intervals when we go from one to two dimensions,
and in this paper we consider several of those natural generalizations.

\paragraph{Coalitional games in the plane.}
The shapes that we consider in this paper are succinctly described in Figure~\ref{fig:intro}.
We next provide a summary of the main coalitional games that we consider;
other simpler auxiliary games are considered later.
In all cases, the set of players is a finite subset $P\subset \RR^2$.
\begin{description}
	\item[\AreaConvexHull game:]
		The characteristic function is 
		$\V{ch}(Q)=\area\left( CH(Q)\right)$ for each nonempty $Q\subset P$,
		where $CH(Q)$ denotes the convex hull of $Q$.
	\item[\AreaEnclosingDisk game:]
		The characteristic function is 
		$\V{ed}(Q)=\area\left( \med(Q)\right)$ for each nonempty $Q\subset P$,
		where $\med(Q)$ is a disk of smallest radius that contains $Q$.
	\item[\AreaAnchoredRectangles game:]
		The characteristic function is 
		$\V{ar}(Q)=\area\left( \bigcup_{p\in Q} R_p\right)$ for each nonempty $Q\subset P$,
		where $R_p$ is the axis-parallel rectangle with one corner at $p$
		and another corner at the origin.
	\item[\AreaBoundingBox game:]
		The characteristic function is 
		$\V{bb}(Q)=\area\left( \bb(Q)\right)$ for each nonempty $Q\subset P$,
		where $\bb(Q)$ is the smallest axis-parallel bounding box of $Q$.
	\item[\AreaAnchoredBoundingBox game:]
		The characteristic function is 
		$\V{abb}(Q)=\area\left( \bb(Q\cup \{o \})\right)$ for each nonempty $Q\subset P$,
		where $o$ is the origin.
\end{description}
In our discussion we focus on the area of the shapes. 
One can consider the variants where the perimeter of the shapes is used.
The name of the problem is obtained by replacing {\sc Area$\star$} by {\sc Perimeter$\star$}.
Handling the perimeter is either substantially easier or can be solved
by similar methods, and it will be discussed along the way.

Note that in all the problems we consider monotone characteristic functions: 
whenever $Q\subset Q'\subset P$ we have $v(Q)\le v(Q')$. 

\begin{figure}
	\centering
	\includegraphics[width=\textwidth]{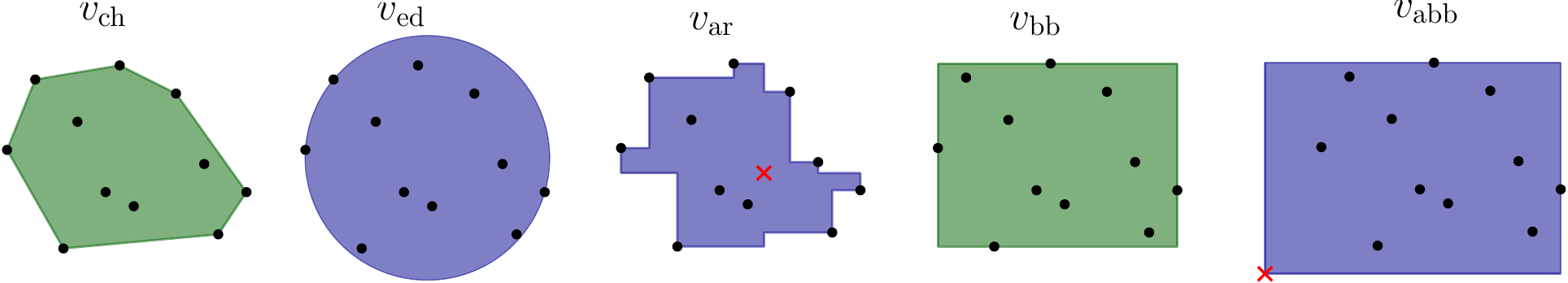}
	\caption{Different costs associated to a point set that are considered in this paper.
		The cross represents the origin. In all cases we focus on the area.}
	\label{fig:intro}
\end{figure}

\paragraph{Overview of our contribution.}
We show that the Shapley values for the
\AreaConvexHull and \AreaEnclosingDisk games 
can be computed in $O(n^2)$ and $O(n^3)$ time, respectively. 
These problems resemble the models recently considered in stochastic 
computational geometry; 
see for example~\cite{AgarwalHSYZ17,AgarwalKSS18,FinkHKS17,KamousiCS11,KamousiCS14,Perez-Lantero16,XueLJ18}. 
However, there are some key differences in the models.
In the most basic model in stochastic computational geometry, sometimes called
\emph{unipoint model},
we have a point set and, for each point,
a known probability of being actually present, independently for each point. 
Then we want to analyze a certain
functional, like the expected area of the minimum enclosing disk, the expected
area of the smallest bounding box, or the probability that some point is contained
in the convex hull. 

In our scenario, we have to consider random insertion orders of the points
and analyze the expected increase in the value of the characteristic function after the insertion 
of a fixed point $p$.
Thus, we have to consider subsets of points constructed according to a different random 
process. In particular, whether other points precede $p$ or not in the random order
are not independent events.
In our analysis, we condition on properties of the shape before adding the new point $p$.
In the case of the minimum enclosing disk we use that each minimum enclosing disk 
is determined by at most $3$ points,
while in the case of the convex hull we condition separately on each single edge of the convex hull 
before the insertion of $p$.
A straightforward application of this principle gives polynomial-time algorithms.
To improve the running time we carry out this idea but split it into two computations:
the expected value after the insertion of $p$, if the insertion of $p$ changed the shape;
the expected value before the insertion of $p$, if the insertion of $p$ changed the shape.
Finally, we use arrangements of lines and planes to speed up the computation by an additional
linear factor. 

For the 
\AreaAnchoredRectangles,
\AreaBoundingBox, and
\AreaAnchoredBoundingBox games
we show that Shapley values can be computed in $O(n^{3/2}\log n)$ time.
In the special case where the points form a \emph{chain} (increasing or decreasing $y$-coordinate
for increasing $x$-coordinate), the Shapley values of those games can
be computed in $O(n\log^2 n)$ time.
We refer to these games as \emph{axis-parallel} games.

It is relative easy to compute the Shapley values for these axis-parallel games 
in quadratic time using arrangements of rectangles and the linearity of Shapley values.
We will discuss this as an intermediary step towards our solution.
However, it is not obvious how to get subquadratic time.
Besides using the linearity of Shapley values, a key ingredient in our algorithms 
is using convolutions to evaluate at multiple points some special rational functions 
that keep track of the ratio of permutations with a certain property. 
The use of computer algebra in computational geometry
is very recent, and there are very few results~\cite{AjwaniRST07,AronovK18,MorozA16} 
using such techniques in geometric problems. (In contrast, there is a quickly growing
body of works in \emph{discrete} geometry using real algebra.)

At the basement of our algorithms, we need to count the number of permutations with 
certain properties. For this we use some simple combinatorial counting. 
In summary, our results combine fundamental concepts from several different areas
and, motivated by classical concepts of game theory,
we introduce new problems related to stochastic computational geometry
and provide efficient algorithms for them.

\paragraph{Related work.}
The chapter by Deng and Fang~\cite{Deng2008} gives a summary of computational 
aspects of coalitional games. The book by Nisan et al.~\cite{NisanRTV07} provides
a general overview of the interactions between game theory and algorithms.

Puerto, Tamir and Perea~\cite{PuertoTP11} study the the Minimum Radius Location Game
in general metric spaces. When specialized to the Euclidean plane, this is equivalent
to the \PerimeterEnclosingDisk. The paper also considers the $L_1$-metric,
which is in some way related to the axis-parallel problems we consider here.
However, the focus of their research is on understanding the core of the game,
and do not discuss the computation of Shapley values.
In particular, they show that the Minimum Radius Location Game in the Euclidean plane 
has nonempty core.
Puerto, Tamir and Perea~\cite{PuertoTP12} also discuss the Minimum Diameter Location Game,
which can be defined for arbitrary metric spaces, but then focus their discussion on graphs. 

Faigle et al.~\cite{Faigle1998} consider the TSP coalitional game in general metric
spaces and specialize some results to the Euclidean plane. They also analyze an approximate
allocation of the costs. 

The computation of Shapley values have been considered for several games on graphs.
The aforementioned \Airport problem can be considered a shortest spanning-path game
in a (graph-theoretic) path.
Megiddo~\cite{Megiddo78} extended this to trees, while
Deng and Papadimitriou~\cite{DengP94} discuss a game on arbitrary graphs defined
by induced subgraphs.
There is a very large body of follow up works for graphs, but we could not trace
other works considering the computation of Shapley values for games defined
through planar objects, despite being very natural.

\paragraph{Assumptions.}
We will assume general position in the following sense: no two points
have the same $x$ or $y$ coordinate, no three points are collinear, 
and no four points are cocircular.
In particular, the points are all different.
The actual assumptions depend on the game under consideration.
It is simple to consider the general case, but it makes the notation more tedious.

We assume a unit-cost real-RAM model of computation. In a model of computation that accounts for bit complexity, time bounds may increase by polynomial factors (even if the input numbers are integers, the outputs may be rationals with large numerators and denominators).

\paragraph{Organization.}
We start with preliminaries in Section~\ref{sec:preliminaries}, where
we set the notation, define Shapley values, explain some basic consequences of
the \Airport game, discuss our needs of algebraic computations, and 
count permutations with some properties.
The section is long because of our use of tools from different areas.

Then we analyze different games.
The \AreaConvexHull and \AreaEnclosingDisk games are considered in 
Section~\ref{sec:CH} and Section~\ref{sec:enclosingdisk}, respectively.
In Section~\ref{sec:anchoredrectangles} we discuss the \AreaAnchoredRectangles game,
while in Section~\ref{sec:boundingbox} we discuss the \AreaBoundingBox
and \AreaAnchoredBoundingBox games.
To understand Section~\ref{sec:boundingbox} one has to go through 
Section~\ref{sec:anchoredrectangles} before. The remaining sections (and games)
have no dependencies between them and can be read in any order.
We conclude in Section~\ref{sec:conclusions} with some discussion.

\section{Preliminaries}
\label{sec:preliminaries}
Here we provide notation and background used through the paper.

\paragraph{Geometry.}
In most cases we will consider points in the Euclidean plane $\RR^2$.
The origin is denoted by $o$.
For a point $p\in \RR^2$, let $R_p$ be the axis-parallel rectangle with one corner at the
origin $o$ and the opposite corner at $p$.
As already mentioned in the introduction, for each $Q\subset \RR^2$
we use $\bb(Q)$ for the (minimum) axis-parallel bounding box of $Q$,
$\med(Q)$ for a smallest (actually, \emph{the} smallest) disk that contains $Q$,
and $CH(Q)$ for the convex hull of $Q$.

We try to use the word \emph{rectangle} when one corner is defined by an input point,
while the word \emph{box} is used for more general cases. 
All rectangles and boxes in this paper are axis-parallel, and we drop the adjective
\emph{axis-parallel} when referring to them.
An \emph{anchored} box is a box with one corner at the origin.

For a point $p\in \RR^2$ we denote by $x(p)$ and $y(p)$ its $x$- and $y$-coordinate, respectively.
We use $x(Q)$ for $\{ x(q)\mid q\in Q\}$, and similarly $y(Q)$.
A set $P$ of points is a \DEF{decreasing chain}, if $x(p)<x(q)$
implies $y(p)>y(q)$ for all $p,q\in P$.
A set $P$ of points is an \DEF{increasing chain} if $x(p)<x(q)$
implies $y(p)<y(q)$ for all $p,q\in P$.

\subsection{Shapley values.}
We provide a condensed overview suitable for understanding our work.
We refer to some textbooks in Game Theory (\cite[Chapter IV.3]{Ferguson},
\cite[Section~9.4]{Myerson}, \cite[Section 14.4]{OsborneRubinstein})
for a comprehensive treatment that includes a discussion of
the properties of Shapley values.

Consider a coalitional game $(P,v)$.
For us, a player is usually denoted by $p\in P$. 
We denote by $n$ the number of players and by $[n]$ the 
set of integers $\{ 1,\dots, n\}$.

A permutation of $P$ is a bijective map $\pi\colon P\rightarrow [n]$.
Let $\Pi(P)$ be the set of permutations of $P$.
Each permutation $\pi\in \Pi(P)$ defines an ordering in $P$,
where $\pi(p)$ is the position of $p$ in that order.
We will heavily use this interpretation of permutations as defining an order in $P$.
For each element $p\in P$ and each permutation $\pi\in \Pi(P)$, 
let $P(\pi,p)$ be the elements of $P$ before $p$ in the order defined by $\pi$.
Thus
\[
	P(\pi,p)~=~ \{ q\in P\mid \pi(q)\le \pi (p)\}.
\]
We can visualize $P(\pi,p)$ as adding the elements of $P$ one by one,
following the order defined by $\pi$, until we insert $p$.
The increment in $v(\cdot)$ when adding player $p$ is
\[
	\Delta(v,\pi,p)~=~ v\bigl( P(\pi,p) \bigr) - v\bigl(P(\pi,p)\setminus \{ p \}\bigr).
\]
The \DEF{Shapley value} of player $p\in P$ in the game $(P,v)$ is 
\[
	\phi(p,v)~=~ \frac{1}{n!}~ \sum_{\pi\in \Pi(P)} \Delta(v,\pi,p). 
\]
It is straightforward to note that
\[
	\phi(p,v)~=~ \EE_\pi \Bigl[ \Delta(v,\pi,p) \Bigr],
\]
where $\pi$ is picked uniformly at random from $\Pi(P)$.
This means that Shapley values describe the expected increase in $v(\cdot)$
when we insert the player $p$ in a random permutation.
 
Since several permutations $\pi$ define the same subset $P(\pi,p)$,
the Shapley value of $p$ is often rewritten as
\[
	\phi(p,v)~=~ \sum_{S\subset P\setminus\{ p \} } 
		\frac{|S|! (n-|S|-1)!}{n!} \Bigl( v(S\cup \{p\})- v(S) \Bigr).
\]
We will use a similar principle in our algorithms: 
find groups of permutations that contribute the same to the sums.
For non-axis-parallel games we will also use other equivalent expressions of
the Shapley value that follow from rewriting the sum differently.

It is easy to see that Shapley values are linear in the characteristic functions.
Indeed, since for any two characteristic functions 
$v_1,v_2\colon 2^P \rightarrow \RR$ and for each $\lambda_1,\lambda_2\in \RR$ we have
\begin{align*}
	\Delta(\lambda_1 v_1+\lambda_2 v_2,\pi,p) ~=~
	\lambda_1\cdot \Delta(v_1,\pi,p)+\lambda_2\cdot \Delta(v_2,\pi,p)
\end{align*}
we obtain
\begin{align*}
	&\phi(p,\lambda_1v_1+\lambda_2v_2) ~=~ \lambda_1\cdot\phi(p,v_1)+ \lambda_2\cdot\phi(p,v_2).
\end{align*}
In fact, (a weakened version of) linearity is one of the properties  
considered when defining Shapley values axiomatically.

It is easy to obtain the Shapley values when the characteristic function $v$ is constant
over all nonempty subsets of $P$. In this case $\phi(p,v)=v(P)/n$.

We say that two games $(P,v)$ and $(P',v')$, where $P$ and $P'$ are 
point sets in Euclidean space, are \DEF{isometrically equivalent} 
if there is some isometry that transforms one into the other.
That is, there is some isometry
$\rho\colon \RR^2 \rightarrow \RR^2$ such that $P'=\rho(P)$ and $v'=v\circ \rho$.
Finding Shapley values for $(P,v)$ or $(P',v')$ is equivalent
because $\phi(\rho(p),v')=\phi(p,v)$ for each $p\in P$.

\subsection{Airport and related games.}

We will consider the following games, some of them $1$-dimensional games.

\begin{description}
	\item[\Airport game:]
		The set of players $P$ is on the positive side of the real line and
		the characteristic function is 
		$\V{air}(Q)=\max(Q)$ for each nonempty $Q\subset P$.
	\item[\LengthInterval game:]
		The set of players $P$ is on the real line and
		the characteristic function is 
		$\V{li}(Q)=\max(Q) - \min(Q)$ for each nonempty $Q\subset P$.
	\item[\AreaBand game:] 
		The set of players $P$ is on the plane 
		and the characteristic function is 
		$\V{ab}(Q)=\left(\max(y(P))-\min(y(P)\right)\cdot \V{li}(x(Q)$ 
		for each nonempty $Q\subset P$.
	\item[\PerimeterBoundingBox game:] 
		The set of players $P$ is on the plane 
		and the characteristic function is 
		$\V{pbb}(Q)=\per(\bb(Q))$ for each nonempty $Q\subset P$.
	\item[\PerimeterAnchoredBoundingBox game:] 
		The set of players $P$ is on the plane 
		and the characteristic function is 
		$\V{pabb}(Q)=\per(\bb(Q\cup\{ o\}))$ for each nonempty $Q\subset P$.
\end{description}

\begin{lemma}
\label{le:basicgames}
	The \Airport, \LengthInterval, \AreaBand games, \PerimeterBoundingBox and
	\PerimeterAnchoredBoundingBox games can be solved in $O(n\log n)$ time.
\end{lemma}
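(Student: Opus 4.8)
The plan is to handle each game by reducing it, via linearity of Shapley values and elementary coordinate transformations, to the one‑dimensional \Airport game, and then to solve that base case directly by grouping permutations. For the \Airport game itself, sort $P$ so that $p_1 < p_2 < \dots < p_n$ on the line. The key observation is that $\Delta(\V{air},\pi,p_i)$ equals the increment $p_i - \max\{p_j : j<i,\ \pi(p_j)<\pi(p_i)\}$ when $p_i$ is larger than every earlier‑inserted point, and $0$ otherwise; more usefully, one can write $\V{air} = \sum_{i} (p_i - p_{i-1})\cdot \mathds 1[\max(Q)\ge p_i]$ (with $p_0 = 0$), i.e.\ a nonnegative combination of the ``threshold'' characteristic functions $w_i(Q) = \mathds 1[\{p_i,\dots,p_n\}\cap Q\neq\emptyset]$. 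By linearity, $\phi(p,\V{air}) = \sum_i (p_i - p_{i-1})\,\phi(p,w_i)$. Each $w_i$ is a simple unanimity‑type game on the $(n-i+1)$‑element set $\{p_i,\dots,p_n\}$: a point $p_j$ with $j\ge i$ gets a nonzero increment under $\pi$ exactly when it is the first among $\{p_i,\dots,p_n\}$ in $\pi$, which happens with probability $1/(n-i+1)$; points $p_j$ with $j<i$ always get increment $0$. Hence $\phi(p_j,w_i) = \frac{1}{\,n-i+1\,}$ if $j\ge i$ and $0$ otherwise. Summing, $\phi(p_j,\V{air}) = \sum_{i\le j} \frac{p_i - p_{i-1}}{\,n-i+1\,}$, a prefix sum computable for all $j$ in $O(n)$ time after the initial $O(n\log n)$ sort.

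For \LengthInterval, write $\V{li}(Q) = \max(Q) - \min(Q) = \max(Q) + \max(-Q)$ where $-Q = \{-q : q\in Q\}$. The first term is an \Airport game on $P$ (after translating so all coordinates are positive, which only shifts $v$ by a constant on nonempty sets and therefore does not change Shapley values — or one simply repeats the threshold decomposition, which does not actually need positivity), and the second is an \Airport game on the reflected point set $-P$; by the isometric‑equivalence remark and linearity, $\phi(p,\V{li})$ is the sum of the two \Airport Shapley values, each computed in $O(n\log n)$. The \AreaBand game is just $\V{li}$ on the $x$‑coordinates scaled by the fixed constant $\max(y(P)) - \min(y(P))$, so by linearity (homogeneity) its Shapley values are that constant times the \LengthInterval Shapley values of $x(P)$.

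For \PerimeterBoundingBox, note $\per(\bb(Q)) = 2\bigl(\max x(Q) - \min x(Q)\bigr) + 2\bigl(\max y(Q) - \min y(Q)\bigr) = 2\,\V{li}(x(Q)) + 2\,\V{li}(y(Q))$, a linear combination of a \LengthInterval game on the $x$‑coordinates and one on the $y$‑coordinates, so again we sum two $O(n\log n)$ computations. For \PerimeterAnchoredBoundingBox, $\per(\bb(Q\cup\{o\}))$ decomposes coordinatewise into $2\bigl(\max(x(Q)\cup\{0\}) - \min(x(Q)\cup\{0\})\bigr) + 2\bigl(\max(y(Q)\cup\{0\}) - \min(y(Q)\cup\{0\})\bigr)$; each coordinate term splits into a ``positive part'' $\max(x(Q)\cup\{0\})$, which is an \Airport game on the positive $x$‑coordinates with the nonpositive ones contributing $0$, and a ``negative part'' $-\min(x(Q)\cup\{0\}) = \max((-x(Q))\cup\{0\})$, an \Airport game on the reflected negative coordinates — so four \Airport computations, each $O(n\log n)$, finish it. I do not expect a genuine obstacle here; the only mild subtlety is bookkeeping the anchoring at the origin correctly (points on the ``wrong side'' of the origin in a given coordinate must be treated as contributing nothing to that one‑sided game, since the origin is always present in $\bb(Q\cup\{o\})$), and checking that translations/reflections of the ground set leave Shapley values unchanged, which is exactly the isometric‑equivalence observation already recorded.
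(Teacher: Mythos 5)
Your proposal is correct and follows the same overall strategy as the paper: reduce every game, via linearity of Shapley values, to one-dimensional \Airport-type computations, and solve those in $O(n\log n)$ after sorting. Two differences are worth noting. Where the paper simply cites the Littlechild--Owen formula for the \Airport game, you rederive it by decomposing $\max(Q)$ into threshold games $w_i$ (value $1$ iff $Q$ meets $\{p_i,\dots,p_n\}$); this is a clean, self-contained argument and, as you observe, it works for coordinates of arbitrary sign, which lets you treat \LengthInterval as $\max(Q)+\max(-Q)$ with only two summands. The paper instead uses the three-game decomposition $v_1(Q)=\max(Q)-\alpha$, $v_2(Q)=\beta-\min(Q)$, $v_3(Q)=\alpha-\beta$ with $\alpha=\min(P)$, $\beta=\max(P)$, where the constant game $v_3$ is exactly the correction needed after translating. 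This brings up the one genuine error in your write-up: the parenthetical claim that translating the points ``only shifts $v$ by a constant on nonempty sets and therefore does not change Shapley values'' is false --- adding a constant $c$ to $v(Q)$ for every nonempty $Q$ adds $c/n$ to every Shapley value (the first player in a random permutation absorbs the constant). Had you relied on that route for \LengthInterval, every value would be off by $(\beta-\alpha)/n$; this is precisely why the paper carries the constant game $v_3$. Since your stated fallback (the sign-robust threshold decomposition) is correct, the proof stands, but the translation remark should be deleted or replaced by the constant-game correction. The treatments of \AreaBand, \PerimeterBoundingBox and \PerimeterAnchoredBoundingBox (with points on the wrong side of the origin acting as null players in the one-sided anchored games) match the paper and are fine.
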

\begin{proof}
	The \Airport game was solved by Littlechild and Owen~\cite{Littlechild}, as follows.
	If $P$ is given by $0 < x_1<\dots< x_n$, then  
	\[
		\phi(x_i,\V{air})= 
		\begin{cases}
			\tfrac{x_1}{n} & \text{if $i=1$,}\\
			\phi(x_{i-1},\V{air})+\tfrac{x_i-x_{i-1}}{n-i+1}& \text{for $i=2,\dots, n$.}
		\end{cases}
	\]
	This shows the result for the \Airport game because it requires sorting.
	
	The \LengthInterval game can be reduced to the \Airport problem using simple inclusion-exclusion on the line.
	Consider the values $\alpha=\min(P)$ and $\beta=\max(P)$
	and define the characteristic functions		
	\begin{align*}
		v_1(Q) ~&=~ \max(Q)-\alpha \text{ for } Q\neq \emptyset;\\
		v_2(Q) ~&=~ \beta - \min(Q) \text{ for } Q\neq \emptyset;\\
		v_3(Q) ~&=~ \alpha-\beta 	 \text{ for } Q\neq \emptyset.
	\end{align*}
	They define games that are isometrically equivalent to \Airport games or a constant game,
	and we have
	$\V{li}(Q)=v_1(Q)+v_2(Q)+v_3(Q)$ for all nonempty $Q\subset P$. 
	The result follows from the linearity of the Shapley values.
	
	For the \AreaBand game we just notice that 
	$\V{ab}(Q)=\V{li}(Q)\cdot (\max(y(P))-\min(y(P)))$
	and use again the linearity of Shapley values. 
	(The value $\max(y(P))-\min(y(P))$ is constant.)
	For the \PerimeterBoundingBox game we note that
	\[
		\V{pbb}(Q)= 2\cdot (\max(x(P))-\min(x(P)) )+ 2\cdot (\max(y(P))-\min(y(P)) ),
	\]
	which means that we need to compute the Shapley values for two \LengthInterval games.
	For the \PerimeterAnchoredBoundingBox we use that
	\[
		\V{pabb}(Q)= 2\cdot (\max(x(P),0)-\min(x(P),0) )+ 2\cdot (\max(y(P),0)-\min(y(P),0) ),
	\]
	which is a linear combination of a few \Airport games.
\end{proof}
 
The perimeter for the anchored bounding box is the same as for the
union of anchored rectangles. So this solves the perimeter games for all
axis-parallel objects considered in the paper.

\noindent\textbf{Remark.} 
We can now point out the convenience of using the real-RAM model of computation.
In the \Airport game with $x_i=i$ for each $i\in [n]$, 
we have $\phi(x_n,\V{air})=\sum_{i\in [n]} \tfrac{1}{i}$, the $n$-th harmonic number.
Thus, even for simple games it would become a challenge to carry precise bounds on
the number of bits.

\subsection{Algebraic computations.}
For the axis-parallel problems we will be using the fast Fourier transform to compute convolutions.
Assume we are given values $a_0,\dots,a_n$ and $b_0,\dots,b_n$, and define $a_i$ and $b_i$
equal zero for all other indices.
Using the fast Fourier transform we can compute the convolutions $c_k = \sum_{i+j=k} a_ib_j$
for all integer $k$ in $O(n \log n)$ operations. (The value is nonzero for at most $2n+1$ indices.)
Our use of this is encoded in the following lemma, which provides multipoint evaluation 
for a special type of rational functions. Note that using the more generic
approach of Aronov, Katz and Moroz~\cite{AronovK18,MorozA16} for multipoint rational functions 
gives a slightly worse running time.

\begin{lemma}
\label{le:multipoint_evaluation}
	Let $b_0,\dots, b_n, \Delta$ be real numbers
	and consider the rational function
	\[
		R(x) ~=~ \sum_{t=0}^n \frac{b_t}{\Delta+t+x}.
	\]
	Given an integer $\ell> -\Delta$, possibly negative, and a positive integer $m$,
	we can evaluate $R(x)$ at all the integer values $x= \ell, \ell+1, \dots, \ell+m$
	in $O((n + m) \log(n+m))$ time.
\end{lemma}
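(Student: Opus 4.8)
The plan is to reduce the multipoint evaluation to a single convolution of two sequences, each of length $O(n+m)$, and then invoke the fast Fourier transform. Writing the target values as $R(\ell+k)$ for $k=0,1,\dots,m$, we have
\[
	R(\ell+k) ~=~ \sum_{t=0}^n \frac{b_t}{\Delta+\ell+t+k}.
\]
First I would record that every denominator appearing here is well defined and in fact positive: since $\ell>-\Delta$ we have $\Delta+\ell>0$, hence $\Delta+\ell+t+k>0$ for all integers $t,k\ge 0$. Thus $g_j := 1/(\Delta+\ell+j)$ is defined for every integer $j\ge 0$, and in particular for $j=0,1,\dots,n+m$, which is the only range we shall need.

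Next I would turn the sum into a convolution by reversing the coefficient sequence. Define $a_i := b_{n-i}$ for $i=0,\dots,n$ and $a_i:=0$ otherwise, and define $g_j$ as above for $j=0,\dots,n+m$ and $g_j:=0$ otherwise. Substituting $t=n-i$ in the displayed sum gives, for each $k\in\{0,\dots,m\}$,
\[
	R(\ell+k) ~=~ \sum_{i=0}^{n} b_{n-i}\,\frac{1}{\Delta+\ell+(n-i)+k}
	~=~ \sum_{i=0}^{n} a_i\, g_{\,n+k-i}
	~=~ \sum_{i+j=n+k} a_i\, g_j ,
\]
where in the last step we used that $a_i=0$ for $i\notin\{0,\dots,n\}$ and that $n+k-i$ always lies in $\{0,\dots,n+m\}$ whenever $0\le i\le n$ and $0\le k\le m$, so that no out-of-range term is silently dropped. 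Hence $R(\ell+k)$ equals the $(n+k)$-th coefficient of the convolution $c_s=\sum_{i+j=s}a_ig_j$ of the two finitely-supported sequences $(a_i)$ and $(g_j)$.

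Finally I would compute the coefficients $c_s$ for $s=n,n+1,\dots,n+m$ (out of the full convolution, whose support has size $O(n+m)$) using the fast Fourier transform; by the standard bound recalled just before the lemma this costs $O((n+m)\log(n+m))$ real-arithmetic operations in the real-RAM model, since both input sequences are supported on $O(n+m)$ indices. Reading off $R(\ell+k)=c_{n+k}$ for $k=0,\dots,m$ then produces all requested values within the same time bound. The only points that require care are the index bookkeeping in the reversal step (checking that the substitution $t\mapsto n-i$ lines the sum up exactly with a convolution at offset $n+k$) and the observation that the hypothesis $\ell>-\Delta$ is precisely what guarantees that every $g_j$ used is finite; beyond that the argument is routine.
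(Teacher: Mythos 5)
Your proposal is correct and follows essentially the same route as the paper's proof: both reduce the evaluation to a single convolution of two finitely supported sequences (one built from the $b_t$, one from the reciprocals $1/(\Delta+\ell+j)$) and invoke the fast Fourier transform, the only cosmetic difference being that you reverse the coefficient sequence while the paper reverses the reciprocal sequence. If anything, your index bookkeeping is the more careful of the two, since you support the reciprocal sequence on all of $\{0,\dots,n+m\}$ so that no term of $R(\ell+k)$ falls outside the declared support, a point the paper's write-up glosses over by defining its auxiliary sequence only on $\{0,\dots,m\}$.
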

\begin{proof}
	Set $a_i= 1/(\Delta+\ell+m-i)$ for all $i\in \{ 0,\dots, m\}$.
	Note that the assumption $\Delta+\ell> 0$ implies that the denominator is always positive.
	All the other values of $a_*$ and $b_*$ are set to $0$.
	Define the convolutions $c_k = \sum_{i} a_ib_{k-j}$ for all $k\in \{0,\dots,m \}$
	and compute them using the fast Fourier transform in $O((n+m)\log (n+m))$ time.
	Then, for each integer $x$ we have
	\[
		R(x) ~=~ \sum_{t=0}^n \frac{b_t}{\Delta+t+x} 
			 ~=~ \sum_{t=0}^n b_t a_{m+\ell-t-x}
			 ~=~ c_{m+\ell-x}.
	\]
	Thus, from computing the convolution of $a_*$ and $b_*$, we 
	get the values $R(\ell),\dots, R(\ell+m)$
\end{proof}

\subsection{Permutations.}
We will have to count permutations with certain properties.
The following lemmas encode this.

\begin{lemma}
\label{le:permutations2}
	Let $N$ be a set with $n$ elements. Let $\{ x \}$, $A$, and $B$
	be disjoint subsets of $N$ and set $\alpha=|A|$, $\beta=|B|$.
	There are $n!\cdot \frac{\alpha! \beta!}{(\alpha+\beta+1)!}$ permutations of $N$ such that  
	all the elements of $B$ are before $x$ and all the elements of $A$ are after $x$.
\end{lemma}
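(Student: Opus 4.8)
The plan is to reduce the count to a statement about the relative order of the $\alpha+\beta+1$ distinguished elements in $D:=A\cup B\cup\{x\}$, ignoring entirely where the remaining $n-\alpha-\beta-1$ elements of $N$ end up. First I would observe that the defining property --- every element of $B$ precedes $x$ and every element of $A$ follows $x$ --- depends only on the linear order that a permutation $\pi\in\Pi(N)$ induces on $D$. So I would partition $\Pi(N)$ according to this induced order. The key computation is that each of the $(\alpha+\beta+1)!$ possible linear orders of $D$ is induced by exactly $n!/(\alpha+\beta+1)!$ permutations of $N$: to build such a permutation, pick which $\alpha+\beta+1$ of the $n$ positions the elements of $D$ occupy (in $\binom{n}{\alpha+\beta+1}$ ways), which then forces their placement, and fill the remaining $n-\alpha-\beta-1$ positions with the other elements arbitrarily (in $(n-\alpha-\beta-1)!$ ways); the product is $n!/(\alpha+\beta+1)!$, independent of the chosen order on $D$.

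Next I would count the favorable induced orders on $D$. A linear order on $D$ has all of $B$ before $x$ and all of $A$ after $x$ precisely when $B$ occupies the first $\beta$ positions, $x$ occupies position $\beta+1$, and $A$ occupies the last $\alpha$ positions. The elements of $B$ can be permuted among themselves in $\beta!$ ways and those of $A$ in $\alpha!$ ways, so there are exactly $\alpha!\,\beta!$ favorable orders. Multiplying by the $n!/(\alpha+\beta+1)!$ permutations of $N$ realizing each one gives $n!\cdot\frac{\alpha!\,\beta!}{(\alpha+\beta+1)!}$, as claimed.

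I do not foresee a real obstacle; the only point requiring care is the uniformity claim --- that every induced order on $D$ is realized by the same number of permutations of $N$ --- which is exactly where a careless argument could go wrong. An alternative and arguably cleaner phrasing, matching the expected-value viewpoint used elsewhere in the paper, is probabilistic: if $\pi\in\Pi(N)$ is chosen uniformly at random, then by symmetry the order it induces on $D$ is uniform over its $(\alpha+\beta+1)!$ possibilities, so the event ``$B$ before $x$ before $A$'' has probability $\frac{\alpha!\,\beta!}{(\alpha+\beta+1)!}$, and the desired count is $n!$ times this probability. I would likely present this probabilistic version for brevity.
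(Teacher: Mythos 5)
Your proof is correct, and it reaches the count by a slightly different route than the paper. The paper's proof is a direct constructive count: it permutes the $\alpha$ elements of $A$ and places them after $x$, permutes the $\beta$ elements of $B$ and places them before $x$, and then inserts the remaining $n-\alpha-\beta-1$ elements one by one, observing that each favorable permutation arises exactly once; this yields $\alpha!\,\beta!\,(\alpha+\beta+2)\cdots n = n!\,\alpha!\,\beta!/(\alpha+\beta+1)!$. You instead partition $\Pi(N)$ by the relative order induced on $D=A\cup B\cup\{x\}$, verify that each of the $(\alpha+\beta+1)!$ induced orders is realized by exactly $n!/(\alpha+\beta+1)!$ permutations (choose the positions of $D$, fill the rest), and count the $\alpha!\,\beta!$ favorable orders; your probabilistic rephrasing is the same argument stated as ``the induced order on $D$ is uniform by symmetry.'' Both arguments are sound and elementary. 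Your version has the advantage that it immediately yields the probability $\frac{\alpha!\,\beta!}{(\alpha+\beta+1)!}$, which is the form in which the lemma is actually invoked later (for the convex hull and enclosing disk games, where one conditions on a random permutation); the paper's sequential-insertion style has the advantage of carrying over verbatim to the companion Lemma~\ref{le:permutations1}, where the events are more intricate (inclusion--exclusion over families $\Pi_X$) and a clean ``uniform induced order'' statement is less directly available. You correctly identified the one point needing care, the uniformity of the class sizes, and your position-choosing argument settles it.
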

\begin{proof}
	Permute the $\alpha$ elements of $A$ arbitrarily and put them after $x$.
	Similarly, permute the $\beta$ elements of $B$ arbitrarily and put them before $x$.
	Finally, insert the remaining $|N\setminus (A\cup B\cup \{ x\})|=n-\alpha-\beta-1$
	elements arbitrarily, one by one.
	Each permutation that we want to count is constructed exactly once
	with the procedure, and thus there are
	\[
		\alpha!\cdot \beta!\cdot \Bigl( (\alpha+\beta+2)\cdots n \Bigr) ~=~ 
		\frac{n! \alpha! \beta!}{(\alpha+\beta+1)!}. 
	\qedhere
	\]
\end{proof}

\begin{lemma}
\label{le:permutations1}
	Let $N$ be a set with $n$ elements.
	Let $A$, $B$ and $C$ be pairwise disjoint subsets of $N$ with 
	$\alpha$, $\beta$ and $\gamma$ elements, respectively.
	\begin{itemize}
		\item Consider a fixed element $a\in A$. There are 
			$n! \cdot \left(\frac{1}{\alpha+\beta}+ \frac{1}{\alpha+\gamma}-
				\frac{1}{\alpha+\beta+\gamma}\right)$
			permutations of $N$ such that:
			\textup{(i)} $a$ is the first element of $A$ and 
			\textup{(ii)} $a$ is before all elements of $B$ or before all elements of $C$.
			(Thus, the whole $B$ or the whole $C$ is behind $a$.)
		\item Consider a fixed element $b\in B$. There are 
			$n! \cdot \left( \frac{1}{\alpha+\beta}- \frac{1}{\alpha+\beta+\gamma}\right) $
			permutations of $N$ such that: 
			\textup{(i)} $b$ is the first element of $B$, 
			\textup{(ii)} $b$ is before all elements of $A$, and
			\textup{(iii)} at least one element of $C$ is before $b$.
	\end{itemize}
\end{lemma}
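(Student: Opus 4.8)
The plan is to reduce both statements to a single elementary counting fact: for a fixed $x\in N$ and a subset $S\subseteq N\setminus\{x\}$ with $|S|=s$, the number of permutations of $N$ in which $x$ precedes every element of $S$ equals $n!/(s+1)$. This is just the special case of Lemma~\ref{le:permutations2} in which the set ``$B$'' there is empty and ``$A$'' is taken to be $S$, giving $n!\cdot s!/(s+1)! = n!/(s+1)$. (Equivalently: among the $s+1$ relative orders of $\{x\}\cup S$, each is equally likely and $x$ is first in a $\tfrac{1}{s+1}$ fraction of them.) Everything else is bookkeeping over which disjoint sets get grouped together, plus inclusion--exclusion.

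For the first bullet, note that conditions (i) and (ii) together say: $a$ precedes every element of $A\setminus\{a\}$, and in addition $a$ precedes every element of $B$ or $a$ precedes every element of $C$. So if $E_B$ denotes the set of permutations in which $a$ precedes every element of $(A\setminus\{a\})\cup B$, and $E_C$ denotes the analogous set with $C$ in place of $B$, then the permutations to be counted are exactly $E_B\cup E_C$. The counting fact gives $|E_B| = n!/(\alpha+\beta)$ and $|E_C| = n!/(\alpha+\gamma)$, since $(A\setminus\{a\})\cup B$ has $\alpha-1+\beta$ elements and is disjoint from $\{a\}$, and similarly for $C$. Moreover $E_B\cap E_C$ is the set where $a$ precedes everything in $(A\setminus\{a\})\cup B\cup C$, so $|E_B\cap E_C| = n!/(\alpha+\beta+\gamma)$. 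Inclusion--exclusion then yields $n!\left(\frac{1}{\alpha+\beta}+\frac{1}{\alpha+\gamma}-\frac{1}{\alpha+\beta+\gamma}\right)$.

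For the second bullet, conditions (i) and (ii) say precisely that $b$ precedes every element of $(B\setminus\{b\})\cup A$; call this set of permutations $F$, so $|F| = n!/(\alpha+\beta)$ by the counting fact. Within $F$, condition (iii) fails exactly when $b$ also precedes every element of $C$, i.e.\ when $b$ precedes everything in $(B\setminus\{b\})\cup A\cup C$; call this subset $G\subseteq F$, so $|G| = n!/(\alpha+\beta+\gamma)$. The permutations we want form $F\setminus G$, and since $G\subseteq F$ there are $|F|-|G| = n!\left(\frac{1}{\alpha+\beta}-\frac{1}{\alpha+\beta+\gamma}\right)$ of them, as claimed.

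I do not expect a genuine obstacle here. The only points requiring care are (a) checking in each case that the set over which we ask the anchor element to come first is truly disjoint from that anchor, so that Lemma~\ref{le:permutations2} (with empty ``$B$'') applies verbatim, and (b) verifying that in the second bullet the ``bad'' event $G$ is a subset of $F$, so that the count is a plain difference rather than needing another inclusion--exclusion step.
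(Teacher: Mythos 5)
Your proof is correct. For the first bullet it is essentially the paper's own argument: the same two events (there called $\Pi_B$ and $\Pi_C$), the same count $n!/(\alpha+|X|)$ for ``$a$ precedes all of $(A\setminus\{a\})\cup X$'', and the same inclusion--exclusion step. For the second bullet you take a genuinely different (and cleaner) route. The paper counts the desired permutations directly by a constructive procedure: it picks the element $c\in C$ that will be the first element of $C$, builds the permutation around the pair $(c,b)$, argues each target permutation is produced exactly once, arrives at $n!\cdot\frac{\gamma}{(\alpha+\beta)(\alpha+\beta+\gamma)}$, and then verifies algebraically that this equals $\frac{1}{\alpha+\beta}-\frac{1}{\alpha+\beta+\gamma}$. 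You instead observe that conditions (i)--(ii) define an event $F$ with $|F|=n!/(\alpha+\beta)$, that the failure of (iii) within $F$ is the nested event $G$ with $|G|=n!/(\alpha+\beta+\gamma)$, and subtract. This avoids both the ``constructed exactly once'' bookkeeping and the final algebraic identity, and it makes the two bullets uniform consequences of the single counting fact $n!/(s+1)$ (itself the $B=\emptyset$ case of Lemma~\ref{le:permutations2}). The only thing the paper's construction buys in exchange is an explicit combinatorial interpretation of the quantity $\frac{\gamma}{(\alpha+\beta)(\alpha+\beta+\gamma)}$, which is not needed elsewhere.
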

\begin{proof}
	We start showing the first item. Fix an element $a\in A$.
	For a set $X$, disjoint from $A$, let $\Pi_X$ be the set of permutations that have $a$ 
	before all elements of $(A\setminus \{ a \})\cup X$.
	We can count these permutations as follows.
	First, permute the elements of $(A\setminus \{ a \})\cup X$ and fix their order. 
	Then insert $a$ at the front.
	Finally, insert the elements of $N\setminus (A\cup X)$ anywhere, one by one. 
	All permutations of $\Pi_X$ are produced with this procedure exactly once.
	Therefore, there are
	\[
		(\alpha+|X|-1)! \cdot \Bigl( (\alpha+|X|+1)\cdot (\alpha+|X|+2)\cdots n \Bigr) 
		~=~ n! \frac{1}{\alpha+|X|}
	\]
	permutations in $\Pi_X$.
	Using inclusion-exclusion we have
	\begin{align*}
		|\Pi_B\cup \Pi_C| ~&=~ |\Pi_B|+|\Pi_C| - |\Pi_B\cap \Pi_C|\\
			&=~ |\Pi_B|+|\Pi_C| - |\Pi_{B\cup C}| \\
			&=~ n! \cdot \left(\frac{1}{\alpha+\beta}+ \frac{1}{\alpha+\gamma}- \frac{1}{\alpha+\beta+\gamma}\right).
	\end{align*}
	This finishes the proof of the first item.
	
	Now we prove the second item. Fix an element $b\in B$. 
	We construct the desired permutations as follows.
	Select one element $c\in C$ and place $c$ before $b$.
	Now insert all the elements of $A\cup B\setminus \{ b \}$ after $b$ in arbitrary order.
	Then, insert all the elements of $C\setminus \{ c\}$ after $c$, one by one.
	Finally, insert the remaining elements in any place, one by one.
	Each of permutations under consideration is created exactly once by this procedure.
	Therefore, there are 
	\[
		\gamma \cdot (\alpha+\beta-1 )! \cdot 
		\Bigl( (\alpha+\beta+1)\cdots (\alpha+\beta+\gamma-1) \Bigr) \cdot 
		\Bigl( (\alpha+\beta+\gamma+1)\cdots n \Bigr)
	\]
	permutations, which is exactly $n!\cdot \frac{\gamma}{(\alpha+\beta)(\alpha+\beta+\gamma)}$.
	Then we use that
	\[
		\frac{1}{\alpha+\beta}- \frac{1}{\alpha+\beta+\gamma} ~=~ 
		\frac{\gamma}{(\alpha+\beta)(\alpha+\beta+\gamma)}. 
		\qedhere
	\]
\end{proof}

\section{Convex hull}
\label{sec:CH}

In this section we consider the area and the perimeter
of the convex hull of the points. We will focus on the area, thus the characteristic function $\V{ch}$, 
and at the end we explain the small changes needed to handle the perimeter.
Consider a fixed set $P$ of points in the plane.
For simplicity we assume that no three points are collinear.

\begin{lemma}
\label{le:convex_hull}
	For each point $p$ of $P$ we can compute $\phi(p,\V{ch})$ in $O(n^2)$ time.
\end{lemma}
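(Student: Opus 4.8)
The plan is to exploit linearity of Shapley values by decomposing the area of the convex hull into contributions that can be tracked combinatorially. First I would fix a point $p\in P$ and study $\Delta(\V{ch},\pi,p)$: this is nonzero exactly when $p$ lies outside $CH(P(\pi,p)\setminus\{p\})$, and in that case it equals the area of the region added to the hull when $p$ is inserted. The standard way to express this added area is to triangulate: if $q,r$ are the two ``tangent'' vertices of $CH(P(\pi,p)\setminus\{p\})$ as seen from $p$, then the increment decomposes as a sum of signed triangle areas over the edges of the old hull that $p$ ``sees''. To make this amenable to summing over all permutations, I would instead condition on a single directed edge: for every ordered pair $(q,r)$ of points of $P$, the edge $qr$ is an edge of $CH(P(\pi,p)\setminus\{p\})$ with $p$ on its outer side precisely when (i) all points of $P$ strictly on the outer side of the line through $q,r$ — call this set $B_{qr}$ — except possibly $p$ come after $p$ in $\pi$, (ii) $q$ and $r$ come before $p$, and (iii) $p$ is on the outer side. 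When this happens, the edge $qr$ contributes the signed area of triangle $pqr$ (equivalently, the trapezoidal contribution of that edge) to $\Delta(\V{ch},\pi,p)$. Summing over the three points determining each hull edge and using that the old hull area is a telescoping/signed sum of such triangle areas gives a clean per-edge accounting.

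Next I would reorganize the Shapley sum as
\[
	\phi(p,\V{ch}) ~=~ \frac{1}{n!}\sum_{\pi}\Delta(\V{ch},\pi,p)
		~=~ \sum_{(q,r)} \operatorname{signedarea}(p,q,r)\cdot \frac{\#\{\pi : \text{conditions (i)--(iii) hold}\}}{n!},
\]
where the outer sum runs over ordered pairs $(q,r)$ with $q,r\in P\setminus\{p\}$ and $p$ on the outer side of the oriented line $\overrightarrow{qr}$. The count of permutations is exactly of the type handled by Lemma~\ref{le:permutations2}: with $N=P$, the distinguished element $x=p$, the set $A=B_{qr}\setminus\{p\}$ of points that must come after $p$, and the set $B=\{q,r\}$ of points that must come before $p$. (One must take care: if $p$ itself lies on the outer side of $\overrightarrow{qr}$ it belongs to what would naively be $B_{qr}$, so I define $A$ to exclude $p$; the condition that $p$ is on the outer side is what makes edge $qr$ visible.) Hence the permutation fraction equals $\frac{2!\,|A|!}{(|A|+3)!}$, depending only on $|A|$, i.e. only on how many input points lie strictly on the outer side of $\overrightarrow{qr}$. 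So each ordered pair $(q,r)$ contributes a term that is computable in $O(1)$ time once we know $|A|$, and the whole Shapley value of $p$ is a sum of $O(n^2)$ such terms.

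The remaining issue is computing, for the fixed $p$ and all $O(n^2)$ ordered pairs $(q,r)$, the cardinalities $|B_{qr}\setminus\{p\}|$ efficiently — this is the step I expect to be the main obstacle, though it is a routine one. For a fixed $q$, as $r$ ranges over $P\setminus\{p,q\}$ sorted angularly around $q$, the count of points on one side of the line $qr$ changes by $\pm1$ between consecutive directions, so after an $O(n\log n)$ angular sort around $q$ we obtain all these counts in $O(n)$ additional time; doing this for every $q$ gives $O(n^2\log n)$, which already suffices for an $O(n^2\log n)$ bound and can be shaved to $O(n^2)$ by sorting all points angularly around all others via a standard dual-arrangement sweep (building the arrangement of the $n$ dual lines in $O(n^2)$ time yields all angular orders simultaneously). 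Since we repeat this for each of the $n$ choices of $p$, and the arrangement-based sorting can be shared, the total time is $O(n^2)$ per point $p$ and $O(n^2)$ overall amortized per point, matching the claimed bound. Finally, for the perimeter variant the same conditioning on an oriented edge $qr$ applies, except that the contribution of a visible edge $qr$ to the increment in perimeter is $\len(pq)+\len(pr)-\len(qr)$ instead of a triangle area; everything else, including the permutation counts, is unchanged, so the perimeter Shapley values are obtained in the same $O(n^2)$ time.
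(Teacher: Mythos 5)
Your argument is correct and is essentially the paper's own proof: you triangulate the region added to the hull by the visible edges $qr$ of the old hull, apply Lemma~\ref{le:permutations2} with $x=p$, $B=\{q,r\}$ and $A$ the points strictly on the outer side of $\overrightarrow{qr}$ other than $p$ to get the permutation fraction $\tfrac{2!\,|A|!}{(|A|+3)!}$, and obtain all the outer-side counts in $O(n^2)$ time as levels of vertices in the dual line arrangement, exactly as in the paper. (Your closing aside on the perimeter is not quite right as stated—attributing $\len(pq)+\len(pr)-\len(qr)$ to every visible edge double-counts the two new edges when $p$ sees more than one old edge, which is why the paper instead splits the perimeter into the expected length of added edges and of removed edges in Theorem~\ref{thm:pch}—but that is outside the present lemma, which concerns only the area.)
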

\begin{proof}
	For each $q,q'\in P\ (q\neq q')$, 
	let $H(q,q')$ be the open halfplane containing all points to the left of 
	the directed line from $q$ to $q'$.
	Define $\level(q,q')$ to be the number of points in $P\cap H(q,q')$.

 	\begin{figure}
		\centering
		\includegraphics[width=\textwidth,page=4]{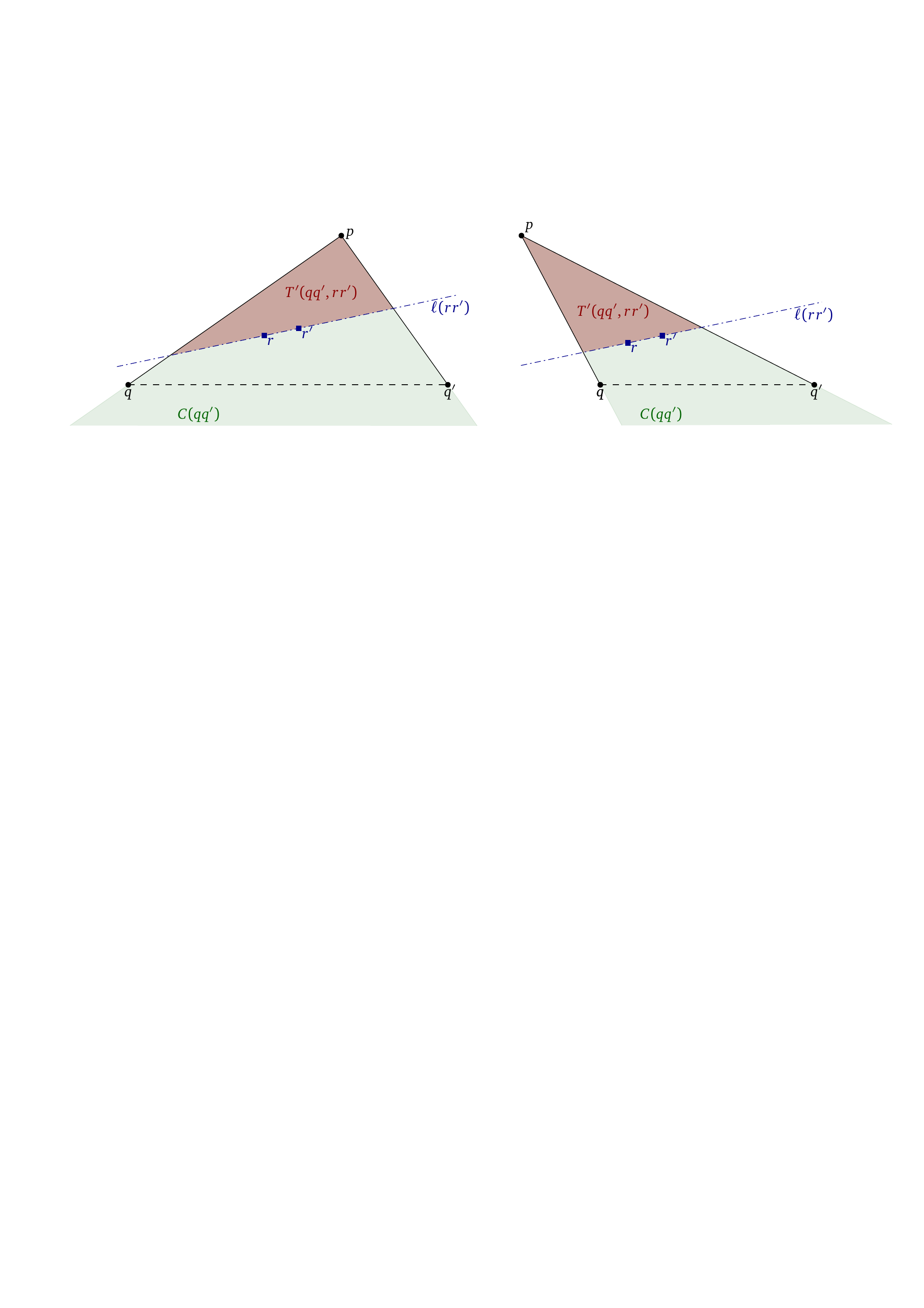}
		\caption{Left: triangulating the difference between  $CH(P(\pi,p))$ and $CH(P(\pi,p)\setminus \{p\})$. 
		Right: in order for $\triangle pqq'$ to be in $T(\pi,p)$,
		$q$ and $q'$ must appear before $p$, which in turn must appear
		before all other points in the halfplane $H(q,q')$.}
		\label{fig:convex_hull4}
	\end{figure}
	
	We can decompose the difference between
	$\CH(P(\pi,p))$ and $\CH(P(\pi,p)\setminus\{p\})$
	into a set $T(\pi,p)$ of triangles (see Figure~\ref{fig:convex_hull4} (left)),
	where 
  	\begin{align*}
		T(\pi,p) ~=~ \{ \triangle pqq'\mid\ & p\in H(q,q'),\\
		& \text{$q$ and $q'$ appear before $p$ in $\pi$, and}\\
		& \text{no points before $p$ in $\pi$ lie in $H(q,q')$} \}.
	\end{align*}
	In other words, $\triangle pqq'\in T(\pi,p)$ if and only if
	$p\in H(q,q')$, and among the $\level(q,q')+2$ points in $ H(q,q')\cup\{q,q'\}$,
	the two earliest points are $q$ and $q'$, and the third earliest
	point is $p$.  See Figure~\ref{fig:convex_hull4} (right).  
	(Note that if $\CH(P(\pi,p))=\CH(P(\pi,p)\setminus\{p\})$, then $T(\pi,p)$ is empty.)
	For fixed $p,q,q'\in P$ with $p\in H(q,q')$, Lemma~\ref{le:permutations2}
	with $x=p$, $B=\{q,q'\}$ and $A=H(q,q')\cap P\setminus \{ p\}$ tells
	that the probability that 
	$\triangle pqq'\in T(\pi,p)$ with respect to a random permutation $\pi$
	is exactly 
	\[ 
		\rho(q,q') ~=~ \frac{(\level(q,q')-1)!\, 2! }{(\level(q,q')+2)!} 
				   ~=~ \frac{2}{(\level(q,q')+2)(\level(q,q')+1)\level(q,q')}.
	\] 
	It follows that the Shapley value of $p$ is 
	\begin{equation}\label{eqn:ch}
		\phi(p,\V{ch})\ = \sum_{\mbox{\scriptsize\begin{tabular}{c}$q,q'\in P\ (q\neq q')$\\ with $p\in H(q,q')$\end{tabular}}} 
			\area(\triangle pqq')\cdot\rho(q,q').
	\end{equation}
	From the formula, we can immediately compute $\phi(p,\V{ch})$ 
	for any given $p\in P$ in $O(n^2)$ time, if all $\rho(q,q')$ values have been precomputed.

	Each value $\rho(q,q')$ can be computed from $\level(q,q')$	using $O(1)$ arithmetic operations.
	Thus, precomputing $\rho(q,q')$ requires precomputing $\level(q,q')$ for all $O(n^2)$ pairs $q,q'$.
	In the dual, this corresponds to computing the 
	\emph{levels} of all $O(n^2)$ vertices in an arrangement of $n$ lines.  The arrangement of $n$ lines can be
	constructed in $O(n^2)$ time~\cite[Chapter 8]{BergCKO08}, and the levels of all vertices can be
	subsequently generated by traversing the arrangement in $O(1)$ time per vertex.
\end{proof}

Naively applying Lemma~\ref{le:convex_hull} to all points $p\in P$ gives
$O(n^3)$ total time.  We can speed up the algorithm by a factor of $n$:

\begin{theorem}
\label{thm:ach}
	The Shapley values of the \AreaConvexHull game for $n$ points can be computed 
	in $O(n^2)$ time. 	
\end{theorem}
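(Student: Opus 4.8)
The plan is to reorganize the double sum in Equation~\eqref{eqn:ch} so that, instead of spending $O(n^2)$ time per point $p$, we compute all $n$ Shapley values together in $O(n^2)$ total time. The key observation is that $\area(\triangle pqq')$ is a bilinear-type expression in the coordinates of $p$, $q$, $q'$: writing the signed area via the cross product, $2\,\area(\triangle pqq') = (q-p)\times(q'-p)$, which expands into terms each of which is a product of one coordinate of $p$ with a coordinate of $q$ or $q'$ (plus terms involving only $q,q'$). So for a fixed \emph{ordered} pair $(q,q')$ with $p\in H(q,q')$, the contribution $\area(\triangle pqq')\cdot\rho(q,q')$ can be split into a part that depends on $p$ linearly and a part that does not, with coefficients that depend only on $q,q'$ (through their coordinates and through $\level(q,q')$).

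First I would fix an ordered pair $(q,q')\in P^2$ with $q\neq q'$ and think of it as contributing, to the Shapley value of \emph{every} point $p$ lying in the open halfplane $H(q,q')$, the quantity $\area(\triangle pqq')\cdot\rho(q,q')$. Using the cross-product expansion, this quantity is an affine function of $(x(p),y(p))$ whose three coefficients are explicit functions of $x(q),y(q),x(q'),y(q')$ and $\rho(q,q')$. Thus if, for each point $p$, we had the three accumulated sums $\sum x(q)\cdot(\cdots)$, $\sum y(q)\cdot(\cdots)$, $\sum(\cdots)$ taken over all ordered pairs $(q,q')$ with $p\in H(q,q')$, then each $\phi(p,\V{ch})$ follows in $O(1)$ additional time. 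So the task reduces to the following aggregation problem: for every point $p$, sum a constant number of ``weights'' $w(q,q')$ over all ordered pairs $(q,q')$ such that $p\in H(q,q')$, where each $w(q,q')$ is computable in $O(1)$ time once $\level(q,q')$ is known.

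Next I would solve this aggregation in the dual, exactly as in Lemma~\ref{le:convex_hull}. Map each point of $P$ to a line; then the ordered pairs $(q,q')$ correspond to the vertices of the arrangement of these $n$ lines, the condition $p\in H(q,q')$ translates into an above/below relation between the dual line of $p$ and the corresponding vertex, and $\level(q,q')$ is the level of that vertex. Concretely, for a fixed point $p$ we need $\sum_{v} w(v)$ over all arrangement vertices $v$ lying (say) below the line dual to $p$. This is a classical task: build the $n$ lines' arrangement in $O(n^2)$ time~\cite[Chapter~8]{BergCKO08}; for one dual line $\ell_p$, walk along $\ell_p$ through the arrangement in $O(n)$ time, and as we cross from one cell to the next maintain the running total of $w(v)$ over all vertices strictly below $\ell_p$ by adding or subtracting the weights of the $O(1)$ vertices on the edge being crossed (the level of each such vertex, hence its weight, is available from the traversal). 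Doing this for all $n$ dual lines costs $O(n^2)$ in total. Since we carry only a constant number of such weighted sums in parallel (one per coefficient of the affine function), the overall running time is $O(n^2)$.

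The main obstacle is bookkeeping rather than a conceptual hurdle: one must verify that the per-point contribution of an ordered pair $(q,q')$ really does factor into an affine function of $p$ with $(q,q')$-dependent coefficients — which it does, since $\area$ is a polynomial of degree one in each point's coordinates — and that the sweep along each dual line correctly and incrementally maintains the sum of vertex weights on the correct side, handling the level changes by $\pm1$ at each crossing so that $\rho$ (and hence $w$) is updated in $O(1)$ per vertex. One also needs that summing over \emph{ordered} pairs is the right thing: each unordered pair $\{q,q'\}$ with $p$ to the left of the directed line $q\to q'$ appears once as the ordered pair realizing $p\in H(q,q')$, matching the range of summation in Equation~\eqref{eqn:ch}. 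Once these points are checked, the $O(n^2)$ bound follows, and the perimeter variant is handled by the analogous (in fact simpler, since edge lengths are easier to accumulate) modification mentioned for the area case.
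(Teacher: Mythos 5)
Your first two steps are exactly the paper's: on the halfplane $H(q,q')$ the quantity $\area(\triangle pqq')$ is an affine function of the coordinates of $p$, so by \eqref{eqn:ch} each $\phi(p,\V{ch})$ is an affine expression whose three coefficients are sums of pair-weights $w(q,q')$ (computable in $O(1)$ from $\level(q,q')$) over all pairs with $p\in H(q,q')$, and in the dual this becomes: for each of the $n$ lines, sum the weights of all arrangement vertices lying below it. The gap is in how you solve this aggregation. Walking along the dual line $\ell_p$ and ``adding or subtracting the weights of the $O(1)$ vertices on the edge being crossed'' cannot compute $\sum_{v \text{ below } \ell_p} w(v)$: that sum ranges over up to $\Theta(n^2)$ vertices spread through the entire halfplane below $\ell_p$, while your walk visits only the $O(n)$ vertices on $\ell_p$ and performs $O(n)$ constant-size updates, so it can only ever accumulate $O(n)$ of the weights. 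Incremental $\pm 1$-type maintenance along a line works for the \emph{level} of the current vertex, which is a local quantity, but the weighted count over a halfplane is global and does not change by a locally available amount as you cross an edge; moreover, for a fixed $\ell_p$ the desired sum is a single number, so there is nothing to ``maintain'' along the walk in the first place.

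The missing ingredient is the paper's prefix-sum idea: for any two lines $\ell,\ell'$ of the arrangement, the vertices of $\ell'$ that lie below $\ell$ form a contiguous prefix or suffix of the vertex sequence along $\ell'$, because $\ell$ crosses $\ell'$ exactly once. Hence, for each $\ell'$, precompute prefix and suffix sums of the vertex weights along $\ell'$; this yields all values $S(\ell,\ell')$ (the weight of the vertices of $\ell'$ below $\ell$) in $O(n)$ time per $\ell'$, i.e.\ $O(n^2)$ total, and then $S(\ell)=\tfrac12\sum_{\ell'\neq\ell}S(\ell,\ell')$ since every vertex lies on two lines and is counted twice, again $O(n)$ per line. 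Equivalently, your sweep along $\ell_p$ can be repaired by adding, at the crossing with each $\ell'$, the precomputed prefix or suffix sum of $\ell'$ on the side below $\ell_p$, but this still requires the per-line prefix sums rather than the $O(1)$ local vertex weights. With that replacement (and carrying the three coefficient sums in parallel, as you propose), the $O(n^2)$ bound follows as in the paper.
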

\begin{proof}
	Let $p=(x,y)\in P$.
	Observe that for fixed $q,q'\in P\ (q\neq q')$, if
	$p\in H(q,q')$, then 
	$\area(\triangle pqq')$ is a linear function in $x$ and $y$ and
	can thus be written as $a(q,q')x + b(q,q')y + c(q,q')$.
	Let $A(q,q')=a(q,q')\cdot\rho(q,q')$, $B(q,q')=b(q,q')\cdot\rho(q,q')$,
	and $C(q,q')=c(q,q')\cdot\rho(q,q')$.  (As noted earlier, we can
	precompute
	all the $\rho(q,q')$ values in $O(n^2)$ time from the
	dual arrangement of lines.)
	By \eqref{eqn:ch},
	\[
		\phi(p,\V{ch}) \ = \sum_{\mbox{\scriptsize\begin{tabular}{c}$q,q'\in P\ (q\neq q')$\\ with $p\in H(q,q')$\end{tabular}}}\!\! 
			(A(q,q')x+B(q,q')y+C(q,q'))\ \ =\ \ {\cal A}(p)x+{\cal B}(p)y+{\cal C}(p),
	\]
	where
	\[ 
	{\cal A}(p)\ = \!\!\!\sum_{\mbox{\scriptsize\begin{tabular}{c}$q,q'\in P\ (q\neq q')$\\ with $p\in H(q,q')$\end{tabular}}}\!\!\!\!\! A(q,q'),\ \ \
	{\cal B}(p)\ = \!\!\!\sum_{\mbox{\scriptsize\begin{tabular}{c}$q,q'\in P\ (q\neq q')$\\ with $p\in H(q,q')$\end{tabular}}}\!\!\!\!\! B(q,q'),\ \ \
	{\cal C}(p)\ = \!\!\!\sum_{\mbox{\scriptsize\begin{tabular}{c}$q,q'\in P\ (q\neq q')$\\ with $p\in H(q,q')$\end{tabular}}}\!\!\!\!\! C(q,q').
	\]
	We describe how to compute ${\cal A}(p)$, ${\cal B}(p)$, and ${\cal C}(p)$
	for all $p\in P$ in $O(n^2)$ total time.  Afterwards, we can compute $\phi(p,\V{ch})$ for all $p\in P$ in $O(n)$ additional time.

	The problem can be reduced to 3 instances of the following:
	\begin{quote}
		Given a set $P$ of $n$ points in the plane, and given $O(n^2)$ 
		lines each through 2 points of $P$ and each assigned a weight, 
		compute for all $p\in P$ the sum of the weights of all lines 
		below $p$ (or similarly all lines above $p$).
	\end{quote}
	In the dual, the problem becomes:
	\begin{quote}
		Given a set $L$ of $n$ lines in the plane, 
		and given $O(n^2)$ vertices in the arrangement, each assigned a weight, 
		compute for all lines $\ell\in L$ the sum $S(\ell)$ of the weights
		of all vertices below $\ell$.
	\end{quote}

	To solve this problem, we could use known data structures for 
	halfplane range searching, but a direct solution is simpler.
	First construct the arrangement in $O(n^2)$ time.
	Given $\ell,\ell'\in L$,
	define $S(\ell,\ell')$ to be the sum of the
	weights of all vertices on the line $\ell'$ that are below $\ell$.  
	For a fixed $\ell'\in L$, 
	we can precompute $S(\ell,\ell')$ for all $\ell\in L\setminus\{\ell'\}$ 
	in $O(n)$ time, since these values correspond to prefix or suffix sums over
	the sequence of weights of the $O(n)$ vertices on the line $\ell'$.  
	The total time for all lines $\ell'\in L$ is $O(n^2)$.

	Afterwards, for each $\ell\in L$, we can compute $S(\ell)$ in $O(n)$ time by summing
	$S(\ell,\ell')$ over all $\ell'\in L\setminus\{\ell\}$ and dividing by 2 (since each	
	vertex is counted twice).
	The total time for all $\ell\in L$ is $O(n^2)$.
\end{proof}

\begin{theorem}
\label{thm:pch}
	The Shapley values of the \PerimeterConvexHull game for $n$ points can be computed 
	in $O(n^2)$ time.
\end{theorem}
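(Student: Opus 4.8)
The plan is to adapt the proof of Theorem~\ref{thm:ach}, replacing the triangle decomposition of the \emph{area} by the edge decomposition of the \emph{perimeter}; write $\V{pch}(Q)=\per(\CH(Q))$. Fix $p\in P$ and $\pi\in\Pi(P)$, and write $S=P(\pi,p)\setminus\{p\}$. Inserting $p$ into $\CH(S)$ deletes precisely the hull edges visible from $p$, that is, those pairs $q\ne q'$ with $qq'$ an edge of $\CH(S)$ and $p\in H(q,q')$, and creates precisely the (zero or two) edges joining $p$ to its tangent vertices of $\CH(S)$; such a created edge is a segment $pq$ with $q$ before $p$ in $\pi$ and all points of $P(\pi,p)\setminus\{p,q\}$ on one side of the line through $p$ and $q$. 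Since the perimeter is the sum of the edge lengths, for every $\pi$ we get
\[
	\Delta(\V{pch},\pi,p) ~=~ \sum_{q\,:\, pq\text{ created edge}} |pq| ~-~ \sum_{\substack{q\ne q'\,:\, p\in H(q,q')\\ qq'\text{ edge of }\CH(S)}} |qq'| .
\]
(The only subtlety is the degenerate subcase $|P(\pi,p)|=2$, where $\CH(P(\pi,p))$ is a segment that should be counted as two coincident edges; this produces an extra term that will cancel below.) Taking expectation over $\pi$, the Shapley value $\phi(p,\V{pch})$ is the expectation of the first sum minus the expectation of the second, and I treat the two separately.

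For the deleted-edge term: for fixed $q\ne q'$ with $p\in H(q,q')$, the event ``$qq'$ is an edge of $\CH(S)$'' is exactly the event ``$\triangle pqq'\in T(\pi,p)$'' from the proof of Lemma~\ref{le:convex_hull} (in both cases: $q$ and $q'$ precede $p$, and no other point of $H(q,q')$ precedes $p$), so it has probability $\rho(q,q')$, which depends only on $\level(q,q')$. Hence the deleted-edge contribution to $\phi(p,\V{pch})$ is $\sum_{q\ne q':\,p\in H(q,q')} |qq'|\,\rho(q,q')$. The weight $|qq'|\rho(q,q')$ does \emph{not} depend on $p$ (in contrast to $\area(\triangle pqq')\rho(q,q')$ in Theorem~\ref{thm:ach}), so computing these sums for all $p$ reduces to the line-weight aggregation problem solved inside that proof: assign weight $|qq'|\rho(q,q')$ to the directed line $q\to q'$, and for each $p$ sum the weights of the directed lines having $p$ to their left (equivalently, run the routine once for each of the two sides of each line). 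After precomputing all $\level(q,q')$ in $O(n^2)$ time from the dual line arrangement, this takes $O(n^2)$ time in total.

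For the created-edge term: for an ordered pair $(p,q)$, let the line through $p$ and $q$ split $P\setminus\{p,q\}$ into $U=H(p,q)\cap P$ and $V=H(q,p)\cap P$, with $u=|U|=\level(p,q)$ and $v=|V|=\level(q,p)=n-2-u$. As observed above, $pq$ is a created edge iff $q$ precedes $p$ and either all of $U$ or all of $V$ follows $p$. Applying Lemma~\ref{le:permutations2} three times (with $x=p$, $B=\{q\}$, and $A$ equal to $U$, to $V$, and to $U\cup V$) together with inclusion–exclusion, and adding the extra $1/(n(n-1))$ coming from the degenerate subcase $P(\pi,p)=\{p,q\}$ (which exactly cancels the inclusion–exclusion term $-1/((n-1)n)$), the expected contribution of the created edge $pq$ is $|pq|\left(\frac{1}{(u+1)(u+2)}+\frac{1}{(v+1)(v+2)}\right)$. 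Summing over $q\in P\setminus\{p\}$, and using that the $\level$ values are already available, the created-edge contribution to $\phi(p,\V{pch})$ is computed in $O(n)$ time per point, hence $O(n^2)$ in total.

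Adding the two contributions gives $\phi(p,\V{pch})$ for all $p\in P$ in $O(n^2)$ time. I expect no real obstacle here: the geometry of hull updates under a single point insertion is classical, the deleted-edge term is just a scalar-weight special case of the line-aggregation routine already developed for Theorem~\ref{thm:ach}, and the only genuinely new computation is the elementary permutation count for the created edges, which is a direct application of Lemma~\ref{le:permutations2}. The one point that needs care is the bookkeeping of the degenerate subcases with $|P(\pi,p)|\le 2$, which, as indicated, contribute terms that cancel and do not affect the running time.
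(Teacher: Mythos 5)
Your proposal is correct and follows essentially the same route as the paper's proof: decompose the perimeter change into created edges minus deleted edges, reuse $\rho(q,q')$ and the $O(n^2)$ arrangement-based aggregation from Theorem~\ref{thm:ach} for the deleted edges, and evaluate the created-edge term directly in $O(n)$ per point via Lemma~\ref{le:permutations2}. The only (cosmetic) difference is that the paper counts the two directed edges $(q,p)$ and $(p,q)$ separately with probabilities $\rho'(q,p)$ and $\rho'(p,q)$, which absorbs your degenerate two-point case automatically, whereas you derive the same formula $|pq|\bigl(\tfrac{1}{(u+1)(u+2)}+\tfrac{1}{(v+1)(v+2)}\bigr)$ by inclusion--exclusion on the undirected event plus an explicit cancellation.
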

\begin{proof}
	We adapt the algorithm for area to handle the perimeter.
	Let $E(\pi,p)$ be the set of directed edges of $\CH(P(\pi,p))$ 
	that are not in $\CH(P(\pi,p)\setminus\{p\})$, where edges
	are oriented in clockwise order.  (If $\CH(P(\pi,p)= \CH(P(\pi,p)\setminus\{p\})$, 
	then $E(\pi,p)$ is empty; otherwise, it has exactly	two edges because of general position.)  
	Then $(q,p)\in E(\pi,p)$ if and only if $q$ appears before $p$ in $\pi$
	and no points before $p$ in $\pi$ lie in $H(q,p)$; in other words,
	among the $\level(q,p)+2$ points in $H(q,p)\cup\{p,q\}$,
	the earliest point is $q$ and the second earliest
	point is $p$.  We can use Lemma~\ref{le:permutations2} to bound the probability of this 
	event. Thus, for fixed $p,q\in P$, the probability that $(q,p)\in E(\pi,p)$ 
	(with respect to a random permutation $\pi$) is exactly
	\[ 
		\rho'(q,p) ~=~ \frac{1}{(\level(q,p)+2)(\level(q,p)+1)}.
	\]  
	Similarly, the probability that $(p,q)\in E(\pi,p)$ is exactly $\rho'(p,q)$.
	It follows that the expected total length of the edges in $E(\pi,p)$ is
	\[ 
		\phi^+(p)\ = \sum_{q\in P\setminus\{p\}} \|p-q\|\cdot (\rho'(q,p)+\rho'(p,q)).
	\]
 
	On the other hand, a modification of the proof of
	Lemma~\ref{le:convex_hull} shows that
	the expected total length of the edges in $\CH(P(\pi,p)\setminus\{p\})$
	that are not in $\CH(P(\pi,p))$ is
	\[
		\phi^-(p)\ = \sum_{\mbox{\scriptsize\begin{tabular}{c}$q,q'\in P\ (q\neq q')$\\ with $p\in H(q,q')$\end{tabular}}}\!\! 
			\|q-q'\|\cdot\rho(q,q').
	\]
  
	The Shapley value of each point $p\in P$ is $\phi(p,\V{pch})=\phi^+(p)-\phi^-(p)$ because of linearity of expectation.
	We can compute $\phi^+(p)$ naively in $O(n)$ time for each $p\in P$;
	the total time is $O(n^2)$.
	We can compute $\phi^-(p)$ for all $p\in P$ as in the proof in Theorem~\ref{thm:ach},
	in $O(n^2)$ total time (in fact, the algorithm is a little simpler, since linear functions are not required).
\end{proof}

\section{Smallest enclosing disk}
\label{sec:enclosingdisk}

In this section we consider the area and the perimeter
of the smallest enclosing disk of the points. Recall that
$\V{ed}$ is the characteristic function.
For simplicity, we assume general position in the following way:
no four points are cocircular and circles through three input points 
do not have a diameter defined by two input points.
We use $P$ for the set of points.

First we explain how to compute the Shapley values for the area of
the minimum enclosing disk.

\begin{lemma}
\label{le:encloding_disk}
	For each point $p$ of $P$ we can compute $\phi(p,\V{ed})$ in
	$O(n^3)$ time. 
\end{lemma}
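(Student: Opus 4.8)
The plan is to mimic the structure of the convex-hull argument (Lemma~\ref{le:convex_hull}): decompose $\Delta(\V{ed},\pi,p)$ into contributions that depend only on a small ``witness set'' of points, use Lemma~\ref{le:permutations2} to compute the probability of each witness configuration, and then sum over all configurations. The key geometric fact is that the minimum enclosing disk of a set is determined by at most $3$ points of that set (the points on its boundary), and by general position exactly $2$ or $3$. So for a random permutation $\pi$, the disk $\med(P(\pi,p)\setminus\{p\})$ just before inserting $p$ is $\med(S)$ for some $S\subseteq P(\pi,p)\setminus\{p\}$ with $|S|\in\{2,3\}$, and inserting $p$ changes the disk exactly when $p\notin\med(S)$.

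First I would fix, for each candidate witness set $S\subseteq P$ with $|S|\in\{2,3\}$, the disk $D_S=\med(S)$ and the number $\lambda(S)$ of points of $P$ lying strictly inside $D_S$. For the increment to equal $\area(\med(S\cup\{p\}))-\area(D_S)$ we need three things simultaneously: (i) the points of $S$ all appear before $p$ in $\pi$; (ii) every point of $P$ strictly inside $D_S$ appears before $p$ (so that $\med$ of the prefix minus $\{p\}$ is exactly $D_S$ — this needs $S$ to be the minimal witness, which general position handles); and (iii) $p\notin D_S$, plus $p$ appears before every point of $P$ outside $D_S\cup S$. Conditions (i)--(iii) say: among the $\lambda(S)+|S|+1+|\text{outside}|=n$ points, within the relevant groups, $S$ comes first, then $p$, with the inside-points squeezed before $p$ and outside-points after. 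This is precisely the shape counted by Lemma~\ref{le:permutations2} with $x=p$, $B=S\cup(\text{points inside }D_S)$, and $A=$ the remaining points outside, giving a probability that is an explicit rational function of $\lambda(S)$ and $|S|$. Multiplying by $\area(\med(S\cup\{p\}))-\area(D_S)$ and summing over all $O(n^3)$ choices of $S$ (those with $|S|=2$ contribute $O(n^2)$ terms, those with $|S|=3$ contribute $O(n^3)$) yields $\phi(p,\V{ed})$.

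For the running time: there are $O(n^2)$ pairs and $O(n^3)$ triples; for each we must compute $\lambda(S)$, i.e. how many points lie inside $\med(S)$. Computing $\lambda(S)$ naively costs $O(n)$ per set, which would give $O(n^4)$; to get $O(n^3)$ I would instead observe that the disks through triples correspond, in the standard lifting to the paraboloid in $\RR^3$, to planes, and counting points below a plane in a lifted point set is counting points on one side of a plane in an arrangement — the $\lambda$-values of all $O(n^3)$ such planes can be generated by traversing the arrangement of $n$ planes in $\RR^3$ in $O(1)$ amortized time per vertex, after an $O(n^3)$-time construction, exactly as the $\level$-values were obtained from the line arrangement in Lemma~\ref{le:convex_hull}. (For the $O(n^2)$ diameter-disks one handles $\lambda$ similarly, or even naively in $O(n)$ each for a total of $O(n^3)$.) The remaining work — evaluating the probability and the area increment for each witness set — is $O(1)$ per set, so the total is $O(n^3)$.

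The main obstacle I anticipate is getting the decomposition of $\Delta(\V{ed},\pi,p)$ exactly right, in particular ensuring that each permutation with $\med(P(\pi,p))\neq\med(P(\pi,p)\setminus\{p\})$ is counted exactly once. One must argue that the witness set $S$ of $\med(P(\pi,p)\setminus\{p\})$ is uniquely determined (the boundary points of the minimum enclosing disk), that $p\notin D_S$ is equivalent to the disk actually changing, and that conditioning on ``all points inside $D_S$ precede $p$ and all points outside succeed $p$'' is exactly what forces the prefix disk to be $D_S$ — this is where the general-position assumption (no four cocircular, and circles through three points have no diameter defined by two points, so the $2$-point and $3$-point witness cases are genuinely disjoint and non-degenerate) earns its keep. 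Once that bookkeeping is pinned down, the probability computation is a direct application of Lemma~\ref{le:permutations2} and the rest is routine.
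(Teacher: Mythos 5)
There is a genuine gap in your decomposition of the increment. You group permutations by the basis $S$ of the disk just before $p$ is inserted and claim the increment is then $\area(\med(S\cup\{p\}))-\area(\med(S))$. This is false: when $p$ lies outside the old disk, the new disk $\med(P(\pi,p))$ is not $\med(S\cup\{p\})$ in general, because points of the prefix that were strictly inside $\med(S)$ may end up on the boundary of the new disk (the new basis can consist of $p$ together with points not in $S$). So the increment is not determined by $S$ and $p$ alone. Your attempted repair---condition (ii), forcing every point of $P$ inside $\med(S)$ to precede $p$---does not fix this and changes the event being counted: permutations in which some interior point of $\med(S)$ follows $p$ still have old disk $\med(S)$ and still contribute a nonzero increment, but they satisfy neither your conditions for $S$ nor those for any other witness set, so their contribution is simply lost. (And even on your restricted event the prefix equals $P\cap\med(S)$, so the new disk is $\med\bigl((P\cap\med(S))\cup\{p\}\bigr)$, again not $\med(S\cup\{p\})$.)

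The paper avoids conditioning the new disk on the old basis by splitting the increment via linearity of expectation: $\phi(p,\V{ed})=\phi^+(p)-\phi^-(p)$, where $\phi^-(p)$ is the expected area of the old disk restricted to permutations in which the disk changes (summed over bases $B$ with $p\notin\med(B)$; event: all of $B$ before $p$ and no point outside $\med(B)$ before $p$), and $\phi^+(p)$ is the expected area of the new disk on the same event (summed over bases $B$ containing $p$; event: $B\setminus\{p\}$ before $p$ and no point outside $\med(B)$ before $p$). Each of these events leaves the interior points unconstrained, so each probability is a direct application of Lemma~\ref{le:permutations2}. Your algorithmic ingredients (enumerating the $O(n^3)$ bases and computing the number of points outside each disk via the lifting to an arrangement of $n$ planes in $\RR^3$) coincide with the paper's and would carry over once you adopt this two-sided decomposition.
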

\begin{proof}
	For each subset of points $Q\subset P$, 
	let $X(Q)$ be the subset of points of $Q$ that lie on the boundary of $\med(Q)$. 
	We now recall some well known properties; see for example \cite{Welzl91} 
	or \cite[Section 4.7]{BergCKO08}.
	The disk $\med(Q)$ is unique and $\med(X(Q))=\med(Q)$.
	If a point $p$ is outside $\med(Q)$, then $p$ is on the boundary of $\med(Q\cup \{ p\})$,
	that is, $p\in X(Q\cup \{p \})$.
	Because of our assumption on general position, $X(Q)$ has at most three points.
	When $X(Q)$ has two points, then they define a diameter of $\med(Q)$.
	We have $|X(Q)|\le 1$ only when $|Q|\le 1$.
	See Figure~\ref{fig:disk1}.

	\begin{figure}
		\centering
		\includegraphics[width=.55\textwidth,page=1]{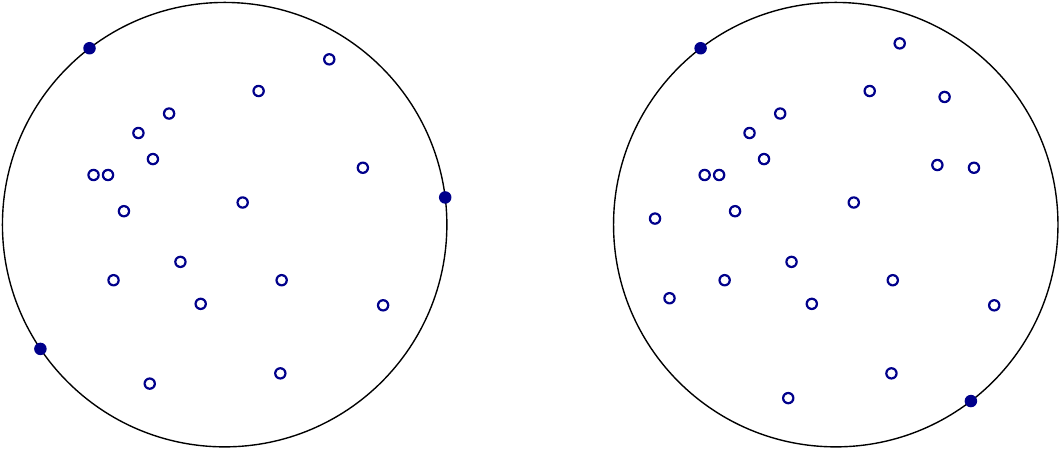}
		\caption{The minimum enclosing disk for two sets $Q$ of points. 
			Left: $X(Q)$ has three elements. Right: $X(Q)$ has two elements.}
		\label{fig:disk1}
	\end{figure}

	A subset $B\subset P$ of size at most 3 such that $X(B)=B$ is called a \emph{basis}.
	For a basis $B$, define $\Out(B)= P\setminus \med(B)$
	and $\level(B)=|\Out(B)|$. Note that $\level(B)$ is the number of points of $P$
	strictly outside $\med(B)$.
  	
	For a basis $B$ and a point $p\not\in\med(B)$,
	we have $X(P(\pi,p)\setminus\{p\})=B$ if and only if all points of $B$ appear
	before $p$ in $\pi$, and no points before $p$ in $\pi$ belong to $\Out(B)$.  
	In other words, among the $\level(B)+|B|$ points in
	$\Out(B))\cup B$, the $|B|$ earliest points are the points of $B$
	and the next earliest point is $p$.  See Figure~\ref{fig:disk2} (left).
	Lemma~\ref{le:permutations2} implies that, for a fixed $B$ and a fixed $p$, 
	the probability that $X(P(\pi,p)\setminus\{p\})=B$ (with respect to a random permutation $\pi$)
	is exactly
	\[ 
		\rho(B) ~=~ \frac{(\level(B)-1)! \,|B|!}{(\level(B)+|B|)!}
				~=~ \frac{|B|!}{(\level(B)+|B|)\cdots(\level(B)+1)\level(B)}.
	\]
	
	Let $I(\pi,p)$ be the indicator variable that is 
	$1$ if $p\not\in\med(P(\pi,p)\setminus\{p\})$, and $0$ otherwise.  
	It follows that the expected value of
	$\area(\med(P(\pi,p)\setminus\{p\}))\cdot I(\pi,p)$ is 
	\begin{equation}\label{eqn:med1}
	\phi^-(p)\ = \sum_{\mbox{\scriptsize \begin{tabular}{c} basis $B$\\ with $p\not\in\med(B)$\end{tabular}}} \!\!\area(\med(B))\cdot\rho(B).
	\end{equation}

	\begin{figure}
	\centering
	\includegraphics[width=.9\textwidth,page=3]{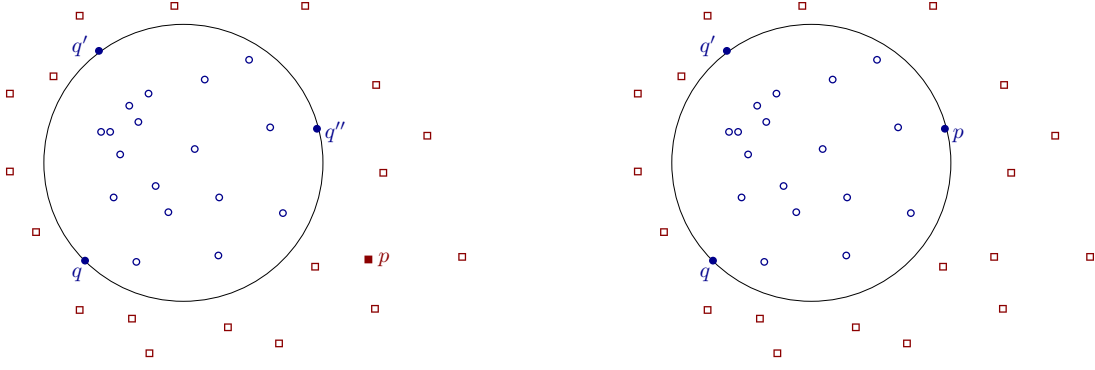}
	\caption{Left: in order for the shown disk to be $\med(P(\pi,p)\setminus\{p\})$, 
		the points $q,q',q''$ must appear before $p$, which
		in turn must appear before all other points outside the disk. 
		Right: in order for the shown disk to be $\med(P(\pi,p))$, 
		the points $q,q'$ must appear before $p$, which in turn must appear 
		before all points outside the disk.}
	\label{fig:disk2}
	\end{figure}
	
	On the other hand, for a basis $B$ and a point $p\in B$,
	we have $X(P(\pi,p))=B$ if and only if all points of $B\setminus\{p\}$
	appear before $p$ in $\pi$, and no points before $p$ in $\pi$ lie in $\Out(B)$.
	In other words, among the $\level(B)+|B|$ points in
	$\Out(B)\cup B$, the $|B|-1$ earliest points are the points of $B\setminus\{p\}$
	and the next earliest point is $p$. See Figure~\ref{fig:disk2} (right).
	We can bound the probability of this event using Lemma~\ref{le:permutations2}.
	Thus, for a fixed $B$ and $p\in B$, 
	the probability that $X(P(\pi,p)\setminus\{p\})=B$ 
	(with respect to a random permutation $\pi$)
	is exactly
	\[ \rho'(B) ~=~ \frac{(|B|-1)!}{(\level(B)+|B|)\cdots(\level(B)+1)}.\]
	
	It follows that the expected value of
	$\area(\med(P(\pi,p))\cdot I(\pi,p)$ is 
	\begin{equation}\label{eqn:med2}
	\phi^+(p)\ = \sum_{\mbox{\scriptsize \begin{tabular}{c} basis $B$\\ with $p\in B$\end{tabular}}} \!\!\area(\med(B))\cdot\rho'(B).
	\end{equation}
	
	By linearity of expectation, the Shapley value of $p$ is $\phi(\V{ed},p)=\phi^+(p)-\phi^-(p)$.  
	From the formulas (\ref{eqn:med1}) and (\ref{eqn:med2}),
	we can compute $\phi^+(p)$ in
	$O(n^3)$ time and $\phi^-(p)$ 
	in $O(n^2)$ time for any given $p\in P$ (since $|B|\le 3$), 
	if all $\rho(B)$ and $\rho'(B)$ values have been
	precomputed.
	
	Precomputing $\rho(B)$ and $\rho'(B)$ requires precomputing
	$\level(B)$ for all bases $B$.  Precomputing $\level(B)$
	for bases $B$ of size~2 can be naively done in $O(n^2\cdot n)=
	O(n^3)$ time, so it suffices to focus on bases $B$ of size~3.
	By a standard lifting transformation (mapping point $(a,b)$ to the plane
	$z-2ax-2by+a^2+b^2=0$ in 3 dimensions), the problem reduces to 
	compute the \emph{level} of $O(n^3)$ vertices in an arrangement
	of $n$ planes in 3 dimensions. See~\cite[Section 5.7]{Matousek02} 
	for a discussion on the transformation.
	The arrangement of $n$ planes can be constructed in $O(n^3)$ time~\cite{EdelsbrunnerOS86,EdelsbrunnerSS93}, 
	and the levels of all vertices can be
	subsequently generated by traversing the arrangement in $O(1)$ time per vertex.
	
	For the perimeter, we just use the perimeter of $\med(B)$ instead of the area
	in the computations.
\end{proof}

Naively applying Lemma~\ref{le:encloding_disk} to all points $p\in P$ 
gives $O(n^4)$ total time. We can speed up the algorithm by a factor of $n$:

\begin{theorem}
\label{thm:ed}
	The Shapley values of the \AreaEnclosingDisk and \PerimeterEnclosingDisk games 
	for $n$ points can be computed in $O(n^3)$ time. 	
\end{theorem}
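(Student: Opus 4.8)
The plan is to mimic the proof of Theorem~\ref{thm:ach}. The preprocessing already carried out inside Lemma~\ref{le:encloding_disk} --- lifting the points to planes in $\RR^3$, building the arrangement $\A$ of those $n$ planes, and reading off from it, for every basis $B$, the quantities $\level(B)$, $\med(B)$, $\area(\med(B))$ (or $\per(\med(B))$), $\rho(B)$ and $\rho'(B)$ --- costs $O(n^3)$ and is shared by all points. So it remains to evaluate the sums \eqref{eqn:med1} and \eqref{eqn:med2} for \emph{every} $p\in P$ within $O(n^3)$ time in total. The sum \eqref{eqn:med2} for $\phi^+(p)$ ranges only over the $O(n^2)$ bases that contain $p$, whereas the sum \eqref{eqn:med1} for $\phi^-(p)$ ranges over bases $B$ with $p\notin\med(B)$, of which there can be $\Theta(n^3)$; the latter is the only piece not already within budget when computed point by point.

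For $\phi^+$ and for the easy part of $\phi^-$ the accounting is straightforward. The candidate bases are the $O(n^3)$ triples, the $O(n^2)$ pairs and the $O(n)$ singletons (singletons contribute $0$, since a one-point disk has zero area and perimeter). Initialize $\phi^+(p):=0$ for all $p$, scan the bases once, and add $\area(\med(B))\cdot\rho'(B)$ to $\phi^+(p)$ for each of the at most three points $p\in B$; this is $O(n^3)$. The contribution of bases of size at most $2$ to $\phi^-$ is handled the same way: there are $O(n^2)$ such bases, and for each we spend $O(n)$ time adding $\area(\med(B))\cdot\rho(B)$ to $\phi^-(p)$ for every $p\in\Out(B)$, again $O(n^3)$ overall.

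The only real work is the contribution of size-$3$ bases to $\phi^-$, namely $\sum_B \area(\med(B))\cdot\rho(B)$ over all bases $B$ with $|B|=3$ and $p\in\Out(B)$. In the lifted picture this becomes a range-counting problem on $\A$: each such $B$ corresponds to a vertex $v_B$ of $\A$; whether $B$ is a basis and the value $\level(B)$ were already recorded while traversing $\A$; and, with the orientation convention used there to count $\level$, the condition ``$p\in\Out(B)$'' says exactly that the plane $h_p$ dual to $p$ passes below $v_B$. Assign to each vertex $v$ of $\A$ the weight $w(v):=\area(\med(B_v))\cdot\rho(B_v)$ if $B_v$ is a basis and $w(v):=0$ otherwise. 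Then the size-$3$ part of $\phi^-(p)$ is precisely the total weight of the vertices of $\A$ lying above $h_p$. This is the three-dimensional analogue of the dual problem solved inside Theorem~\ref{thm:ach}: given $n$ planes in $\RR^3$ and $O(n^3)$ weighted vertices of their arrangement, compute for every plane $h$ the total weight $S(h)$ of the vertices lying above $h$.

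To solve this in $O(n^3)$ time, reduce it one dimension down. Fix a plane $h'\in\A$; the other $n-1$ planes cut $h'$ in $n-1$ lines, and the vertices of the resulting planar arrangement are exactly the $O(n^2)$ vertices of $\A$ that lie on $h'$, each carrying its weight $w(v)$. For a second plane $h$, the vertices of $\A$ on $h'$ that lie above $h$ in $\RR^3$ are precisely those on one side of the line $h\cap h'$ within $h'$, so the routine from Theorem~\ref{thm:ach} computes, for this fixed $h'$, all the side-sums $S(h,h')$ over the $n-1$ lines $h\cap h'$ in $O(n^2)$ time (prefix and suffix sums along each line of the planar arrangement); ranging over all $h'$ costs $O(n^3)$. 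Finally $S(h)=\tfrac13\sum_{h'\neq h}S(h,h')$, since every relevant vertex lies on exactly three planes and is counted three times. Putting the pieces together yields $\phi(p,\V{ed})=\phi^+(p)-\phi^-(p)$ for every $p$ in $O(n^3)$ time, and replacing $\area(\med(\cdot))$ by $\per(\med(\cdot))$ throughout gives the \PerimeterEnclosingDisk game within the same bound. I expect the only real obstacle to be the bookkeeping: fixing the orientation conventions so that ``$p\in\Out(B)$'' translates correctly first to a side of $h_p$ in $\A$ and then to a side of a line inside each $h'$, and checking that the planar side-sum subroutine of Theorem~\ref{thm:ach} applies verbatim here --- which it does, since that subroutine needs only a weighted planar line arrangement, with no primal point structure.
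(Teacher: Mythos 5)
Your proposal is correct and follows essentially the same route as the paper: split into $\phi^+$ and $\phi^-$, handle bases of size at most $2$ directly, lift the size-$3$ part to a weighted half-space sum over the vertices of an arrangement of $n$ planes, and recover each $S(h)$ from prefix/suffix sums along arrangement lines with a final division by $3$. The only (cosmetic) difference is that you organize the side-sum computation as ``fix one plane $h'$ and run the planar routine of Theorem~\ref{thm:ach} inside it,'' whereas the paper fixes a pair $(h',h'')$ and does one-dimensional prefix sums along $h'\cap h''$ --- these unwind to the same $O(n^3)$ computation.
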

\begin{proof}
	We already know how to compute $\phi^+(p)$ in
	$O(n^2)$ time for each $p\in P$, and thus in $O(n^3)$ total time.
	It suffices to focus on computing $\phi^-(p)$.
	In the formula (\ref{eqn:med1}), terms for bases of size~2
	can be handled in $O(n^2)$ time for each $p$; so it suffices to
	focus on bases of size~3. The problem can be formulated as follows:
	\begin{quote}
	Given a set $P$ of $n$ points in the plane, and given $O(n^3)$ 
	disks each with 3 points of $P$ on the boundary and each assigned a weight, 
	compute for all $p\in P$ the sum of the weights of all disks not containing $p$.
	\end{quote}
	By the standard lifting transformation, the problem reduces to the following:
	\begin{quote}
	Given a set $H$ of $n$ planes in 3 dimensions, and given $O(n^3)$ vertices in the arrangement, 
	each assigned a weight, compute for all $h\in H$ the sum $S(h)$ of the weights
	of all vertices below $h$.
	\end{quote}

	To solve this problem, we could use known data structures for 
	halfspace range searching, but an approach as in the proof
	of Theorem~\ref{thm:ach} is simpler.
	First construct the arrangement in $O(n^3)$ time.
	Given $h,h',h''\in H$, define $S(h,h',h'')$  to be the sum of the
	weights of all vertices on the line $h'\cap h''$ that are below $h$.
	For a fixed pair of planes $h',h''\in H$,
	we can precompute $S(h,h',h'')$ for all $h\in H\setminus\{h',h''\}$ in $O(n)$ time, since these values correspond to prefix or suffix sums over
	the sequence of weights of the $O(n)$ vertices on the line $h'\cap h''$.  The total time for all pairs $h',h''\in H$ is $O(n^3)$.

	Afterwards, for each $h\in H$, we can compute $S(h)$ in $O(n^2)$ time by summing
	$S(h,h',h'')$  over all pairs $h',h''\in H\setminus\{h\}$ 
	and dividing by 3 (since each vertex is counted thrice).
	The total time for all $h\in H$ is $O(n^3)$.
\end{proof}

\section{Union of anchored rectangles}
\label{sec:anchoredrectangles}

In this section we consider the \AreaAnchoredRectangles game defined
by the characteristic function $\V{ar}$.
It is easy to see that one can focus on the special case where
all the points are in a quadrant; see Figure~\ref{fig:reduction1}.
Our discussion will focus on the case where the points of $P$ 
are on the positive quadrant of the plane.

\begin{figure}
	\centering
	\includegraphics[width=\textwidth,page=2]{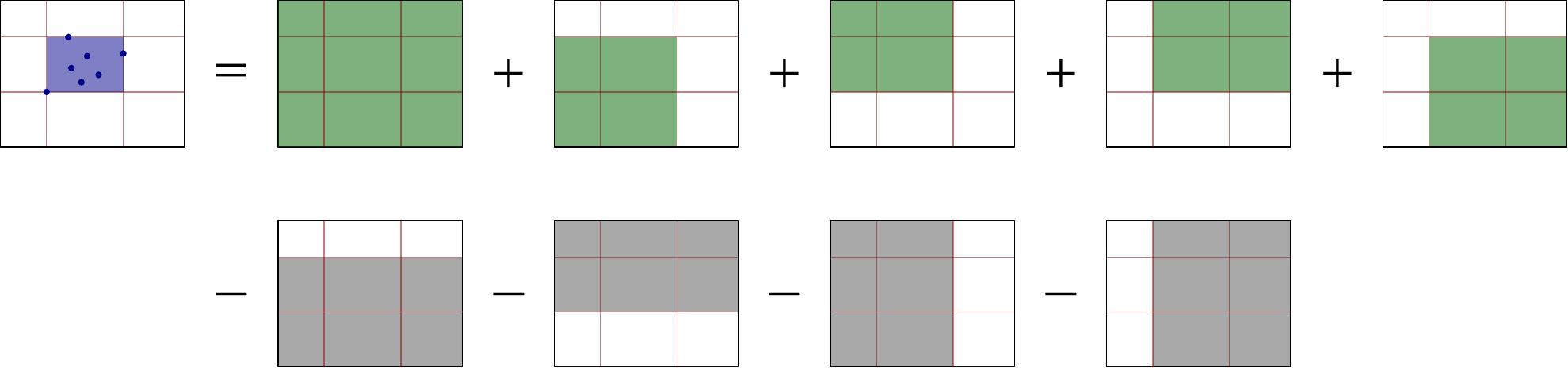}
	\caption{It is enough to consider the cases where all points are in a quadrant.}
	\label{fig:reduction1}
\end{figure}

Consider a fixed set $P$ of points in the positive quadrant.
In the notation we will drop the dependency on $P$. 
For simplicity, we assume general position: no two points have the same $x$- or $y$-coordinate.
We first introduce some notation that will be used in this section and
in Section~\ref{sec:boundingbox}.

\subsection{Notation for axis-parallel problems}
\label{sec:notation}

For each point $q$ of the plane, we use the ``cardinal directions"
to define subsets of points in quadrants with apex at $q$:
\begin{align*}
	\SNW(q) ~&=~ \{ p\in P \mid x(p)\le x(q), ~ y(p)\ge y(q)\},\\
	\SNE(q) ~&=~ \{ p\in P \mid x(p)\ge x(q), ~ y(p)\ge y(q)\},\\
	\SSE(q) ~&=~ \{ p\in P \mid x(p)\ge x(q), ~ y(p)\le y(q)\}.
\end{align*}
We use lowercase to denote their cardinality: $\NW(q)=|\SNW(q)|$,
$\NE(q)=|\SNE(q)|$ and $\SE(q)=|\SSE(q)|$.
See Figure~\ref{fig:notation}, left.

\begin{figure}
	\centering
	\includegraphics[]{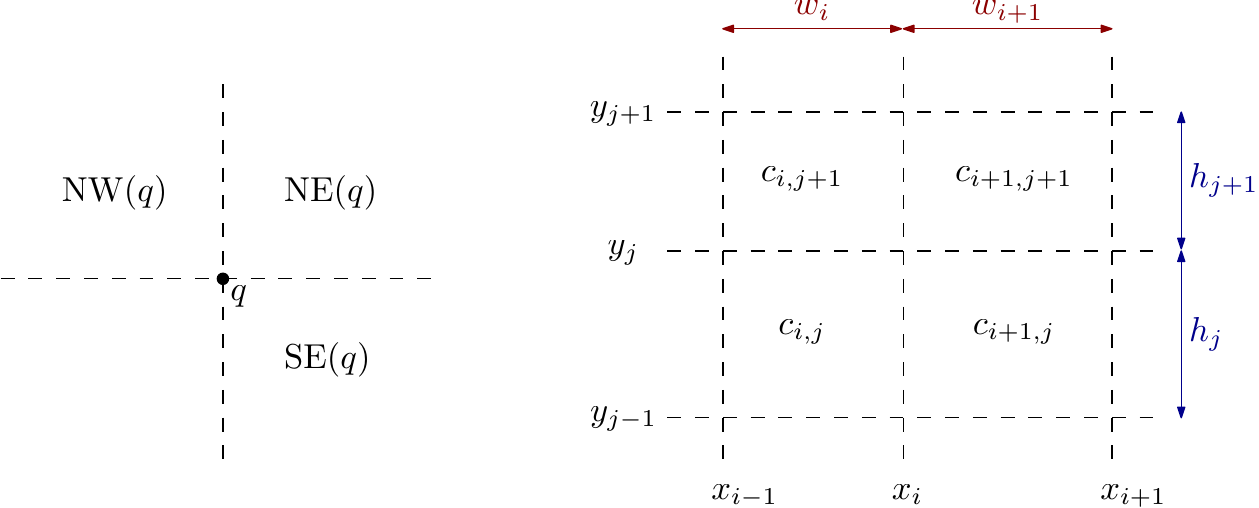}
	\caption{Notation for axis-parallel problems. Left: the quadrants to define $\SNW(q)$, $\SNE(q)$ and $\SSE(q)$.
		Right: cells of $\A$.}
	\label{fig:notation}
\end{figure}

Let $x_1< \ldots < x_n$ denote the $x$-coordinates of the points of $P$,
and let $y_1< \ldots < y_n$ be their $y$-coordinates.
We also set $x_0=0$ and $y_0=0$.
For each $i,j\in [n]$ we use $w_i=x_i-x_{i-1}$ (for \emph{width})
and $h_j=y_{j}-y_{j-1}$ (for \emph{height}).

Let $L$ be the set of horizontal and vertical lines that contain
some point of $P$. We add to $L$ both axis. Thus, $L$ has $2n+2$ lines.
The lines in $L$ break the plane into 2-dimensional cells (rectangles), 
usually called the arrangement and denoted by $\A=\A(L)$. 
More precisely, a ($2$-dimensional) cell $c$ of $\A$ is a maximal connected 
component in the plane after the removal of the points on lines in $L$. 
Formally, the ($2$-dimensional) cells are open sets whose closure is a rectangle,
possibly unbounded in some direction.
We are only interested in the bounded cells, and with a slight abuse
of notation, we use $\A$ for the set of bounded cells.
We denote by $c_{i,j}$ the cell between the vertical lines $x=x_{i-1}$ and $x=x_i$
and the horizontal lines $y=y_{j-1}$ and $y=y_j$.
Note that $c_{i,j}$ is the interior of a rectangle with width $w_i$ and height $h_j$.
See Figure~\ref{fig:notation}, right.

Since $\SNE(q)$ is constant over each $2$-dimensional cell $c$ of $\A$,
we can define $\SNE(c)$, for each cell $c\in \A$. 
The same holds for $\SNW(c)$ and $\SSE(c)$ and their cardinalities,
$\NE(c)$, $\NW(c)$ and $\SE(q)$.
See Figure~\ref{fig:ortho_hull1}.

\begin{figure}
	\centering
	\includegraphics[scale=.9,page=1]{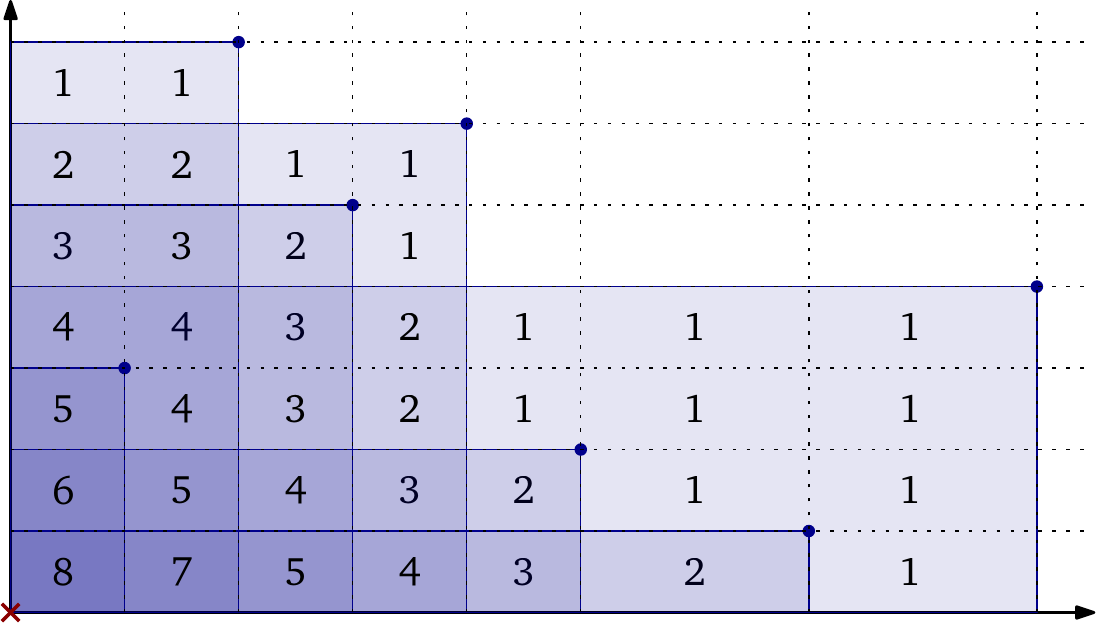}
	\caption{The non-zero counters $\NE(c)$ for the bounded cells $c$ of $\A$. 
		The intensity of the color correlates with the counter $\NE(c)$.}
	\label{fig:ortho_hull1}
\end{figure}

A \DEF{block} is a set of cells
$B=B(i_0,i_1, j_0, j_1) = \{c_{i,j}\mid i_0\le i\le i_1,~ j_0\le j\le j_1\}$ 
for some indices $i_0,i_1, j_0, j_1$,
with $1\le i_0\le i_1\le n$ and $1\le j_0\le j_1\le n$.
The number of columns and rows in $B$ is $i_1-i_0+1+ j_1-j_0+1= O(i_1-i_0+j_1-j_0)$.
A block $B$ is \DEF{empty} if no point of $P$
is on the boundary of at least three cells of $B$.
Equivalently, $B$ is empty if no point of $P$ is in the interior 
of the union of the closure of the cells in $B$.
See Figure~\ref{fig:notation2} for an example.

\begin{figure}
	\centering
	\includegraphics[scale=1,page=2]{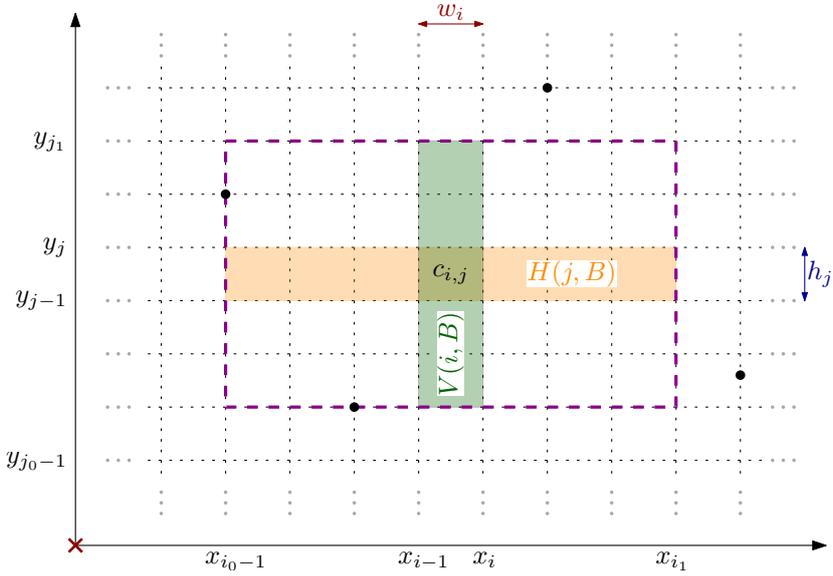}
	\caption{An \emph{empty} block $B=B(i_0,i_1,j_0,j_1)$ 
		with a vertical and a horizontal slab shaded.}
	\label{fig:notation2}
\end{figure}

We will be using maximal rows and columns within a block $B$
to compute some partial information.
Thus, for each block $B$ and each index $i$, we define the \DEF{vertical slab}
$V(i,B)=\{ c_{i,j}\mid 1\le j\le n, ~ c_{i,j}\in B\}$.
Similarly, for each block $B$ and each index $j$, we define the \DEF{horizontal slab}
$H(j,B)=\{ c_{i,j}\mid 1\le i\le n, ~ c_{i,j}\in B\}$.
Such slabs are meaningful only for indices within the range that defines the slab.
We call them the \DEF{slabs within} $B$.

\subsection{Interpreting Shapley values geometrically}

First, we reduce the problem of computing Shapley values 
to a purely geometric problem.
See Figure~\ref{fig:ortho_hull1} for the relevant counters considered in the following result.

\begin{lemma}
\label{le:percell}
	If $P$ is in the positive quadrant, then for each $p\in P$ we have
	\[
		\phi(p,\V{ar}) ~=~ \sum _{c\in \A,~ c\subset R_p} \frac{\area(c)}{\NE(c)} \,.
	\]
\end{lemma}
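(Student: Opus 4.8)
The plan is to use the linearity of Shapley values: decompose $\V{ar}$ into a sum, over the bounded cells of $\A$, of elementary ``coverage'' games, and evaluate the Shapley value of each such game by hand.

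First I would set up the decomposition. The union $\bigcup_{q\in Q}R_q$ is contained in $[0,x_n]\times[0,y_n]$, which is tiled by the bounded cells $c_{i,j}$ ($i,j\in[n]$). For a cell $c_{i,j}$ and a point $q\in P$, the rectangle $R_q=[0,x(q)]\times[0,y(q)]$ meets $c_{i,j}$ in a set of positive measure iff $x(q)>x_{i-1}$ and $y(q)>y_{j-1}$; but $x(q),y(q)$ are grid coordinates of $\A$, so in that case in fact $x(q)\ge x_i$ and $y(q)\ge y_j$, i.e.\ $\overline{c}_{i,j}\subseteq R_q$. Hence each $R_q$ either contains a cell or meets it in a null set, $c_{i,j}\subseteq R_q \iff q\in\SNE(c_{i,j})$, and (taking $q=p$) $c_{i,j}\subseteq R_p \iff p\in\SNE(c_{i,j})$. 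Consequently, up to null sets a cell $c$ is covered by $\bigcup_{q\in Q}R_q$ exactly when $Q\cap\SNE(c)\neq\emptyset$, so
\[
	\V{ar}(Q) ~=~ \sum_{c\in\A}\area(c)\cdot u_c(Q),
	\qquad
	u_c(Q) ~=~ \begin{cases}1 & Q\cap\SNE(c)\neq\emptyset,\\ 0 & \text{otherwise.}\end{cases}
\]
Each $u_c$ is a valid characteristic function ($u_c(\emptyset)=0$) and $\area(c)$ is a constant, so by linearity $\phi(p,\V{ar})=\sum_{c\in\A}\area(c)\,\phi(p,u_c)$.

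Next I would compute $\phi(p,u_c)$ for a fixed cell $c$. If $p\notin\SNE(c)$, then for every permutation $\pi$ the coalitions $P(\pi,p)$ and $P(\pi,p)\setminus\{p\}$ have the same intersection with $\SNE(c)$, so $\Delta(u_c,\pi,p)=0$ and $\phi(p,u_c)=0$. If $p\in\SNE(c)$, then $\Delta(u_c,\pi,p)=1$ precisely when $p$ precedes every other element of $\SNE(c)$ in $\pi$, and is $0$ otherwise; since in a uniformly random permutation each element of $\SNE(c)$ is equally likely to be the first among them, $\phi(p,u_c)=\pp_\pi[\,p\text{ first in }\SNE(c)\,]=1/|\SNE(c)|=1/\NE(c)$. (Equivalently: $u_c$ is symmetric in the players of $\SNE(c)$ and every other player is a dummy, while $\sum_q\phi(q,u_c)=u_c(P)=1$.)

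Combining the two displays, and using $p\in\SNE(c)\iff c\subseteq R_p$,
\[
	\phi(p,\V{ar}) ~=~ \sum_{c\in\A,\ p\in\SNE(c)}\frac{\area(c)}{\NE(c)} ~=~ \sum_{c\in\A,\ c\subseteq R_p}\frac{\area(c)}{\NE(c)},
\]
which is the claim; every cell appearing in the sum has $p\in\SNE(c)$, hence $\NE(c)\ge 1$, so no division by zero occurs. The only delicate point is the tiling argument in the first step --- checking that, up to measure zero, the cells covered by $\bigcup_{q\in Q}R_q$ are exactly those $c$ with $Q\cap\SNE(c)\neq\emptyset$ --- but this is forced by the coordinates of the points being grid lines of $\A$; everything afterwards is an immediate consequence of linearity and symmetry.
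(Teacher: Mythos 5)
Your proof is correct and follows essentially the same route as the paper's: decompose $\V{ar}$ into per-cell coverage games, compute the Shapley value of each by the symmetry of the first arrival in $\SNE(c)$, and conclude by linearity. The only difference is cosmetic (you factor out $\area(c)$ and spell out the measure-zero tiling argument that the paper leaves implicit).
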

\begin{proof}
	Each cell $c$ of the arrangement $\A$ defines a game for the set of players $P$ 
	with characteristic function
	\[
		v_c(Q) ~=~\begin{cases}
			\area(c) & \text{if $c\subset R_p$ for some $p\in Q$,}\\
			0 & \text{otherwise.}
			\end{cases}
	\]
	To analyze the Shapley values of the game defined by $v_c$,
	note that, because of symmetry, each point of the set 
	$\SNE(c) = \{p\in P\mid c\subset R_p \}$
	has the same probability of being the first point from $\SNE(c)$ in 
	a random permutation. Therefore 
	\[
		\phi(p,v_c) ~=~ \begin{cases}
					\frac{\area(c)}{\NE(c)}&\text{if $c\subset R_p$,}\\
					0 &\text{otherwise.}
					\end{cases}
	\]
	On the other hand, 
	\[
		\V{ar}(Q) ~=~ \sum_{c\in \A} v_c(Q)~~\text{ for all $Q\subset P$,}
	\]
	and because of linearity of Shapley values we get
	\[
		\phi(p,\V{ar}) ~=~ \sum_{c\in \A} \phi(p,v_c) 
						~=~ \sum_{c\in \A,~c\subset R_p} \phi(p,v_c) 
						~=~ \sum_{c\in \A,~c\subset R_p} \frac{\area(c)}{\NE(c)}.
		\qedhere
	\]
\end{proof}

For each subset $C$ of cells of $\A$, we define 
\[
	\sigma(C) ~=~ \sum_{c\in C} \frac{\area(c)}{\NE(c)} 
\]
(We will only consider sets $C$ of cells with $\NE(c)>0$ for all $c\in C$.)
Note that we want to compute $\sigma(\cdot)$ for the sets of cells contained
in the rectangles $R_{p}$ for all $p\in P$. 

Using standard tools in computational geometry we can compute the values 
$\phi(p,\V{ar})$ for all $p\in P$ in near-quadratic time. 
Here is an overview of the approach.
An explicit computation of $\A$ takes quadratic time in the worst case.
Using standard data structures for orthogonal range searching, see~\cite{Willard85}
or \cite[Chapter 5]{BergCKO08},
we can then compute $\NE(c)$ for each cell $c\in \A$.
Finally, replacing each cell $c$ by a point $q_c\in c$ with weight $w_c=\area(c)/\NE(c)$,
we can reduce the problem of computing $\phi(p,\V{ar})$ to the problem
of computing $\sum_{q_c\in R_p} w_c$, which is again an orthogonal range query.
An alternative is to use dynamic programming across the cells of $\A$ to compute
first $\NE(c)$ and then partial sums of the weights $w_c$. 

Our objective in the following sections is to improve this result using the
correlation between adjacent cells. It could seem at first glance that
segment trees~\cite[Section 10.3]{BergCKO08} may be useful. 
We did not see how to work out the details of this; 
the problem is that the weights are inversely proportional to $\NE(c)$.

\subsection{Handling empty blocks}
\label{sec:blocks}

In the following we assume that we have preprocessed $P$ in $O(n\log n)$ time
such that $\NE(q)$ can be computed in $O(\log n)$ time
for each point $q$ given at query time~\cite{Willard85}.
This is a standard range counting for orthogonal ranges and
the preprocessing has to be done only once.

When a block $B$ is empty, then we can use multipoint evaluation
to obtain the partial sums $\sigma(\cdot)$ for each vertical and
horizontal slab of the block.

\begin{lemma}
\label{le:block}
	Let $B$ be an empty block with $k$ columns and rows.
	We can compute in $O(k \log n )$ time the 
	values $\sigma(C)$ for all slabs $C$ within $B$.
\end{lemma}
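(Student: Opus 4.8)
The plan is to use emptiness of $B=B(i_0,i_1,j_0,j_1)$ to show that, over the cells of $B$, the counter $\NE(\cdot)$ splits into a sum of two one-dimensional functions. For $i\in\{i_0,\dots,i_1\}$ put $g(i)=|\{p\in P:\ x_i\le x(p)<x_{i_1},\ y(p)\ge y_{j_1}\}|$, and for $j\in\{j_0,\dots,j_1\}$ put $h(j)=|\{p\in P:\ x(p)\ge x_{i_1},\ y(p)\ge y_j\}|$. First I would verify that $\NE(c_{i,j})=g(i)+h(j)$ for every $c_{i,j}\in B$: a point $p$ counted by $\NE(c_{i,j})$ has $x(p)\ge x_i$, and is of one of two kinds. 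Either $x(p)\ge x_{i_1}$, in which case the condition $y(p)\ge y_j$ makes it contribute exactly to $h(j)$, independently of $i$; or $x(p)$ has index in $\{i,\dots,i_1-1\}$, in which case emptiness forces the $y$-index of $p$ to lie outside $\{j_0,\dots,j_1-1\}$, so (using $j_0\le j\le j_1$) the condition $y(p)\ge y_j$ is equivalent to $y(p)\ge y_{j_1}$ and $p$ contributes exactly to $g(i)$, independently of $j$. Both $g$ and $h$ are non-increasing, integer valued, and drop by at most one at each step, so the values of $g$ are exactly $\{0,1,\dots,g(i_0)\}$ with $g(i_0)<k$, and the values of $h$ are exactly $\{h(j_1),h(j_1)+1,\dots,h(j_0)\}$ with $h(j_0)-h(j_1)<k$. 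Since $\sigma$ is only defined on cells with $\NE(c)>0$, in particular $\NE(c_{i_1,j_1})=h(j_1)\ge 1$.

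With this decomposition each slab sum becomes a multipoint evaluation of a rational function of exactly the type treated in Lemma~\ref{le:multipoint_evaluation}. For the vertical slab $V(i,B)$,
\[
 \sigma(V(i,B))\ =\ \sum_{j=j_0}^{j_1}\frac{w_i h_j}{g(i)+h(j)}\ =\ w_i\, R(g(i)),
 \qquad R(t)\ =\ \sum_{j=j_0}^{j_1}\frac{h_j}{h(j)+t}.
\]
Grouping the terms of $R$ by the value of $h(j)$ and reindexing gives $R(t)=\sum_{s=0}^{m}b_s/(\Delta+s+t)$ with $\Delta=h(j_1)\ge1$, $m=h(j_0)-h(j_1)=O(k)$, and $b_s=\sum_{j:\,h(j)=h(j_1)+s}h_j$. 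By Lemma~\ref{le:multipoint_evaluation} we evaluate $R$ at all integers $t=0,1,\dots,g(i_0)$ in $O(k\log k)$ time and read off $\sigma(V(i,B))=w_i R(g(i))$ for all $i$. The horizontal slabs are handled symmetrically: $\sigma(H(j,B))=h_j\,R'(h(j))$ with $R'(t)=\sum_{i=i_0}^{i_1}w_i/(g(i)+t)=\sum_{s=0}^{g(i_0)}b'_s/(s+t)$ and $b'_s=\sum_{i:\,g(i)=s}w_i$, evaluated at the $O(k)$ integers in $[h(j_1),h(j_0)]$; here the shift is $0$ but the smallest evaluation point is $h(j_1)\ge1$, so the hypothesis $\ell>-\Delta$ of Lemma~\ref{le:multipoint_evaluation} holds.

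For the bookkeeping, note $h(j)=\NE(c_{i_1,j})$ and $g(i)=\NE(c_{i,j_1})-\NE(c_{i_1,j_1})$, so all the $g(i)$ and $h(j)$ can be obtained with $O(k)$ orthogonal range-counting queries, each in $O(\log n)$ time with the preprocessing already set up, for $O(k\log n)$ total; assembling the coefficient vectors costs $O(k)$, the two calls to Lemma~\ref{le:multipoint_evaluation} cost $O(k\log k)=O(k\log n)$, and writing out the $O(k)$ answers costs $O(k)$. Hence the total is $O(k\log n)$. The step I expect to be the crux is proving the decomposition $\NE(c_{i,j})=g(i)+h(j)$ — this is exactly where emptiness of $B$ enters — together with checking that the resulting parameters ($\Delta$ and the evaluation range) satisfy the positivity requirement of Lemma~\ref{le:multipoint_evaluation}, which is precisely what the convention $\NE(c)>0$ on $B$ provides.
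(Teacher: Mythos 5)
Your proof is correct and follows essentially the same route as the paper: both exploit emptiness to reduce each slab sum to a multipoint evaluation of a rational function via Lemma~\ref{le:multipoint_evaluation}, with $O(k)$ range-counting queries for the bookkeeping. Your additive decomposition $\NE(c_{i,j})=g(i)+h(j)$ is just a cleaner, more explicit formulation of the paper's observation that the column-to-column differences $\delta_i=\NE(c_{i,j})-\NE(c_{i_0,j})$ are independent of $j$ (indeed $\delta_i=g(i)-g(i_0)$), and you additionally verify the positivity hypothesis $\ell>-\Delta$ that the paper leaves implicit.
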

\begin{proof}
	Assume that $B$ is the block $B(i_0,i_1,j_0,j_1)$.
	We only explain how to compute the values $\sigma(V(i,B))$ for all $i_0\le i\le i_1$.
	The computation for the horizontal slabs $\sigma(H(j_0,B)),\dots,\sigma(H(j_1,B))$ is similar.

	We look into the first vertical slab $V(i_0,B)$ and make groups of cells depending on their
	value $\NE(\cdot)$.
	More precisely, for each $\ell$ we define
	$J(\ell)=\{ j\mid j_0\le j \le j_1, ~ \NE(c_{i_0,j})=\ell\}$.
	Let $\ell_0$ and $\ell_1$ be the minimum and the maximum $\ell$ such that $J(\ell)\neq 0$, respectively.	
	
	We set up the following rational function with variable $x$:
	\[
		R(x) ~=~ \sum_{\ell=\ell_0}^{\ell_1} \frac{~\sum_{j\in J(\ell)}h_j~}{\ell+x}.
	\]
	Setting $t= \ell-\ell_0$, $b_t=\sum_{j\in J(\ell_0+t)}h_j$ and $\Delta=\ell_0$, we have
	\[
		R(x) ~=~ \sum_{t=0}^{\ell_1-\ell_0} \frac{b_t}{\Delta+t+x}.
	\]	
	Thus, this is a rational function of the shape considered in Lemma~\ref{le:multipoint_evaluation}
	with $\ell_1-\ell_0\le j_1-j_0+1\le k$ terms.
	The coefficients can be computed in $O(k\log n)$ time
	because we only need the values $h_j$ and $\NE(c_{i_0,j})$ for each $j$.
	These latter values $\NE(\cdot)$ are obtained from range counting queries.
	
	Note that 
	\[
		w_{i_0}\cdot R(0) ~=~ w_{i_0}\cdot ~ \sum_{\ell=\ell_0}^{\ell_1} ~\sum_{j\in J(\ell)} \frac{h_j}{\ell}
		~=~ \sum_{j=j_0}^{j_1} \frac{w_{i_0}h_j}{\NE(c_{i_0,j})} 
		~=~ \sum_{j=j_0}^{j_1} \frac{\area(c_{i_0,j})}{\NE(c_{i_0,j})} ~=~
		\sigma(V(i_0,B)).
	\]
	A similar statement holds for all the other vertical slabs within $B$.
	We make the statement precise in the following.

	Consider two consecutive vertical slabs $V(i,B)$ and $V(i+1,B)$ within  
	the block $B$. 
	Because the block $B$ is empty,
	the difference $\NE(c_{i+1,j})-\NE(c_{i,j})$ is independent of $j$.
	See Figure~\ref{fig:block}.
	It follows that, for each index $i$ with $i_0\le i\le i_1$,
	there is an integer $\delta_i$ such that $\NE(c_{i,j})=\NE(c_{i_0,j})+\delta_i$ 
	for all $j$ with $j_0\le j\le j_1$. 
	Moreover, for each $i$ with $i_0\le i\le i_1$ and each $\ell$ with $\ell_0\le \ell\le \ell_1$,
	the value of $\NE(c_{i,j})$ is constant over all $j\in J(\ell)$.
	Therefore, for each $j\in J(\ell)$ we have $\NE(c_{i,j})= \ell+\delta_i$.
	
	Each value $\delta_i$ can 
	be obtained using that $\delta_i = \NE(c_{i,j_0})-\NE(c_{i_0,j_0})$.
	This means that the values $\delta_{i_0},\dots,\delta_{i_1}$
	can be obtained in $O(k \log n)$ time.
		
	\begin{figure}
		\centering
		\includegraphics[scale=1,page=9]{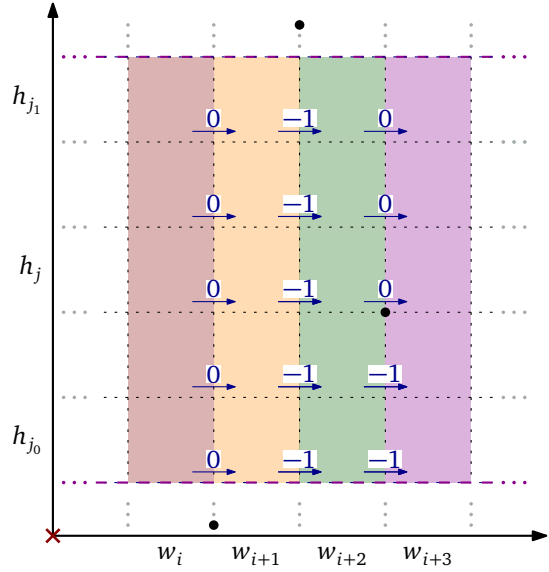}
		\caption{Changes in the values of $\NE(\cdot)$ when passing from 
			a vertical slab to the next one. The rightmost transition
			shows the need to deal with empty blocks for our argument.}
		\label{fig:block}
	\end{figure}

	Now we note that, for each index $i$ with $i_0\le i \le i_1$, we have
	\[
		w_{i}\cdot R(\delta_i) ~=~ w_{i}\cdot ~ \sum_{\ell=\ell_0}^{\ell_1} ~\sum_{j\in J(\ell)} \frac{h_j}{\ell+\delta_i}
		~=~ \sum_{j=j_0}^{j_1} \frac{w_i h_j}{\NE(c_{i,j})} 
		~=~ \sum_{j=j_0}^{j_1} \frac{\area(c_{i,j})}{\NE(c_{i,j})} ~=~
		\sigma(V(i,B)).
	\]
	We use Lemma~\ref{le:multipoint_evaluation} to evaluate 
	the $i_1-i_0+1\le k$ values $R(\delta_i)$, where $i_0\le i \le i_1$, 
	in $O(k\log k)=O(k\log n)$ time. 
	After this, we get each value $\sigma(V(i,B))=w_i\cdot R(\delta_i)$ in constant time.
\end{proof}

\subsection{Chains}
\label{sec:ar_chain}
In this section we consider the case where the points are a chain.
As discussed before, it is enough to consider that $P$ is in the positive
quadrant.
After sorting, we can assume without loss of generality
that the points of $P$ are indexed so that $0<x(p_1)<\dots < x(p_n)$.

We start with the easier case: increasing chains.
The problem is actually an \Airport game in disguise.

\begin{figure}
	\centering
	\includegraphics[page=7]{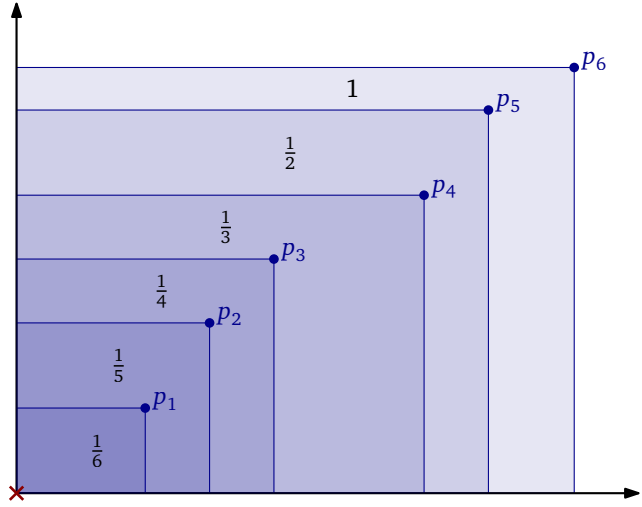}
	\caption{Increasing chain in the positive quadrant. 
		Each region is marked with the multiplicative weight for its area.}
	\label{fig:increasing}
\end{figure}

\begin{lemma}
\label{le:ar_chain1}
	If $P$ is an increasing chain in the positive quadrant, 
	then we can compute the Shapley values of the \AreaAnchoredRectangles 
	game in $O(n\log n)$. 
\end{lemma}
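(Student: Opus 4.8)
The plan is to observe that on an increasing chain the anchored rectangles are nested, so that the \AreaAnchoredRectangles game collapses to an \Airport game, and then invoke Lemma~\ref{le:basicgames}.

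First I would note that, since $P$ is an increasing chain in the positive quadrant, $x(p_i)<x(p_j)$ and $y(p_i)<y(p_j)$ whenever $i<j$, hence $R_{p_i}\subseteq R_{p_j}$. Consequently, for every nonempty $Q\subseteq P$,
\[
  \V{ar}(Q) ~=~ \area\Bigl(\bigcup_{p\in Q}R_p\Bigr) ~=~ \max_{p\in Q}\area(R_p) ~=~ \max_{p\in Q} x(p)\,y(p).
\]
Writing $z_i=x(p_i)\,y(p_i)=\area(R_{p_i})$, general position and positivity of the coordinates make $z_1<z_2<\dots<z_n$ distinct positive reals, and the bijection $p_i\mapsto z_i$ identifies $(P,\V{ar})$ with the \Airport game on $\{z_1,\dots,z_n\}\subset\RR$: one has $\V{ar}(Q)=\V{air}(\{z_p\mid p\in Q\})$ for every $Q$.

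Next I would use that the Shapley value of a player depends only on the abstract coalitional game --- it is the average of $\Delta(v,\pi,p)$ over permutations, which sees $v$ only through its values on subsets --- so $\phi(p_i,\V{ar})=\phi(z_i,\V{air})$. By Lemma~\ref{le:basicgames} (equivalently, by the Littlechild--Owen recurrence, which applies verbatim once the $z_i$ have been sorted) all these values are computable in $O(n\log n)$ time; sorting $P$ itself also costs $O(n\log n)$, giving the claimed bound. As a cross-check consistent with Figure~\ref{fig:increasing}, the same formula also falls out of Lemma~\ref{le:percell}: for an increasing chain $\NE(c_{i,j})=n-\max(i,j)+1$, and grouping the cells contained in $R_{p_m}$ by the value $k=\max(i,j)$ into L-shaped regions of total area $z_k-z_{k-1}$ (with $z_0=0$) gives $\phi(p_m,\V{ar})=\sum_{k=1}^{m}\frac{z_k-z_{k-1}}{n-k+1}$, exactly the \Airport expression. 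There is no genuine obstacle here; the only points needing a line of care are the nestedness of the $R_{p_i}$ and the elementary fact that Shapley values are unchanged by relabelling players along a value-preserving bijection --- which is what lets us import Lemma~\ref{le:basicgames} even though $p_i\mapsto z_i$ is not induced by any isometry of the plane.
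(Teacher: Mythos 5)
Your proposal is correct and is essentially the paper's argument with the emphasis reversed: the paper derives the recurrence $\phi(p_i,\V{ar})=\phi(p_{i-1},\V{ar})+\tfrac{\area(R_{p_i})-\area(R_{p_{i-1}})}{n-i+1}$ directly from Lemma~\ref{le:percell} (using $\NE(c)=n-i+1$ on $R_{p_i}\setminus R_{p_{i-1}}$) and then remarks parenthetically that this is exactly the \Airport game on the abscissae $\area(R_{p_i})$, which is the reduction you make primary and justify via the nestedness of the rectangles. Both routes produce the same formula and the same $O(n\log n)$ bound, and your observation that a value-preserving relabelling of players (rather than an isometry) suffices to import Lemma~\ref{le:basicgames} is correct.
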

\begin{proof}
	Set the point $p_0$ to be the origin $o$.
	Note that, for each $i\in [n]$ and each cell $c\in \A$ contained 
	$R_{p_i}\setminus R_{p_{i-1}}$ we have $\NE(c)= n-i+1$.
	For each $i\in [n]$, define the value 
	\[
		z_i ~=~ \frac{\area(R_{p_i})-\area(R_{p_{i-1}})}{n-i+1}.
	\]
	Thus, $z_i$ is the area of the region $R_{p_i}\setminus R_{p_{i-1}}$ divided
	by $\NE(c)$, for some $c\subset R_{p_i}\setminus R_{p_{i-1}}$.
	See Figure~\ref{fig:increasing}.
	Because of Lemma~\ref{le:percell}, we have for
	each $i\in [n]$
	\[
		\phi(p_i,\V{ar}) ~=~ \sum_{c\in \A, ~ c\subset R_{p_i}} \frac{\area(c)}{\NE(c)} 
			~=~ \sum_{j\le i} z_j.
	\]
	Therefore $\phi(p_1,\V{ar})=z_1$ and, for each $i\in [n]\setminus \{1\}$,
	we have $\phi(p_i,\V{ar})=z_i+\phi(p_{i-1},\V{ar})$.
	The result follows.
	
	(Actually this game is just the $1$-dimensional \Airport game if each point $p_i$ 
	is represented on the real line with the point with $x$-coordinate $\area(R_{p_i})$.)
\end{proof}

It remains the more interesting case, when the chain is decreasing.
See Figure~\ref{fig:decreasing1} for an example.
It is straightforward to see that, if $i+j> n+1$, then $\NE(c_{i,j})=0$,
and if $i+j\le n+1$, we have $\NE(c_{i,j})=n+2-i-j$.
So in this case we do not really need data structures to obtain $\NE(c_{i,j})$ efficiently.
(The proof of Lemma~\ref{le:block} can be slightly simplified for this case because $J(\ell)$
contains a single element due to the special structure of the values $\NE(c)$.
See Figure~\ref{fig:decreasing2}.)

\begin{figure}
	\centering
	\includegraphics[width=\textwidth,page=2]{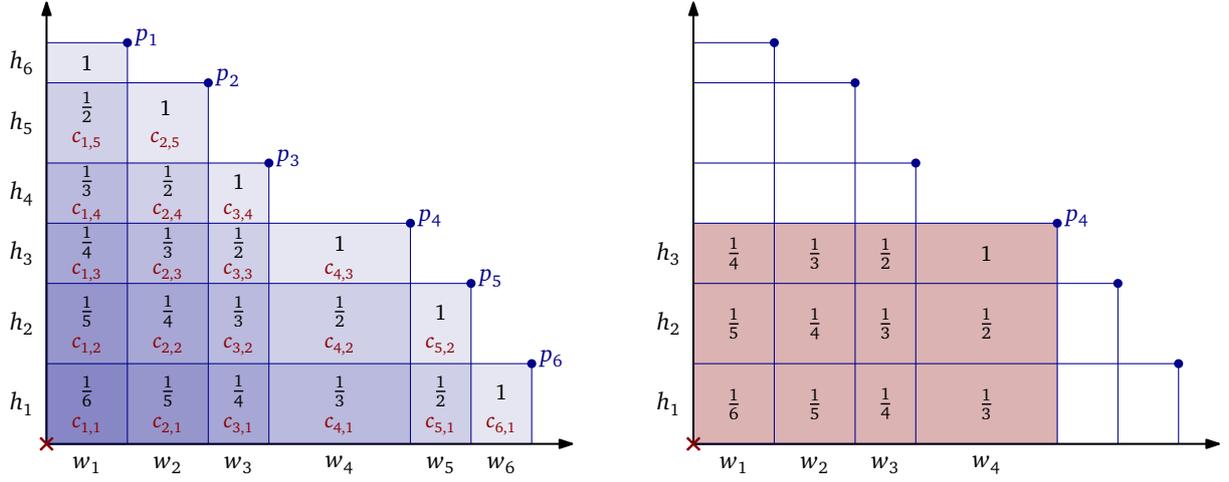}
	\caption{Decreasing chain. Left: Each cell $c_{i,j}$ is marked with the multiplicative weight 
		$\tfrac{1}{\NE(c)}$ for its area.
		Right: the cells whose contribution we have to add for the point $p_4$.}
	\label{fig:decreasing1}
\end{figure}

\begin{figure}
	\centering
	\includegraphics[page=6]{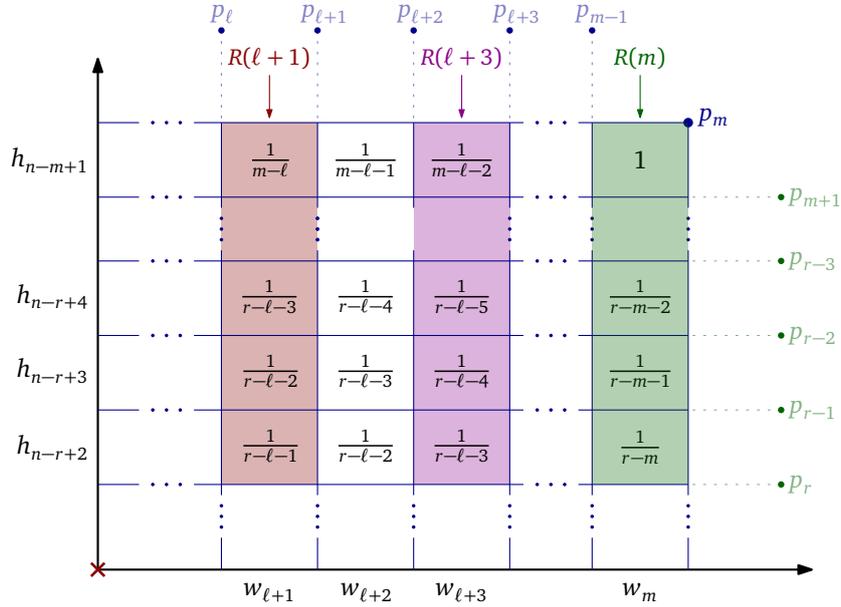}
	\caption{Values $1/\NE(c)$ in a decreasing chain.}
	\label{fig:decreasing2}
\end{figure}

We use a divide-and-conquer paradigm considering certain empty blocks
defined by two indices $\ell$ and $r$, where $\ell< r$.
Since the indexing of rows is not the most convenient in this case,
it is better to introduce the notation $B_{\ell,r}$ 
for the block $B(\ell+1,m,n-r+2,n-m+1)$, 
where $m=m(\ell,r)=\lfloor(\ell+r)/2\rfloor$.
Initially we will have $\ell=0$ and $r=n+1$, which means
that we start with the block $B_{0,n+1}=B(1,m,1,m)$. 
See Figure~\ref{fig:base_case} for the base case,
and Figure~\ref{fig:general_case} for a generic case.
For each block $B_{\ell,r}$,
we will compute $\sigma(C)$ for each slab $C$ within $B_{\ell,r}$.
The blocks can be used to split the problem into two smaller
subchains, and the interaction between them in encoded by the
slabs within the block.
This approach leads to the following result.

\begin{figure}
	\centering
	\includegraphics[page=3,scale=.95]{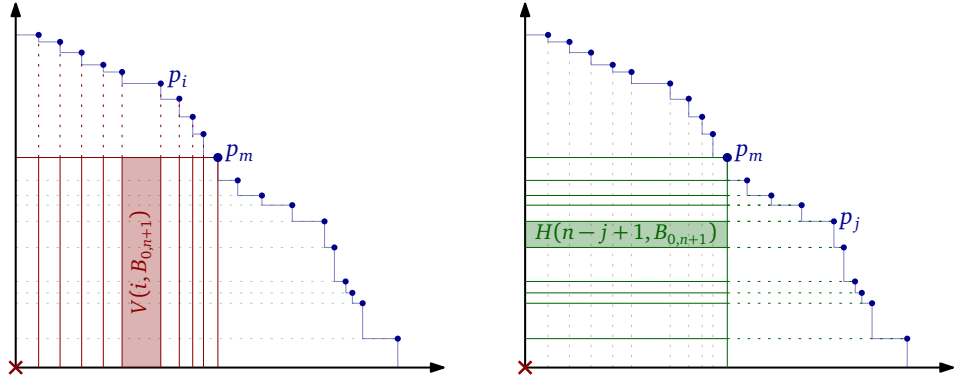}
	\caption{Vertical and horizontal slabs for the base case,
		where we consider the block $B_{0,n+1}$.}
	\label{fig:base_case}
\end{figure}
\begin{figure}
	\centering
	\includegraphics[page=4,scale=.95]{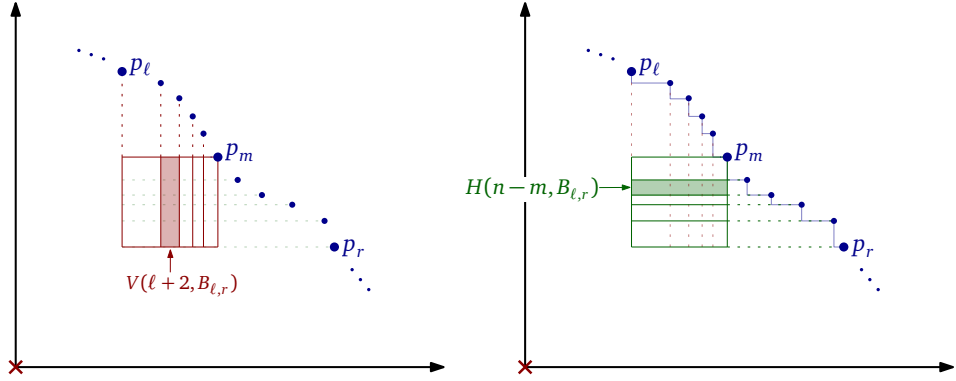}
	\caption{Vertical and horizontal slabs for the divide-and-conquer (general case).}
	\label{fig:general_case}
\end{figure}

\begin{lemma}
\label{le:ar_chain2}
	If $P$ is a decreasing chain with $n$ points in the positive quadrant, 
	then we can compute the Shapley values of 
	the \AreaAnchoredRectangles game in $O(n\log^2 n)$ time.	
\end{lemma}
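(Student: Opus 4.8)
The plan is to carry out the divide-and-conquer around the empty blocks $B_{\ell,r}$. For a decreasing chain we have the closed form $\NE(c_{i,j})=n+2-i-j$ when $i+j\le n+1$ and $\NE(c_{i,j})=0$ otherwise (Figure~\ref{fig:decreasing2}), so no range-counting structure is needed and the preprocessing required by Lemma~\ref{le:block} is immediate. Index the points so that $x(p_1)<\dots<x(p_n)$; then $p_i$ has the $i$-th smallest $x$-coordinate and the $i$-th largest $y$-coordinate, so $R_{p_i}$ is made up of exactly the cells $c_{a,b}$ with $a\le i$ and $b\le n+1-i$, and these automatically satisfy $a+b\le n+1$. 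Hence Lemma~\ref{le:percell} reads
\[
	\phi(p_i,\V{ar}) ~=~ \sum_{a=1}^{i}\sum_{b=1}^{n+1-i} \frac{\area(c_{a,b})}{\NE(c_{a,b})},
\]
which is a two-dimensional prefix sum of the weights $\area(c_{a,b})/\NE(c_{a,b})$ sampled along the anti-diagonal $a+b=n+1$.

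I would compute all these prefix sums with a recursion $\mathrm{Solve}(\ell,r)$, defined for $0\le \ell < r\le n+1$, that returns for every $i\in\{\ell+1,\dots,r-1\}$ the quantity $\Phi_{\ell,r}(i)=\sum_{a=\ell+1}^{i}\sum_{b=n-r+2}^{n+1-i}\area(c_{a,b})/\NE(c_{a,b})$ --- the part of the sum above coming from the sub-triangle $\{c_{a,b}: a\ge \ell+1,\ b\ge n-r+2,\ a+b\le n+1\}$ --- so that $\mathrm{Solve}(0,n+1)$ returns $\Phi_{0,n+1}(i)=\phi(p_i,\V{ar})$. At a node $(\ell,r)$, set $m=\lfloor(\ell+r)/2\rfloor$ and form $B_{\ell,r}=B(\ell+1,m,n-r+2,n-m+1)$. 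This is an empty block: its horizontal extent is $[x_\ell,x_m]$, the points of $P$ lying strictly inside that strip are $p_{\ell+1},\dots,p_{m-1}$, and their $y$-coordinates $y_{n-m+2},\dots,y_{n-\ell}$ all exceed the top ordinate $y_{n-m+1}$ of $B_{\ell,r}$; moreover every cell $c_{i,j}\in B_{\ell,r}$ satisfies $i+j\le m+(n-m+1)=n+1$, so $\NE(c_{i,j})\ge 1$. Since $B_{\ell,r}$ has $(m-\ell)+(r-m)=r-\ell$ columns and rows in total, Lemma~\ref{le:block} computes $\sigma(C)$ for all slabs $C$ within $B_{\ell,r}$ in $O((r-\ell)\log n)$ time; from these values we build in $O(r-\ell)$ extra time the prefix sums $V^+(i)=\sum_{a=\ell+1}^{i}\sigma(V(a,B_{\ell,r}))$ and $H^+(j)=\sum_{b=n-r+2}^{j}\sigma(H(b,B_{\ell,r}))$.

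The combination step splits the sub-triangle at column $m$ (see Figure~\ref{fig:general_case}). If $i\le m$, then $n+1-i\ge n-m+1$, so cutting the rows of $R_{p_i}$ at height $y_{n-m+1}$ writes $\Phi_{\ell,r}(i)$ as the contribution of the vertical slabs of $B_{\ell,r}$ with column index at most $i$, which equals $V^+(i)$, plus the contribution of the cells $c_{a,b}$ with $a\le i$ and $n-m+2\le b\le n+1-i$, which is exactly $\Phi_{\ell,m}(i)$ (an empty sum, hence $0$, when $i=m$). Symmetrically, if $i\ge m+1$ then $n+1-i\le n-m$ and $\Phi_{\ell,r}(i)=H^+(n+1-i)+\Phi_{m,r}(i)$. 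Thus $\mathrm{Solve}(\ell,r)$ outputs $\Phi_{\ell,r}(m)=V^+(m)=\sigma(B_{\ell,r})$ directly, recurses on $\mathrm{Solve}(\ell,m)$ and $\mathrm{Solve}(m,r)$, and then adds $V^+(i)$ to the value returned for each $i\in\{\ell+1,\dots,m-1\}$ and $H^+(n+1-i)$ to the value returned for each $i\in\{m+1,\dots,r-1\}$; the base cases $r-\ell\le 2$ fall out of the same formulas with empty sums read as $0$. For the running time, the recursion has depth $O(\log n)$ because $r-\ell$ roughly halves at each step, and at a fixed level the intervals $(\ell,r)$ tile $[0,n+1]$, so the quantities $r-\ell$ over the nodes at that level sum to $n+1$; by Lemma~\ref{le:block} the work at each level is therefore $O(n\log n)$, and the total is $O(n\log^2 n)$. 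The step that needs the most care is verifying that this decomposition is exact: that the vertical (resp.\ horizontal) slabs of $B_{\ell,r}$ lying weakly left of (resp.\ weakly below) the query point $(i,n+1-i)$, together with the sub-triangle handed to the corresponding recursive call, partition the cells of $R_{p_i}$ inside the current sub-triangle with no overlap and no omission --- and, relatedly, that $B_{\ell,r}$ is genuinely an empty block so that Lemma~\ref{le:block} can be invoked.
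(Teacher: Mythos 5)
Your proposal is correct and follows essentially the same route as the paper: the same dyadic empty blocks $B_{\ell,r}=B(\ell+1,m,n-r+2,n-m+1)$, Lemma~\ref{le:block} plus prefix sums over their slabs, and the decomposition of each $R_{p_i}$ into $O(\log n)$ slab unions, giving $O(n\log^2 n)$ overall. Organizing this as a recursion $\mathrm{Solve}(\ell,r)$ rather than enumerating the dyadic intervals and assembling per point is only a presentational difference.
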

\begin{proof}
	We assume for simplicity that $n+1$ is a power of $2$. Otherwise
	we replace each appearance of an index larger than $n+1$ by $n+1$.
	
	Let $\I$ be the set of pairs $(\ell,r)$ 
	defined by $\ell=\alpha 2^\beta$ and $r=(\alpha+1)2^\beta \le n+1$,
	where $\alpha$ and $\beta$ are non-negative integers. 
	For each $(\ell,r)\in \I$, we compute the values 
	$\sigma(V(i,B_{\ell,r}))$ for all relevant indices $i$ and 
	the values $\sigma(H(j,B_{\ell,r}))$ for all relevant $j$ 
	using Lemma~\ref{le:block}. This takes $O((r-\ell) \log n)$ time.
	The pairs $(\ell,r)$ of $\I$ can also be interpreted as intervals.
	Since intervals of $\I$ with the same length are disjoint and there are $O(\log n)$
	different possible lengths of intervals in $\I$, the computation over all intervals of $\I$
	and all indices takes $O(n \log^2 n)$ time.
	
	For each $(\ell,r)\in \I$ we compute some additional prefix sums of columns
	and of rows, as follows.
	Assume that $B_{\ell,r}=B(i_0,i_1,j_0,j_1)$.
	For each $i$ with $i_0\le i\le i_1$, we define 
	$V_{\le}(i,B_{\ell,r})= \bigcup_{i_0\le i'\le i} V(i',B_{\ell,r})$.
	For each $j$ with $j_0\le j\le j_1$, we define 
	$H_{\le}(j,B_{\ell,r})= \bigcup_{j_0\le j'< j} H(j',B_{\ell,r})$.
	Using that 
	\begin{align*}
		\sigma(V_{\le}(i,B_{\ell,r}))~&=~\sigma(V_{\le}(i-1,B_{\ell,r}))+\sigma(V(i,B_{\ell,r})) 
				~~\text{for all $i$ with $i_0< i\le i_1$,}
	\end{align*}
	and a similar relation for $\sigma(H_{\le}(j,B_{\ell,r}))$,
	we can compute the values $\sigma(V_{\le}(i,B_{\ell,r}))$ and $\sigma(H_{\le}(j,B_{\ell,r}))$ 
	for all $i$ and $j$ in $O(\ell-r)$ time. In total, we spend $O(n\log n)$ time
	over all pairs $(\ell,r)$ of $\I$.
	
	Consider now a point $p_a$ of $P$. We can express the rectangle $R_{p_a}$
	as the union of $O(\log n)$ rectangles for which we have computed the relevant
	partial sums. See Figure~\ref{fig:ortho_hull5} for the intuition.
	More precisely, let $\I(a)$ be the pairs $(\ell,r)$ of $\I$ with $\ell< a < r$. 
	For each pair $(\ell,r)$ of $\I(a)$, we set 
	\[
		X(a,B_{\ell,r}) ~=~ \begin{cases}
				V_{\le}(a,B_{\ell,r}) & \text{if $a\le m(\ell,r)$,}\\
				H_{\le}(n-a+1,B_{\ell,r}) & \text{if $a> m(\ell,r)$.}
			\end{cases}
	\]
	Then $R_{p_a}$ is the (the closure of the) disjoint union of the cells in $X(a,B_{\ell,r})$, 
	where $(\ell,r)$ iterates over $\I(a)$.
	It follows that 
	\[
		\sum_{(\ell,r)\in \I(a)} \sigma(X(a,B_{(\ell,r)})) 
			~=~ \sigma(\{ c\in \A\mid c \subset R_{p_a} \} ~=~
		\sum_{c\in \A,~ c \subset R_{p_a}} \frac{\area(c)}{\NE(c)} ~=~ \phi(p_a,\V{ar}),
	\]
	where in the last equality we have used Lemma~\ref{le:percell}.
	\begin{figure}
		\centering
		\includegraphics[page=5]{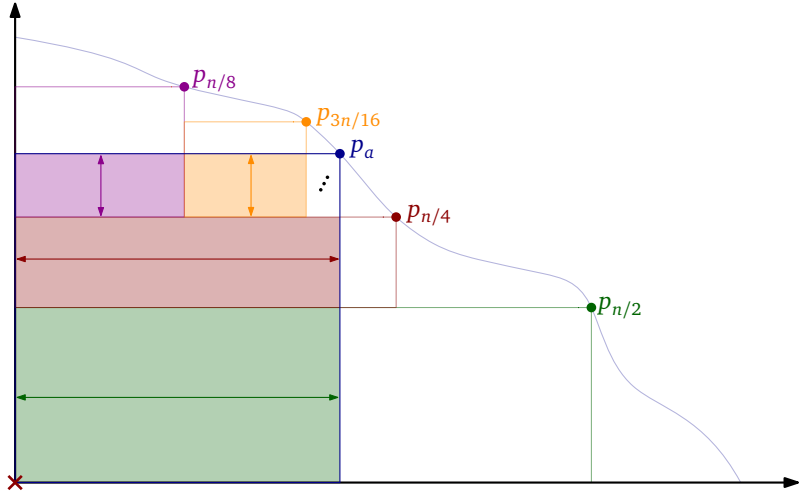}
		\caption{Expressing $R_{p_a}$ as the union of $O(\log n)$ rectangles
			of the form $X(a,B_{\ell,r})$. In this example,
			$3n/16 < a< n/4$.}
		\label{fig:ortho_hull5}
	\end{figure}
	
	For each point $p_a$ of $P$, we can compute the set $\I(a)$ 
	and then the sum $\sum_{(\ell,r)\in \I(a)} \sigma(X(a,B_{\ell,r}))
	=\phi(p_a,\V{ar})$ in $O(\log n)$ time.
	Over all points of $P$ we spend $O(n\log n)$ time in this last computation.
\end{proof}

When the point set is a chain over different quadrants, we can reduce
it to a few problems over the positive quadrant and obtain the following.
\begin{theorem}
\label{thm:ar_chain}
	If $P$ is a chain with $n$ points, then we can compute the Shapley values of 
	the \AreaAnchoredRectangles game in $O(n\log^2 n)$ time.
\end{theorem}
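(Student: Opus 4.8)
The plan is to reduce an arbitrary chain to the two positive-quadrant cases already handled, namely Lemma~\ref{le:ar_chain1} for increasing chains and Lemma~\ref{le:ar_chain2} for decreasing chains, using linearity of Shapley values, the vanishing contribution of dummy players, and isometric equivalence.

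First I would partition $P$ according to the open quadrant containing each point, writing $P = P_1\cup P_2\cup P_3\cup P_4$, and discard every point lying on a coordinate axis: for such a point $p$ the rectangle $R_p$ is degenerate with zero area, so $p$ is a dummy player for $\V{ar}$ and has Shapley value $0$. For any coalition $Q$ the rectangles $R_p$ attached to points in distinct quadrants overlap only along the axes, hence $\V{ar}(Q) = \sum_{k=1}^{4} v_k(Q)$ where $v_k(Q) = \area\bigl(\bigcup_{p\in Q\cap P_k} R_p\bigr)$. By linearity, $\phi(p,\V{ar}) = \sum_k \phi(p,v_k)$. If $p\notin P_k$ then $v_k(S\cup\{p\}) = v_k(S)$ for every $S$, so $p$ is a dummy in $(P,v_k)$ and $\phi(p,v_k)=0$; therefore $\phi(p,\V{ar}) = \phi(p,v_{k(p)})$, where $k(p)$ is the quadrant of $p$. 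Since, moreover, every player of $P\setminus P_k$ is a dummy in $(P,v_k)$, the value $\phi(p,v_k)$ for $p\in P_k$ equals the Shapley value of $p$ in the game obtained by restricting $v_k$ to the players $P_k$; this is the standard invariance of Shapley values under deletion of dummy players, which also follows directly by writing the Shapley coefficients in the form $\int_0^1 x^{s}(1-x)^{n-s-1}\,dx$ and expanding $\sum_u \binom{d}{u} x^u (1-x)^{d-u} = 1$.

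Next I would move each $P_k$ into the positive quadrant. Reflection through a coordinate axis is an isometry that fixes the origin and sends $R_p$ to $R_{\rho(p)}$, so the restricted game on $P_k$ is isometrically equivalent to the \AreaAnchoredRectangles game on $\rho(P_k)$, and $\phi(\rho(p),\V{ar}) = \phi(p,v_k)$ for each $p\in P_k$. A subset of a chain is again a chain, and reflection through an axis interchanges increasing and decreasing chains, so each $\rho(P_k)$ is an increasing or a decreasing chain lying in the positive quadrant (with general position preserved). Running Lemma~\ref{le:ar_chain1} on it when it is increasing and Lemma~\ref{le:ar_chain2} when it is decreasing computes all the Shapley values of that game, hence all $\phi(p,\V{ar})$ with $p\in P_k$, in $O(|P_k|\log^2 |P_k|)$ time. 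Adding up over the at most four quadrants gives the claimed $O(n\log^2 n)$ running time.

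The only part that needs genuine care is the reduction to a single quadrant: one should verify that the quadrant decomposition of $\bigcup_{p\in Q} R_p$ is disjoint up to a set of measure zero, and that restricting each $v_k$ to its quadrant does not disturb the Shapley values of the remaining points. Beyond that, the statement follows by combining Lemma~\ref{le:ar_chain1}, Lemma~\ref{le:ar_chain2}, and isometric equivalence, and I expect no further obstacle.
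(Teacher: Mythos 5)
Your proposal is correct and follows essentially the same route as the paper: decompose by quadrant (recall Figure~\ref{fig:reduction1}), reflect each quadrant onto the positive quadrant, note that reflections may swap increasing and decreasing chains, and invoke Lemma~\ref{le:ar_chain1} and Lemma~\ref{le:ar_chain2}. The only difference is that you spell out the justification the paper leaves implicit (linearity of Shapley values, dummy players in other quadrants, and invariance under restriction to the support of each $v_k$), which is a sound elaboration rather than a different argument.
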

\begin{proof}
	If the point set $P$ is a chain, then we get a chain in each quadrant.
	As pointed out before (recall Figure\ref{fig:reduction1}), 
	we can treat each quadrant independently.
	Using reflections, we can transform the instance in any quadrant to the positive quadrant.
	(Note this may change the increasing/decreasing character of the chains.
	For example, an increasing chain in the northwest quadrant gets
	reflected into a decreasing chain in the positive quadrant.)
	Because of Lemma~\ref{le:ar_chain1} and Lemma~\ref{le:ar_chain2},
	we can compute the Shapley values of the \AreaAnchoredRectangles game for 
	each quadrant in $O(n\log^2 n)$ time.
	The result follows. 
\end{proof}

\subsection{General point sets}
\label{sec:ar_general}
We consider now the general case; thus the points do not form a chain.
See Figure~\ref{fig:ortho_hull1} for an example.
Like before, we restrict the discussion to the case where $P$ is in the positive quadrant.
In this case, $\NE(c_{i,j})$ is not a simple expression of $i,j$ anymore,
As mentioned in Section~\ref{sec:blocks}, we preprocess $P$ in $O(n\log n)$ time
such that $\NE(q)$ can be computed in $O(\log n)$ time~\cite{Willard85}.

In this scenario we consider horizontal bands.
A horizontal \DEF{band} $B$ is the block between two horizontal lines.
Thus, $B=\{ c_{i,j}\mid j_0\le j \le j_1\}$ for some indices $1\le j_0 \le j_1\le n$.
See Figure~\ref{fig:ortho_hull8}.
We keep using the notation introduced for blocks.
Thus, for each $i\in [n]$, let $V(i,B)$ be the vertical slab with the cells $c_{i,j}\in B$.
Let $P_B$ be the points of $P$ that are the top-right corner of some cell of $B$.
We use $k_B=|P_B|$. Because of our assumption on general position,
$k_B=j_1-j_0+1$ and thus $k_B$ is precisely the number of horizontal slabs in the band $B$.
Furthermore, for each point $p\in P_B$ we define the rectangle $R(p,B)$ as the cells of $B$
to the left and bottom of $p$. Formally $R(p,B)=\{ c\in B\mid c\subset R_p\}$.

\begin{figure}
	\centering
	\includegraphics[width=\textwidth,page=8]{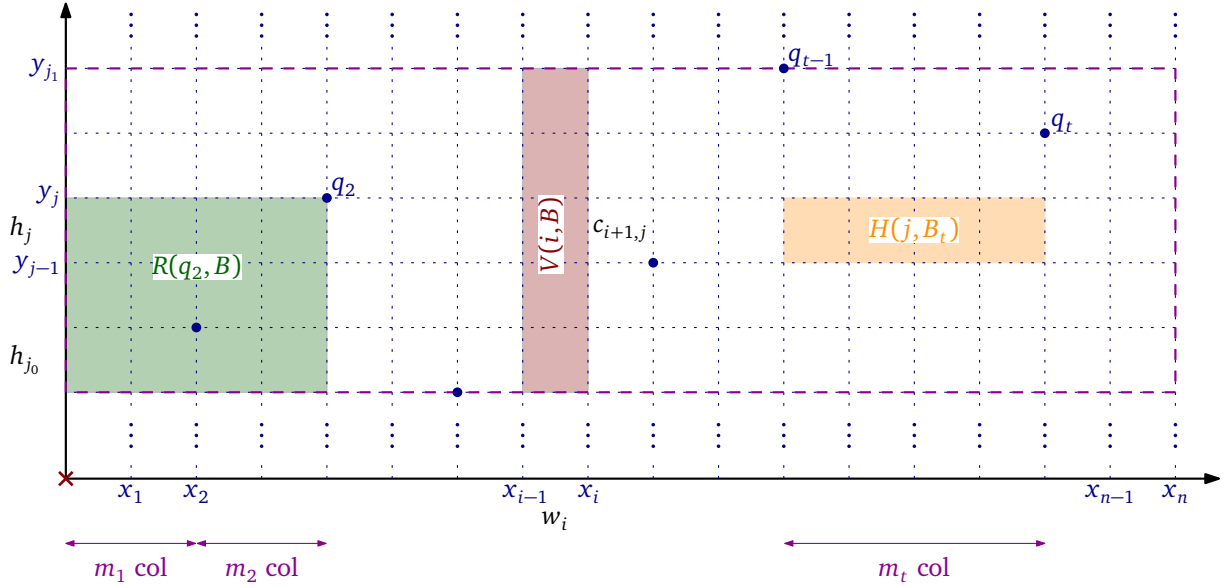}
	\caption{Notation for a band $B$.}
	\label{fig:ortho_hull8}
\end{figure}
	
\begin{lemma}
\label{le:ar_band}
	For a band $B$ with $k_B$ rows we can compute in $O(((k_B)^2+ n)\log n)$ time
	$\sigma(V(i,B))$, for all $i\in [n]$, and
	$\sigma(R(p,B))$, for all $p\in P_B$.	
\end{lemma}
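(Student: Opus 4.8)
The plan is to cut the band $B$ into $O(k_B)$ \emph{empty} vertical sub-blocks, apply Lemma~\ref{le:block} to each of them, and then reassemble $\sigma(V(i,B))$ and $\sigma(R(p,B))$ by prefix sums. First I would set up the decomposition. Let $X_B=\{x(p)\mid p\in P_B\}$, a set of $k_B$ distinct $x$-coordinates. Cutting the column range $1,\dots,n$ at every coordinate of $X_B$ splits it into at most $k_B+1$ maximal groups of consecutive columns; for a group $[a,b]$ let $B'=B(a,b,j_0,j_1)$ be the sub-block it spans, so $B'$ retains all $k_B$ rows of $B$. The key observation is that every such $B'$ is empty: any point of $P$ in the interior of the union of the cells of $B'$ would have to have its $y$-coordinate among $y_{j_0},\dots,y_{j_1-1}$ (hence belong to $P_B$) and its $x$-coordinate strictly inside the column range of $B'$; but every point of $P_B$ has its $x$-coordinate in $X_B$, which by construction is a cutting coordinate and therefore a boundary of $B'$, not an interior coordinate. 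In particular each point of $P_B$ lies on the right boundary of exactly one sub-block.

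Next I would invoke Lemma~\ref{le:block} on each sub-block $B'$: if $B'$ has $k'$ columns, then in $O((k'+k_B)\log n)$ time we obtain $\sigma(V(i,B'))$ for every column $i$ of $B'$ and $\sigma(H(j,B'))$ for every $j\in\{j_0,\dots,j_1\}$. Because $B'$ keeps all rows of $B$ we have $V(i,B')=V(i,B)$, so this already yields $\sigma(V(i,B))$ for all $i\in[n]$. Since the column counts $k'$ over all sub-blocks sum to $n$ and there are $O(k_B)$ sub-blocks, the total cost of these calls is $O((n+k_B^2)\log n)$.

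For the quantities $\sigma(R(p,B))$ I would enumerate the sub-blocks from left to right as $B'_1,\dots,B'_m$ with $m\le k_B+1$, and for each $B'_t$ form the row-prefix sums $\sigma_{\le}(j,B'_t)=\sum_{j_0\le j'\le j}\sigma(H(j',B'_t))$ for $j_0\le j\le j_1$ in $O(k_B)$ time, $O(k_B^2)$ overall. If $p\in P_B$ has $y(p)=y_j$ and lies on the right boundary of $B'_t$, then $R(p,B)$ is exactly the set of cells of $B'_1\cup\cdots\cup B'_t$ lying in rows $j_0,\dots,j$, so $\sigma(R(p,B))=\sum_{1\le u\le t}\sigma_{\le}(j,B'_u)$. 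All of these values are obtained by a single left-to-right sweep over the sub-blocks that maintains a running array indexed by the $k_B$ rows: at step $u$ we add $\sigma_{\le}(\cdot,B'_u)$ to the array and, if some $p\in P_B$ sits on the right boundary of $B'_u$, read off the entry for its row. This costs $O(k_B^2)$. Combined with $O(n\log n)$ for sorting, for locating the split coordinates and the relevant ranks, and with the already-assumed range-counting preprocessing, the total is $O((k_B^2+n)\log n)$.

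The step that needs genuine care is the structural claim in the first paragraph — that cutting exactly at the $x$-coordinates of $P_B$ makes every resulting block empty, so that Lemma~\ref{le:block} is applicable and its $\NE$-differences are column-independent within each piece. Everything after that is prefix-sum bookkeeping, and the running time works out only because the $O(k_B)$ sub-blocks have column counts summing to $n$ while each sub-block contributes merely an additional $O(k_B\log n)$ overhead.
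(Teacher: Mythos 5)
Your proposal is correct and follows essentially the same route as the paper: cut the band into at most $k_B+1$ empty sub-blocks by the vertical lines through the points of $P_B$, invoke Lemma~\ref{le:block} on each (total $O((k_B^2+n)\log n)$ since the column counts sum to $n$), and assemble $\sigma(R(p,B))$ by row-prefix sums combined with a left-to-right accumulation over the sub-blocks in $O(k_B^2)$ time. Your explicit argument for why each sub-block is empty is a detail the paper leaves as "by construction," but the decomposition and bookkeeping are the same.
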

\begin{proof}
	Assume that $B$ is defined by the row indices $j_0\le j_1$.
	Let $q_1,\dots, q_{k_B}$ be the points of $P_B$ sorted by increasing $x$ coordinate.
	Take also $q_0$ as a point on the $y$-axis and $q_{k_B+1}$ as a point on the right boundary.
	For each $t$ denote by $m_t$ the number of vertical slabs between the vertical lines through 
	$q_{t-1}$ and $q_t$.
	See Figure~\ref{fig:ortho_hull8}.
	
	We divide the band $B$ into blocks $B_1,B_2,\dots$ 
	using the vertical lines though the points of $P_B$.
	We get $k_B+1$ blocks (or $k_B$ if the rightmost point of $P$ belongs to $P_B$),
	and each of them is empty by construction.
	Since the block $B_t$ has $m_t$ vertical slabs and $k_B$ horizontal slabs
	we can compute $\sigma(V(\cdot,B_t))$ and $\sigma(H(\cdot,B_t))$ for all 
	the slabs within $B_t$ in $O((k_B + m_t) \log n)$ time 
	using Lemma~\ref{le:block}.
	Using that the $O(k_B)$ blocks $B_1,B_2,\dots$ are pairwise disjoint,
	which means that $\sum_t m_t = n$,
	we conclude that in   
	$O(((k_B)^2+ n)\log n)$ time
	we can compute all the values $\sigma(V(\cdot,B_t))$ 
	and $\sigma(H(\cdot,B_t))$ for all slabs within all blocks $B_t$.
	
	Since for each vertical slab $V(i,B)$ of $B$ there is one block $B_{t(i)}$
	that covers it, we then have $\sigma(V(i,B))=\sigma(V(i,B_{t(i)}))$ for such index $t(i)$.
	This means that we have already computed the values $\sigma(V(i,B))$ for all $i$.
	
	Now we explain the computation of $\sigma(R(p,B))$ for all $p\in P_B$.
	Note that within each block $B_t$ we have computed $\sigma(H(j,B_t))$ for all $j$.
	With this information we can compute the values $\sigma(R(p,B))$.
	Namely, for each block $B_t$ and each $j$ with $j_0\le j\le j_1$,
	we define 
	\[
		H_\le(j,B_t)) ~=~ \bigcup_{j_0 \le j'\le j}~~ \bigcup_{1\le t' \le t} H(j',B_{t'}).
	\]
	Using that 
	\begin{align*}
		 \forall j \text{ with }j_0 < j \le j_1:& ~~~ \sigma\left(H_\le(j,B_t)\right) ~=~
			\sigma( H_\le(j-1,B_t)) + \sigma(H(j,B_t)),\\
		\forall t \text{ with } 1< t\le k_B+1:& ~~~ \sigma\left(H_\le(j_0,B_t)\right) ~=~
			\sigma( H_\le(j_0,B_{t-1})) + \sigma(H(j_0,B_t)),
	\end{align*}
	we compute all the values $\sigma(H_\le(j,B_t))$ in $O((k_B)^2)$ time.
	Since $R(p,B)=H_\le(j(t),B_{t(p}))$ for some indices $j(p)$ and $t(p)$ 
	that we can easily obtain, the lemma follows.
\end{proof}

\begin{theorem}
\label{thm:ar_general}
	The Shapley values of the \AreaAnchoredRectangles game for $n$ points can be computed 
	in $O(n^{3/2} \log n)$ time.	
\end{theorem}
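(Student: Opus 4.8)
The plan is to partition the horizontal slabs into $\Theta(\sqrt n)$ bands of $\Theta(\sqrt n)$ rows each, apply Lemma~\ref{le:ar_band} to every band, and then assemble $\phi(p,\V{ar})$ for each $p\in P$ from $O(\sqrt n)$ precomputed partial sums. Concretely, we first do the $O(n\log n)$ preprocessing for orthogonal range counting and sort the coordinates, and then split the row indices $1,\dots,n$ into consecutive bands $B^{(1)},\dots,B^{(s)}$ with $s=\Theta(\sqrt n)$, each having $k_B=\Theta(\sqrt n)$ rows (recall that by general position the number of rows of a band equals $|P_B|$). For each band $B$ we invoke Lemma~\ref{le:ar_band}: since $k_B=O(\sqrt n)$ we have $(k_B)^2=O(n)$, so this costs $O(((k_B)^2+n)\log n)=O(n\log n)$ per band and $O(n^{3/2}\log n)$ over all bands, and it yields $\sigma(V(i,B))$ for every $i\in[n]$ together with $\sigma(R(p,B))$ for every $p\in P_B$.

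Next we turn these per-band quantities into prefix sums: for each band $B$ compute, in $O(n)$ time, the values $\Sigma(i,B)=\sum_{i'\le i}\sigma(V(i',B))$ for all $i\in[n]$; over all $O(\sqrt n)$ bands this is $O(n^{3/2})$ time. The key geometric observation is the decomposition of $R_p$ for a point $p\in P$: if $p$ has $y$-rank $j$ and $x$-rank $i_p$, and $B^\star$ is the unique band whose row range contains $j$, then $p\in P_{B^\star}$, the cells of $R_p$ lying in $B^\star$ form exactly $R(p,B^\star)$, every band above $B^\star$ is disjoint from $R_p$ (its rows exceed $j$), and every band $B'$ strictly below $B^\star$ contributes exactly the cells $\{c_{i',j'}\in B'\mid i'\le i_p\}=\bigcup_{i'\le i_p}V(i',B')$. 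Hence, using Lemma~\ref{le:percell},
\[
	\phi(p,\V{ar}) ~=~ \sigma(R(p,B^\star)) \;+\; \sum_{B' \text{ below } B^\star}\Sigma(i_p,B').
\]
For each of the $n$ points this sum has $O(\sqrt n)$ terms, each an $O(1)$ table lookup after locating $p$'s band, so the assembly costs $O(n^{3/2})$ in total.

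Adding everything up, the running time is dominated by the $O(n^{3/2}\log n)$ spent in the band computations of Lemma~\ref{le:ar_band}, which proves the theorem. We expect the only delicate points to be purely combinatorial bookkeeping: verifying that each point lands in the band for which Lemma~\ref{le:ar_band} has already produced its $R(p,\cdot)$ value (this is where general position, i.e. no two equal $y$-coordinates, is used) and checking the $s^2$-versus-$n$ tradeoff that forces $s=\Theta(\sqrt n)$ and hence the $n^{3/2}$ exponent. The genuinely hard work — evaluating the partial sums inside an empty block via multipoint evaluation of the rational function of Lemma~\ref{le:multipoint_evaluation} — has already been carried out in Lemma~\ref{le:block} and Lemma~\ref{le:ar_band}, so here it only remains to orchestrate those pieces.
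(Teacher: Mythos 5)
Your proposal is correct and follows essentially the same route as the paper: split into $\Theta(\sqrt n)$ bands of $\Theta(\sqrt n)$ rows, apply Lemma~\ref{le:ar_band} to each, form prefix sums over vertical slabs, and assemble each $\phi(p,\V{ar})$ as $\sigma(R(p,B^\star))$ plus $O(\sqrt n)$ prefix-sum lookups from the bands below, exactly as in the paper's decomposition $R(p,B_p)\cup\bigcup_{B'}V_\le(i_p,B')$. The only detail you leave implicit is the initial reduction to the positive quadrant (needed so that Lemma~\ref{le:percell} applies), which the paper states explicitly but which was already established earlier in the section.
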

\begin{proof}
	As discussed earlier, it is enough to consider the case when $P$ is in the positive quadrant
	because the other quadrants can be transformed to this one.
	
	We split the set of cells into $k=\lceil \sqrt{n}\rceil$ bands, each with at most $k$
	horizontal slabs; one of the bands can have fewer slabs.
	For each band $B$ we use Lemma~\ref{le:ar_band}. This takes 
	$O(k \cdot (k^2+ n)\log n)= O(n^{3/2} \log n)$ time.
	For each band $B$ and $i\in [n]$, we further consider
	$V_\le(i,B)=\bigcup_{i'\le i} V(i,B)$ and compute the values
	$\sigma(V_\le(i,B))=\sum_{i'\le i} \sigma(V(i',B))$ using prefix sums.
	This takes $O(n^{3/2})$ additional time.

	\begin{figure}
		\centering
		\includegraphics[page=10]{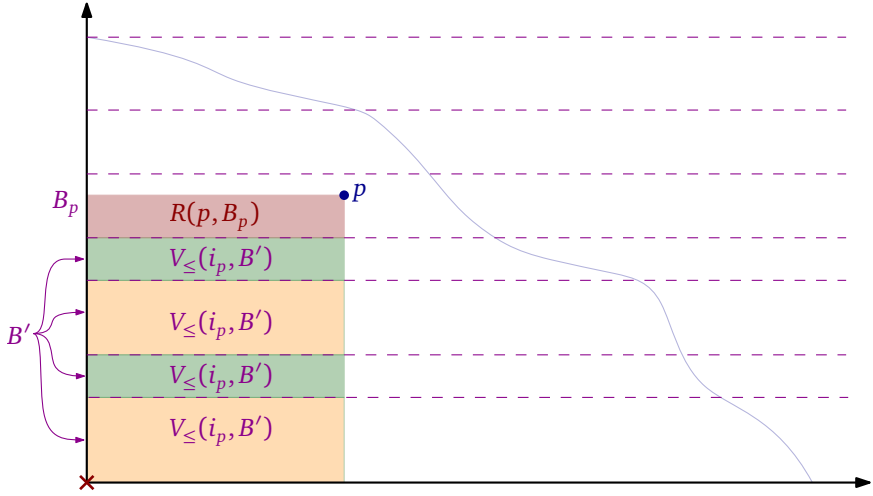}
		\caption{Expressing $R_p$ as the union of other groups of cells.}
		\label{fig:ortho_hull10}
	\end{figure}
	
	The cells inside a rectangle $R_p$ now correspond to the disjoint union
	\[
		R(p,B_p)\cup \bigcup_{B' \text{ below }  B_p} V_\le(i_p,B'),
	\]
	where $B_p$ is the band that contains $p$ (perhaps on its top boundary)
	and the index $i_p$ is such that the cell $c_{i_p,j}$ has $p$ on its top right corner.
	See Figure~\ref{fig:ortho_hull10}.
	Because of Lemma~\ref{le:percell}
	it follows that
	\[
		\phi(p,\V{ar}) ~=~ \sum_{c\in \A,~ c \subset R_{p_a}} \frac{\area(c)}{\NE(c)} ~=~
			\sigma(R(p,B_p)) + \sum_{B' \text{ below } B_p} \sigma(V_\le(i_p,B')).
	\]
	Since the relevant values are already computed, and there are $O(\sqrt{n})$ of them,
	namely one per band,
	we obtain $\phi(p,\V{ar})$ in $O(\sqrt{n})$ time per point $p\in P$.
\end{proof}

\section{Area of the bounding box}
\label{sec:boundingbox}

In this section we are interested in the \AreaBoundingBox game
defined by the characteristic function $\V{bb}$.
The structure of this section is similar to the structure of 
Section~\ref{sec:anchoredrectangles}. In particular, 
we keep assuming that all points have different coordinates and
we keep using the notation introduced in Section~\ref{sec:notation}.

First we note that it is enough to consider the \AreaAnchoredBoundingBox
problem and assume that the points are in one quadrant.
Recall that in this problem the origin $o$ has to be included in the bounding box.
Thus, it uses the characteristic function $\V{abb}(Q)=\area(\bb(Q\cup \{ o\}))$.

\begin{lemma}
\label{le:reductionbb}
	If we can solve the \AreaAnchoredBoundingBox problem for $n$ points
	in the first quadrant in time $T(n)$, then we can solve the \AreaBoundingBox game
	in $T(n)+O(n\log n)$ time.
	Furthermore, if we can solve the \AreaAnchoredBoundingBox problem for any
	$n$ points in a quadrant that form any chain in $T_c(n)$ time, 
	then we can solve the \AreaBoundingBox game for $n$ points that form a chain
	in $T_c(n)+O(n\log n)$ time.
\end{lemma}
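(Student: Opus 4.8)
The plan is to use the linearity of Shapley values together with inclusion--exclusion on the four quadrants determined by a suitable ``center'' point, reducing the bounding-box problem to a bounded number of anchored bounding-box problems, each lying in a single quadrant. First I would observe that $\V{bb}(Q) = (\max x(Q) - \min x(Q))\cdot(\max y(Q) - \min y(Q))$, and this product is not linear, so one cannot directly split it. Instead, fix the point $o^* = (\min x(P), \min y(P))$ — note this is a corner of $\bb(P)$, but in general \emph{not} an input point. After translating so that $o^*$ becomes the origin, every point of $P$ lies in the closed first quadrant, and for every nonempty $Q\subseteq P$ we have $\bb(Q) = \bb(Q\cup\{o^*\})$, because $o^*$ is dominated coordinatewise by (the translate of) every point. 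Hence the \AreaBoundingBox game on $P$ is isometrically equivalent (indeed, just a translate) to the \AreaAnchoredBoundingBox game on the translated point set, which now sits entirely in the first quadrant. This already handles the first statement: compute $\min x(P)$ and $\min y(P)$ in $O(n\log n)$ time (in fact $O(n)$), translate, and invoke the assumed $T(n)$-time algorithm; the Shapley values are invariant under the isometry, so no post-processing is needed.

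For the ``furthermore'' part, the subtlety is that after translating by $o^* = (\min x(P),\min y(P))$ the resulting point set need no longer be a chain: although the $x$- and $y$-orders are preserved by a translation, the translate of a chain is still a chain, so in fact this is not an issue and the same reduction works verbatim. Let me re-examine: if $P$ is an increasing or decreasing chain, then translating all points by the same vector preserves the property that $x(p)<x(q)$ iff $y(p)<y(q)$ (resp.\ $y(p)>y(q)$), so the translated set is again a chain in the first quadrant. Thus the reduction for chains is literally the same: translate by $o^*$ in $O(n)$ time, which is dominated by the $O(n\log n)$ term, and apply the assumed $T_c(n)$-time algorithm for anchored bounding boxes of chains in a quadrant. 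The Shapley values transfer back unchanged.

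The only place where a small amount of care is genuinely needed is verifying that $o^*$ really is coordinatewise dominated by every point after translation — i.e.\ that $x(o^*)\le x(p)$ and $y(o^*)\le y(p)$ for all $p\in P$ — which is immediate from the definition of $o^*$ as the componentwise minimum, and that consequently $\bb(Q) = \bb(Q\cup\{o^*\})$ for all nonempty $Q$, so that the characteristic functions $\V{bb}$ and $\V{abb}$ (after translation) agree on \emph{every} nonempty coalition, not merely on $P$ itself. Once this identity of characteristic functions is in hand, isometric equivalence of the two games is immediate and the equality of Shapley values follows from the remark on isometrically equivalent games in Section~\ref{sec:preliminaries}. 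I do not anticipate a real obstacle here; the statement is essentially a bookkeeping reduction, and the main point to get right is simply that the anchoring point must be the corner of $\bb(P)$ (the componentwise minimum) rather than an input point, and that translation preserves the chain structure.
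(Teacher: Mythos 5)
There is a genuine gap, and it is fatal to the approach. Your reduction rests on the claim that, after translating $o^*=(\min x(P),\min y(P))$ to the origin, one has $\bb(Q)=\bb(Q\cup\{o^*\})$ for \emph{every} nonempty $Q\subseteq P$, so that $\V{bb}$ and $\V{abb}$ coincide as characteristic functions. This is false. The lower-left corner of $\bb(Q)$ is $(\min x(Q),\min y(Q))$, which for a proper subset $Q$ is in general strictly dominated by $o^*=(\min x(P),\min y(P))$; adjoining $o^*$ then enlarges the box. The fact that $o^*$ is coordinatewise dominated by every point of $Q$ does not put $o^*$ inside $\bb(Q)$ --- it puts $o^*$ in the lower-left quadrant of every point of $Q$, which is a different statement. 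Concretely, take $P=\{(1,2),(2,1),(3,3)\}$, so $o^*=(1,1)$ and the translated set is $\{(0,1),(1,0),(2,2)\}$. For $Q=\{(2,2)\}$ we get $\area(\bb(Q))=0$ but $\area(\bb(Q\cup\{o\}))=4$. Since the Shapley value of a player depends on $v(Q)$ for all coalitions $Q$, not just on $v(P)$, the two games have different Shapley values, and your reduction computes the wrong quantities. (A two-point sanity check: for the translated set $\{(0,0),(1,1)\}$ the \AreaBoundingBox Shapley values are $(\tfrac12,\tfrac12)$ while the \AreaAnchoredBoundingBox values are $(0,1)$.)

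The paper's proof uses a genuinely different, and necessary, idea: an inclusion--exclusion decomposition $\V{bb}=\sum_{i=1}^{9}v_i$ around an interior center, where one summand is a constant game, four are (isometric copies of) \AreaBand games solvable in $O(n\log n)$ time by Lemma~\ref{le:basicgames}, and four are \AreaAnchoredBoundingBox games whose relevant points lie in a single quadrant; linearity of Shapley values then combines the pieces. The band games are exactly the correction terms that your approach is missing --- they account for the fact that the extent of $\bb(Q)$ in one coordinate direction can be determined by points lying in a different quadrant than the one being anchored. If you want to salvage your write-up, you must replace the claimed identity of characteristic functions by such a decomposition; no choice of a single anchor point makes $\V{bb}$ and $\V{abb}$ agree on all coalitions.
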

\begin{proof}
	We use inclusion exclusion, as indicated in Figure~\ref{fig:reduction2}.
	There are characteristic functions $v_1,\dots,v_9$ such that,
	for each $Q\subseteq P$, we have $\V{bb}(Q)= \sum_{i=1}^9 v_i(Q)$.
	Moreover, each characteristic function $v_i$ is either a constant value game 
	(first summand in Figure~\ref{fig:reduction2}),
	isometrically equivalent to an \AreaBand game (last four terms 
	in Figure~\ref{fig:reduction2}), or isometrically equivalent to 
	an \AreaAnchoredBoundingBox problem with all points in one quadrant (the remaining
	four summands in Figure~\ref{fig:reduction2}). 
	Finally, note that if the points form a chain, they also form a chain in 
	each of the cases possibly exchanging the increasing/decreasing character.
	\begin{figure}
		\centering
		\includegraphics[width=\textwidth,page=1]{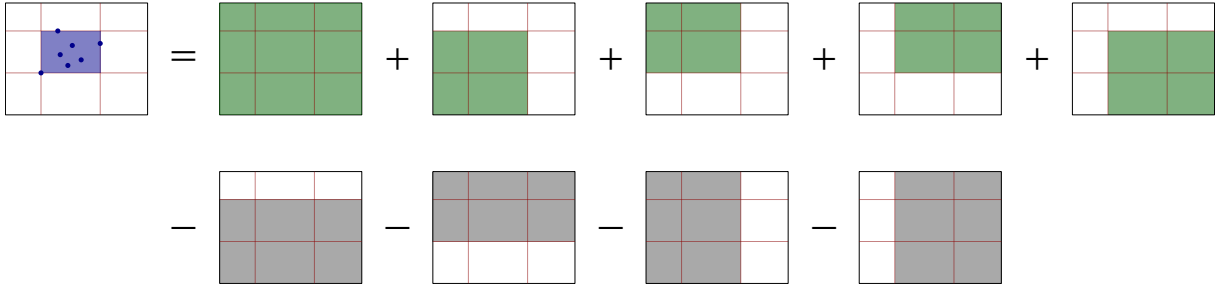}
		\caption{The \AreaBoundingBox game is a sum of other games, where the 
			non-trivial games are \AreaAnchoredBoundingBox games with points in one quadrant.}
		\label{fig:reduction2}
	\end{figure}
\end{proof}

\subsection{Interpreting Shapley values geometrically}

First, we reduce the problem of computing Shapley values of 
the \AreaAnchoredBoundingBox to a purely geometric problem. 
The situation here is slightly more complicated than in Section~\ref{sec:anchoredrectangles}
because a cell $c$ of $\A$ is inside $\bb(Q\cup\{ o\})$ if
and only if $Q$ contains some point in $\SNE(c)$ or
it contains some point in $\SNW(c)$ and in $\SSE(c)$.

For a cell $c$ of $\A$ we define the following values
\begin{align*}
	\psi_{\SNE}(c) ~&=~ \frac{1}{\NE(c)+\NW(c)}+ \frac{1}{\NE(c)+\SE(c)}-
					\frac{1}{\NE(c)+\NW(c)+\SE(c)},\\
	\psi_{\SNW}(c) ~&=~ \frac{1}{\NE(c)+\NW(c)} - \frac{1}{\NE(c)+\NW(c)+\SE(c)},\\
	\psi_{\SSE}(c) ~&=~ \frac{1}{\NE(c)+\SE(c)} - \frac{1}{\NE(c)+\NW(c)+\SE(c)}.
\end{align*}

The following lemma is summarized in Figure~\ref{fig:boundingbox1}.

\begin{figure}
	\centering
	\includegraphics[page=1]{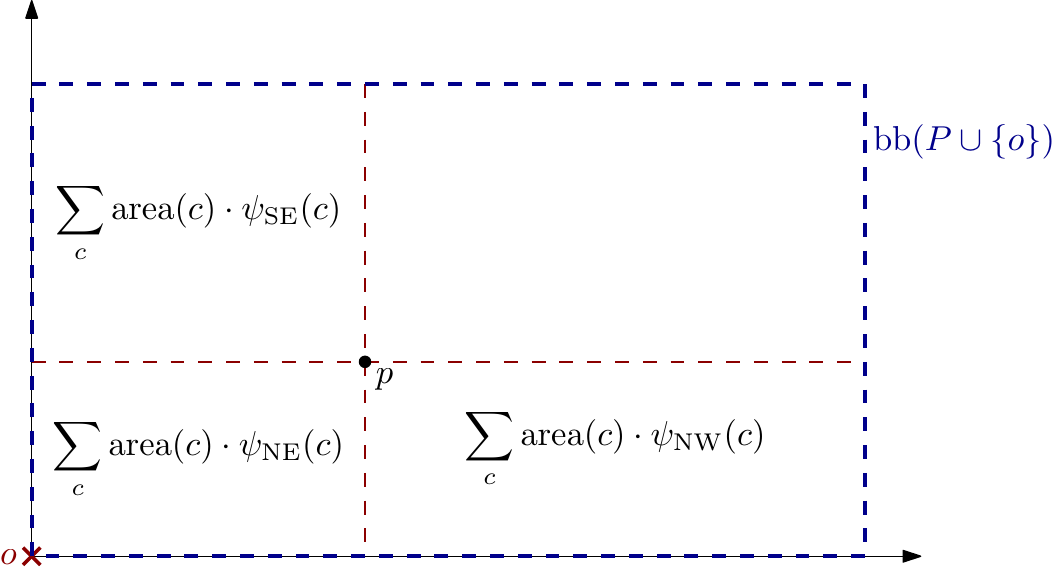}
	\caption{The formula in Lemma~\ref{le:percellbb}.}
	\label{fig:boundingbox1}
\end{figure}
	
\begin{lemma}
\label{le:percellbb}
	If $P$ is in the positive quadrant, then for each $p\in P$ the Shapley value
	$\phi(p,\V{abb})$ is
	\begin{align*}
		  \sum_{c\in \A,~ p\in \SNE(c)} \area(c)\cdot \psi_{\SNE}(c)  
		+ \sum_{c\in \A,~ p\in \SNW(c)} \area(c)\cdot \psi_{\SNW}(c) 
		+ \sum_{c\in \A,~ p\in \SSE(c)} \area(c)\cdot \psi_{\SSE}(c). 
	\end{align*}
\end{lemma}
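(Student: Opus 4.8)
The plan is to mimic the proof of Lemma~\ref{le:percell}: decompose $\V{abb}$ cell by cell, compute the Shapley value of each single‑cell game, and recombine by linearity. For a bounded cell $c$ of $\A$, define the game $(P,v_c)$ by $v_c(Q)=\area(c)$ when $c\subset\bb(Q\cup\{o\})$ and $v_c(Q)=0$ otherwise. By the geometric characterization recalled just before the statement, $v_c(Q)=\area(c)$ exactly when $Q\cap\SNE(c)\neq\emptyset$, or both $Q\cap\SNW(c)\neq\emptyset$ and $Q\cap\SSE(c)\neq\emptyset$. Since the bounded cells of $\A$ partition $\bb(P\cup\{o\})$ (all points being in the positive quadrant) and each box $\bb(Q\cup\{o\})$ is a union of such cells, we have $\V{abb}(Q)=\sum_{c\in\A}v_c(Q)$ for every $Q\subseteq P$, and linearity of Shapley values reduces the lemma to evaluating $\phi(p,v_c)$ for each $p\in P$ and each cell $c$.

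First I would record that the three sets $\SNE(c)$, $\SNW(c)$, $\SSE(c)$ are pairwise disjoint: a point lying to the upper‑right of the open rectangle $c$ can be neither to its upper‑left nor to its lower‑right, and similarly for the other two pairs. Put $A=\SNE(c)$, $B=\SNW(c)$, $C=\SSE(c)$, with $\alpha=\NE(c)$, $\beta=\NW(c)$, $\gamma=\SE(c)$. Clearly $\phi(p,v_c)=0$ unless $p$ lies in one of $A$, $B$, $C$, since otherwise inserting $p$ never changes the status of $c$. The heart of the argument is then a three‑way case analysis identifying, in each case, exactly which permutations $\pi$ have $\Delta(v_c,\pi,p)=\area(c)$, so that $\phi(p,v_c)=\area(c)\cdot\Pr_\pi[\Delta(v_c,\pi,p)=\area(c)]$.

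If $p\in A$: inserting $p$ raises $v_c$ from $0$ to $\area(c)$ iff $c$ was not yet covered just before $p$, i.e.\ iff $p$ is the first point of $A$ in $\pi$ and it is not the case that both $B$ and $C$ already had a point before $p$ — equivalently, $p$ precedes all of $B$ or $p$ precedes all of $C$. This is precisely the event counted by the first item of Lemma~\ref{le:permutations1}, whence $\phi(p,v_c)=\area(c)\cdot\psi_{\SNE}(c)$. If $p\in B$: inserting $p$ supplies the missing $\SNW$‑point, so it raises $v_c$ iff beforehand there was no $\SNE$‑point (hence none afterwards either, as $p\notin A$) and no $\SNW$–$\SSE$ pair, while afterwards there is some $\SSE$‑point to pair with $p$; this forces $p$ to be the first point of $B$, to precede all of $A$, and to have at least one point of $C$ before it — exactly the second item of Lemma~\ref{le:permutations1}, giving $\phi(p,v_c)=\area(c)\cdot\psi_{\SNW}(c)$. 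The case $p\in C$ is symmetric, swapping the roles of $\SNW$ and $\SSE$ (equivalently $\beta$ and $\gamma$), and yields $\phi(p,v_c)=\area(c)\cdot\psi_{\SSE}(c)$. Summing $\phi(p,v_c)$ over all cells and grouping them according to which of $\SNE(c),\SNW(c),\SSE(c)$ contains $p$ produces the three sums in the statement.

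I expect the delicate point to be the two "mixed" cases $p\in\SNW(c)$ and $p\in\SSE(c)$: there the increment is nonzero only in concert with a previously inserted point of the complementary set, and one must verify carefully that the resulting permutation event matches the hypotheses of the second item of Lemma~\ref{le:permutations1} — first within its own set, before the whole of $\SNE(c)$, and with at least one point of the complementary set earlier. A small auxiliary check, needed so that the $\psi$‑values are well defined, is that $\NE(c)+\NW(c)\ge 1$ and $\NE(c)+\SE(c)\ge 1$ for every bounded cell $c$; this holds because the input point lying on the right edge of $c$ belongs to $\SNE(c)\cup\SSE(c)$, and similarly the point on the top edge of $c$ belongs to $\SNE(c)\cup\SNW(c)$.
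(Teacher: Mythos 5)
Your proposal is correct and follows essentially the same route as the paper: the same per-cell decomposition $v_c$, the same characterization of the permutations with $\Delta(v_c,\pi,p)=\area(c)$ in the three cases $p\in\SNE(c)$, $p\in\SNW(c)$, $p\in\SSE(c)$, an appeal to the two items of Lemma~\ref{le:permutations1}, and linearity of Shapley values. The extra remarks (pairwise disjointness of the three quadrant sets and positivity of the denominators in the $\psi$-values) are correct and only make explicit what the paper uses implicitly.
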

\begin{proof}
	Each cell $c$ of the arrangement $\A$ defines a game for the set of players $P$ 
	with characteristic function
	\[
		v_c(Q) ~=~\begin{cases}
			\area(c) & \text{if $c\subset \bb(Q\cup \{ o \})$,}\\
			0 & \text{otherwise.}
			\end{cases}
	\]
	First note $\Delta(v_c,\pi,p)$ can only take the values $0$ and $\area(c)$.
	To analyze the Shapley values of the game defined by $v_c$ 
	we use Lemma~\ref{le:permutations1}.
	
	Consider a point $p\in \SNE(c)$.
	For a permutation $\pi \in \Pi(P)$, we have $\Delta(v_c,\pi,p)= \area(c)$
	if and only if $p$ is the first point of $\SNE(c)$ in the permutation $\pi$
	and all the points of $\SNW(c)$ or all the points of $\SSE(c)$
	are after $p$ in the permutation $\pi$.
	According to the first item of Lemma~\ref{le:permutations1},
	with $N=P$, $a=p$, $A=\SNE(c)$, $B=\SNW(c)$ and $C=\SSE(c)$,
	there are precisely
	\[
		n! \cdot \left(\frac{1}{\NE(c)+\NW(c)}+ \frac{1}{\NE(c)+\SE(c)}-
				\frac{1}{\NE(c)+\NW(c)+\SE(c)}\right) ~=~
		n!\cdot \psi_{\SNE}(c)
	\]
	permutations that fulfill this criteria.
	This means that, for each $p\in \SNE(c)$ we have
	\[
		\phi(p,v_c) ~=~ \area(c) \cdot \psi_{\SNE}(c).
	\]
	
	Consider now a point $p\in \SNW(c)$.
	For a permutation $\pi \in \Pi(P)$, we have $\Delta(v_c,\pi,p)= \area(c)$
	if and only if $p$ is the first point of $\SNW(c)$ in the permutation $\pi$,
	all the points of $\SNE(c)$ are after $p$ in the permutation $\pi$,
	and at least one point of $\SSE(c)$ is before $p$ in the permutation $\pi$.
	According to the second item of Lemma~\ref{le:permutations1},
	with $N=P$, $b=p$, $B=\SNW(c)$, $A=\SNE(c)$ and $C=\SSE(c)$,
	there are precisely
	\[
		n! \cdot \left( \frac{1}{\NE(c)+\NW(c)} - \frac{1}{\NE(c)+\NW(c)+\SE(c)} \right) ~=~
		n! \cdot \psi_{\SNW}(c)
	\]
	permutations that fulfill this criteria.
	This means that, for each $p\in \SNW(c)$ we have
	\[
		\phi(p,v_c) ~=~ \area(c) \cdot \psi_{\SNW}(c).
	\]
	
	The case for a point $p\in \SSE(c)$ is symmetric to the case $p\in \SNW(c)$,
	where the roles of $\SE(c)$ and $\NW(c)$ are exchanged.
	Therefore we conclude
	\[
		\phi(p,v_c) ~=~ \area(c)\cdot 
			\begin{cases}
				\psi_{\SNE}(c)  & \text{if $p\in \SNE(c)$,}\\
				\psi_{\SNW}(c)	& \text{if $p\in \SNW(c)$,}\\
				\psi_{\SSE}(c)	& \text{if $p\in \SSE(c)$.}								
			\end{cases}
	\]
	As a sanity check it good to check that
	\begin{align*}
		\NE(c)\cdot \psi_{\SNE}(c) + \NW(c)\cdot \psi_{\SNW}(c) +
		\SE(c)\cdot \psi_{\SSE}(c) ~=~ 1.
	\end{align*}
	This is the case.
	
	Noting that we have, 
	\[
		\V{abb}(Q) ~=~ \sum_{c\in \A} v_c(Q)~~\text{ for all $Q\subset P$,}
	\]
	the result follows from the linearity of Shapley values.
\end{proof}

For each subset of cells $C$ of $\A$, we define 
\begin{align*}
	\sigma_{\SNE}(C) ~&=~ \sum_{c\in C} \area(c)\cdot \psi_{\SNE}(c),\\ 
	\sigma_{\SNW}(C) ~&=~ \sum_{c\in C} \area(c)\cdot \psi_{\SNW}(c),\\ 
	\sigma_{\SSE}(C) ~&=~ \sum_{c\in C} \area(c)\cdot \psi_{\SSE}(c).\\ 
\end{align*}
Like in the case of anchored boxes, our objective here
is to compute these values for several vertical and horizontal slabs.

\subsection{Handling empty blocks}

In the following we assume that we have preprocessed $P$ in $O(n\log n)$ time
such that $\NE(q)$, $\NW(q)$ and $\SE(q)$ can be computed in $O(\log n)$ time
for each point $q$ given at query time~\cite{Willard85}.
We use again multipoint evaluation to compute the values 
$\sigma_*(\cdot)$ for each vertical and
horizontal slab of an empty block and for each $*\in \{\SNE,\SNW,\SSE\}$.
However, the treatment has to be a bit more careful now because we have
to deal with different rational functions.

\begin{lemma}
\label{le:blockBB}
	Let $B$ be an empty block with $k$ columns and rows.
	We can compute in $O(k \log n)$ time the 
	values $\sigma_{*}(C)$ for all slabs $C$ within $B$ and each $*\in \{\SNE,\SNW,\SSE\}$.
\end{lemma}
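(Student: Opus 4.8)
The plan is to mimic the proof of Lemma~\ref{le:block}, but now tracking the three weight functions $\psi_{\SNE}$, $\psi_{\SNW}$, $\psi_{\SSE}$ at once. The key algebraic observation is that, with $P$ in the positive quadrant, every input point lies either strictly left or strictly right of the $x$-range of a cell $c=c_{i,j}$, and either strictly below or strictly above its $y$-range; hence $\NE(c)+\NW(c)=n-j+1$ and $\NE(c)+\SE(c)=n-i+1$. Writing $a_i=n-i+1$ and $b_j=n-j+1$, this turns the three defining formulas into $\psi_{\SNE}(c)=\tfrac1{a_i}+\tfrac1{b_j}-\tfrac1{a_i+b_j-\NE(c)}$, $\psi_{\SNW}(c)=\tfrac1{b_j}-\tfrac1{a_i+b_j-\NE(c)}$, and $\psi_{\SSE}(c)=\tfrac1{a_i}-\tfrac1{a_i+b_j-\NE(c)}$. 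The crucial point is that the only ``coupled'' denominator, the one mixing a column index, a row index and $\NE(c)$, is $a_i+b_j-\NE(c)$, and it is shared by all three weights; the remaining terms each depend on a single index. All these denominators are at least $1$, since $\NE(c)+\NW(c)=n-j+1\ge 1$ and $a_i,b_j\ge 1$, so all the rational functions below are defined at the points where we evaluate them.

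First I would compute, by $O(k)$ orthogonal range-counting queries ($O(\log n)$ time each), the values $\NW(c_{i_0,j})$ for all rows $j$ of the block $B=B(i_0,i_1,j_0,j_1)$ and the offsets $\delta_i:=\NE(c_{i,j_0})-\NE(c_{i_0,j_0})$ for all columns $i$. Exactly as in Lemma~\ref{le:block}, emptiness of $B$ implies $\NE(c_{i,j})=\NE(c_{i_0,j})+\delta_i$ for every row $j$ of $B$, and therefore $a_i+b_j-\NE(c_{i,j})=\NW(c_{i_0,j})+(a_i-\delta_i)$, using $b_j-\NE(c_{i_0,j})=\NW(c_{i_0,j})$. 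Grouping the rows of $B$ into classes $J(\ell)=\{\,j: \NW(c_{i_0,j})=\ell\,\}$ and setting $R(x)=\sum_\ell \frac{\sum_{j\in J(\ell)} h_j}{\ell+x}$, which is of the form handled by Lemma~\ref{le:multipoint_evaluation}, we get $\sum_{j} \frac{h_j}{a_i+b_j-\NE(c_{i,j})}=R(a_i-\delta_i)$. Since $a_i$ decreases by $1$ and $\delta_i$ by $0$ or $1$ as $i$ grows, the arguments $a_i-\delta_i$ form a non-increasing sequence contained in an integer interval of length at most $k$, so a single multipoint evaluation over that interval returns all of them in $O(k\log k)$ time. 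With $S:=\sum_j h_j/b_j$ and $H:=\sum_j h_j$ precomputed in $O(k)$ time, we obtain $\sigma_{\SNE}(V(i,B))=w_i\!\left(\tfrac{H}{a_i}+S-R(a_i-\delta_i)\right)$, $\sigma_{\SNW}(V(i,B))=w_i\!\left(S-R(a_i-\delta_i)\right)$ and $\sigma_{\SSE}(V(i,B))=w_i\!\left(\tfrac{H}{a_i}-R(a_i-\delta_i)\right)$ for all $i$, in $O(k)$ extra time. The horizontal slabs are symmetric: emptiness gives $\NE(c_{i,j})=\NE(c_{i,j_0})+\delta'_j$ with $\delta'_j$ independent of $i$, so $a_i+b_j-\NE(c_{i,j})=\SE(c_{i,j_0})+(b_j-\delta'_j)$; grouping columns by $\SE(c_{i,j_0})$ and running one more multipoint evaluation on the weights $w_i$ yields $\sigma_*(H(j,B))$ for all $j$ and all $*\in\{\SNE,\SNW,\SSE\}$. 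Adding up, the total cost is $O(k\log n)$ for the range queries plus $O(k\log k)$ for the two FFT-based evaluations plus $O(k)$ arithmetic, i.e.\ $O(k\log n)$.

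I expect there to be no serious obstacle beyond careful bookkeeping: one must verify the two cardinality identities $\NE(c)+\NW(c)=n-j+1$ and $\NE(c)+\SE(c)=n-i+1$, confirm that emptiness of $B$ propagates the offsets $\delta_i$ (resp.\ $\delta'_j$) uniformly across rows (resp.\ columns) just as in Lemma~\ref{le:block}, and keep straight which term of each $\psi_*$ lands in the ``easy'' single-index part and which lands in the shared coupled part. The point worth emphasizing is that, because all three $\psi_*$ reuse the same coupled denominator, only one rational function $R$ per slab direction is needed, so the running time matches that of Lemma~\ref{le:block} up to constants.
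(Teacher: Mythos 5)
Your proof is correct, and it follows the same core strategy as the paper's: use emptiness of the block to show that the relevant counters shift by a column-dependent (resp.\ row-dependent) offset uniformly across rows (resp.\ columns), group cells by the value of the shared denominator, and apply Lemma~\ref{le:multipoint_evaluation} to evaluate the resulting rational function at $O(k)$ integer points. The one genuine difference is organizational: you exploit the identities $\NE(c_{i,j})+\NW(c_{i,j})=n-j+1$ and $\NE(c_{i,j})+\SE(c_{i,j})=n-i+1$ (valid because every grid line passes through a point of $P$ in the positive quadrant), which make two of the three denominators in each $\psi_*$ depend on a single index and leave only the coupled denominator $a_i+b_j-\NE(c)$, shared by all three weights; hence a single rational function $R$ per slab direction, plus the elementary sums $S$, $H$, suffices. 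The paper instead works with three objects per direction (the constant $R_1$ and the functions $R_2$, $R_3$, evaluated at the shifts $-(i-i_0)$ and $\delta_i$ respectively), which needs no global identity and only the local observations about how the counters change between adjacent slabs; your version is a bit leaner in bookkeeping, the paper's is slightly more self-contained. Your monotonicity argument that the evaluation points $a_i-\delta_i$ occupy an integer interval of length at most $k$, and the positivity of all denominators (each equals $\NE+\NW+\SE\ge\NE+\NW\ge 1$ for a cell of the block), are the right checks for invoking Lemma~\ref{le:multipoint_evaluation}, and the total running time $O(k\log n)$ matches the claim.
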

\begin{proof}
	The proof is very similar to the proof of Lemma~\ref{le:block},
	but some adaptation has to be made.

	Assume that $B$ is the block $B(i_0,i_1,j_0,j_1)$.
	We explain how to compute the values $\sigma_{\SNE}(V(i,B))$ for all $i_0\le i\le i_1$.
	The technique to compute $\sigma_{\SNE}(H(j_0,B)),\dots, \sigma_{\SNE}(H(j_1,B))$
	and for $\SNW$ and $\SSE$ instead of $\SNE$ is similar.
	
	For each $\ell$ we define
	$J(\ell)=\{ j\mid j_0\le j \le j_1, ~ \NE(c_{i_0,j})+\SE(c_{i_0,j})=\ell\}$
	and
	$J'(\ell)=\{ j\mid j_0\le j \le j_1, ~ \NE(c_{i_0,j})+\NW(c_{i_0,j})+\SE(c_{i_0,j})=\ell\}$.
	Let $\ell_0$ and $\ell_1$ be the minimum and the maximum $\ell$ such that $J(\ell)\neq 0$, respectively.
	Similarly, let $\ell'_0$ and $\ell'_1$ be the minimum and the maximum $\ell$ such that $J'(\ell)\neq 0$.
	
	We set up the following fractional constant and rational functions with variable $x$:
	\begin{align*}
		R_1    ~&=~ \sum_{j=j_0}^{j_1} \frac{h_j}{\NE(c_{i_0,j})+\NW(c_{i_0,j})},\\
		R_2(x) ~&=~ \sum_{\ell=\ell_0}^{\ell_1} \frac{~\sum_{j\in J(\ell)}h_j~}{\ell+x},\\
		R_3(x) ~&=~ \sum_{\ell=\ell'_0}^{\ell'_1} \frac{~\sum_{j\in J'(\ell)}h_j~}{\ell+x}.
	\end{align*}
	Each of them is the sum of at most $k$ fractions. 
	As discussed in the proof of Lemma~\ref{le:block},
	$R_2(x)$ and $R_3(x)$ can be rewritten so that we can use Lemma~\ref{le:multipoint_evaluation}
	to evaluate them at consecutive points.
	The coefficients can be computed in $O(k\log n)$ time.	
	
	\begin{figure}
		\centering
		\includegraphics[width=\textwidth,page=2]{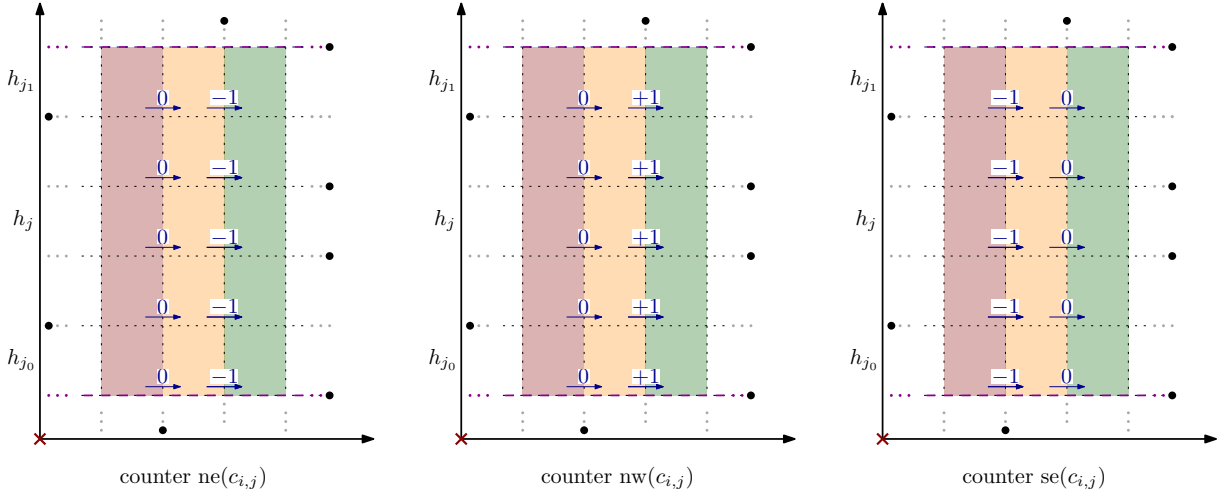}
		\caption{Changes in the counter $\NE(\cdot)$ (left), $\NW(\cdot)$ (center) 
			and $\SE(\cdot)$ (right) depending on the position of the points of $P$.}
		\label{fig:boundingbox2}
	\end{figure}

	Consider two consecutive vertical slabs $V(i,B)$ and $V(i+1,B)$ within  
	the block $B$. Figure~\ref{fig:boundingbox2} shows the changes in the counters.
	Because the block $B$ is empty we have the following properties:
	\begin{itemize}
		\item The sum $\NE(c_{i,j})+\NW(c_{i,j})$ is independent of $i$.
			Thus we have 
			\[
				\NE(c_{i,j})+\NW(c_{i,j})= \NE(c_{i_0,j})+\NW(c_{i_0,j})
			\]
			for all relevant $i$ and $j$.
		\item The difference 
			\[ 
				\bigl(\NE(c_{i+1,j})+\SE(c_{i+1,j})\bigr) - \bigl(\NE(c_{i,j})+\SE(c_{i,j})\bigr)
			\]
			is always $-1$.
			Therefore we have
			\[
				\NE(c_{i,j})+\SE(c_{i,j})= \NE(c_{i_0,j})+\SE(c_{i_0,j}) - (i-i_0)
			\]
			for all relevant $i$ and $j$.
			In particular, for each $j\in J(\ell)$ we have $\NE(c_{i,j})+\SE(c_{i,j})= \ell -(i-i_0)$.
		\item The difference 
			\[ 
				\bigl(\NE(c_{i+1,j})+\NW(c_{i+1,j})+\SE(c_{i+1,j})\bigr) - 
				\bigl(\NE(c_{i,j})+\NE(c_{i,j})+\SE(c_{i,j})\bigr)
			\]
			depends on whether the point with $x(p)=x_i$ is above or below $B$.
			Thus, this difference is independent of the row $j$.
			This means that there exist, for each index $i$, a value $\delta_i$
			such that 
			\[
				\NE(c_{i,j})+\NW(c_{i,j})+\SE(c_{i,j})= 
				\NE(c_{i_0,j})+\NW(c_{i_0,j})+ \SE(c_{i_0,j}) + \delta_i
			\]
			for all relevant $j$.
			In particular, for each $j\in J'(\ell)$ we have $\NE(c_{i,j})+\NW(c_{i,j})+\SE(c_{i,j})= \ell +\delta_i$.
			Each single value $\delta_i$ can be computed as $\delta_i=\SE(c_{i,j})- \SE(c_{i_0,j})$
			in $O(\log n)$ time using range counting queries.
	\end{itemize}

	Note that for each relevant $i$ we have
	\begin{align*}
		w_i&\cdot (R_1+R_2(-i+i_0)-R_3(\delta_i))\\
			&=~ w_i \cdot \left(\sum_{j=j_0}^{j_1} \frac{h_j}{\NE(c_{i_0,j})+\NW(c_{i_0,j})} ~+~
				\sum_{\ell=\ell_0}^{\ell_1} \frac{~\sum_{j\in J(\ell)}h_j~}{\ell-i+i_0} ~+~
				\sum_{\ell=\ell'_0}^{\ell'_1} \frac{~\sum_{j\in J'(\ell)}h_j~}{\ell+\delta_i}
				\right)\\
			&=~ \sum_{j=j_0}^{j_1} \left( \frac{w_i h_j}{\NE(c_{i,j})+\NW(c_{i,j})} +
					\frac{w_i h_j}{\NE(c_{i,j})+\SE(c_{i,j})}  
					-\frac{w_i h_j}{\NE(c_{i,j})+\NW(c_{i,j})+\SE(c_{i,j})}\right) \\
			&=~ \sum_{j=j_0}^{j_1} \area(c_{i,j}) \cdot \psi_{\SNE}(c_{i,j})\\
			&=~ \sigma_{\SNE}(V(i,B)).
	\end{align*}
	
	Thus, we need to evaluate $R_2(x)$ at the values $x=0,-1,\dots,i_0-i_1$
	and we have to evaluate $R_3(x)$ at the values $\delta_i$ for $i=i_0,\dots,i_1$.
	According to Lemma~\ref{le:multipoint_evaluation}, this can be done 
	in $O(k\log n)$ time.  
	After this, we get each value $\sigma_{\SNE}(V(i,B))$ in constant time. 
\end{proof}

\subsection{Algorithms for bounding box}
The same techniques that were used in Sections~\ref{sec:ar_chain} and~\ref{sec:ar_general}
work in this case.
We compute partial sums $\sigma_{*}(\cdot)$ for several vertical
bands and horizontal bands. The only difference is that we have to cover the whole
bounding box $\bb(P\cup \{ o\})$, instead of only the portion that was dominated
by some point. 
In the case of an increasing chain in the positive quadrant, the 
anchored bounding box and the union of anchored rectangles is actually the
same object.
In the case of a decreasing chain, 
we have to work above and below the chain to cover the whole bounding box. 
See Figure~\ref{fig:boundingbox3}. 
This is twice as much work, so it does not affect
the asymptotic running time.
For arbitrary point sets, we again use the bands with roughly $\sqrt{n}$ horizontal slabs,
and break each slab into empty boxes, as it was done in Lemma~\ref{le:ar_band}.
For the partial sums that we consider (like $\sigma(H_\le(\cdot))$ and $\sigma(V_\le(\cdot))$)
we also construct the symmetric versions $\sigma_*(H_\ge(\cdot))$ and $\sigma_*(V_\ge(\cdot))$.
With this information we can recover the sums that we need in each of the
three quadrants with an apex in $p\in P$; recall Figure~\ref{fig:boundingbox1}.
In the case of a decreasing chain, we get an extra case that is handled by
noting that
\[
	\eta_a ~=~ \sum_{c\in \A,~ p_a\in \SSE(c)} \area(c)\cdot \psi_{\SSE}(c)
\]
is
\[
	\eta_a ~=~ \eta_{a-1} + 
	\sum_{j=a+1}^{n} \area(c_{a,j})\cdot \psi_{\SSE}(c_{a,j}) + 
	\sum_{i=1}^{a-1} \area(c_{i,n-a+1})\cdot \psi_{\SSE}(c_{i,n-a+1}).
\]
The last two terms are a vertical and a horizontal band around the cell $c_{a,n-a+1}$.
The sums for $\psi_{\SNW}$ are handled similarly.
We omit the details and summarize.

\begin{figure}
	\centering
	\includegraphics[page=3]{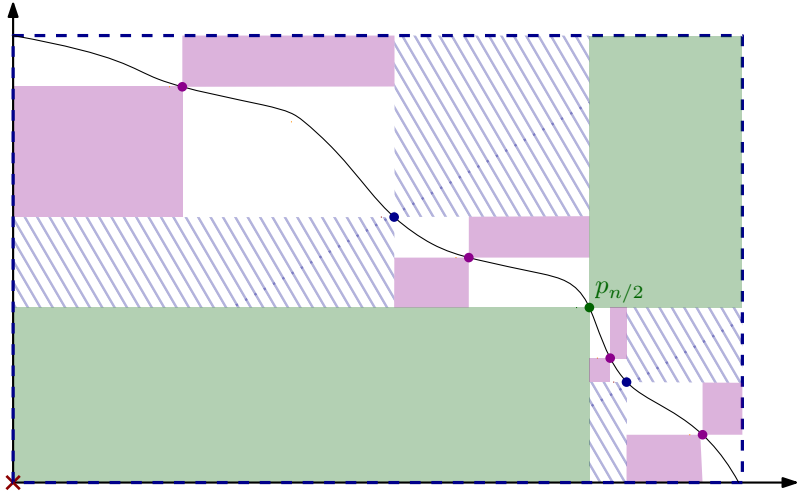}
	\caption{The empty blocks to be considered for the decreasing chain.}
	\label{fig:boundingbox3}
\end{figure}

\begin{theorem}
\label{thm:abb}
	The Shapley values of the \AreaAnchoredBoundingBox game for $n$ points can be computed 
	in $O(n^{3/2} \log n)$ time. If the points form a chain,
	then we need $O(n \log^2 n)$ time.
\end{theorem}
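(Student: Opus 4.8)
The plan is to mirror the development of Section~\ref{sec:anchoredrectangles} (Lemmas~\ref{le:ar_chain1}, \ref{le:ar_chain2}, \ref{le:ar_band} and Theorem~\ref{thm:ar_general}), replacing throughout the single weight $\area(c)/\NE(c)$ by the three weighted sums $\sigma_{\SNE}$, $\sigma_{\SNW}$, $\sigma_{\SSE}$, and replacing Lemma~\ref{le:block} by Lemma~\ref{le:blockBB}. By Lemma~\ref{le:reductionbb} it suffices to solve the \AreaAnchoredBoundingBox problem for points in the positive quadrant, and by Lemma~\ref{le:percellbb} the Shapley value of $p$ is the sum of $\sigma_{\SNE}$ over the cells with $p\in\SNE(c)$, $\sigma_{\SNW}$ over the cells with $p\in\SNW(c)$, and $\sigma_{\SSE}$ over the cells with $p\in\SSE(c)$. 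Each of these three families of cells, for a fixed $p=(x_i,y_j)$, is a rectangular range of cells of $\A$ (the NE quadrant is $\{c_{i',j'} : i'\le i,\ j'\le j\}$; the NW and SE quadrants are the other two axis-aligned corner regions), so the task is exactly the same ``sum of a cell-weight over a corner range'' problem, run three times with three different weight functions.

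\textbf{For the chain case}, I would handle increasing and decreasing chains separately. For an increasing chain the anchored bounding box equals the union of anchored rectangles (as the paper notes), so Lemma~\ref{le:ar_chain1} applies verbatim. For a decreasing chain I would run the divide-and-conquer of Lemma~\ref{le:ar_chain2}: recurse on empty blocks $B_{\ell,r}$, and for each block compute, via Lemma~\ref{le:blockBB}, the values $\sigma_*(V(i,B))$ and $\sigma_*(H(j,B))$ for all slabs and all $*\in\{\SNE,\SNW,\SSE\}$; then form prefix sums $\sigma_*(V_{\le}(\cdot))$, $\sigma_*(H_{\le}(\cdot))$ and also the symmetric suffix sums $\sigma_*(V_{\ge}(\cdot))$, $\sigma_*(H_{\ge}(\cdot))$, since the SE and NW corner regions open in the ``increasing'' direction. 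For a point $p_a$ we must now cover the whole bounding box rather than only $R_{p_a}$: the cells above the chain and the cells below the chain are each a union of $O(\log n)$ slab-prefixes/suffixes from the blocks in $\I(a)$, and we add up the appropriate $\sigma_*$ values. The one extra wrinkle, already flagged in the text, is the ``staircase'' contribution $\eta_a=\sum_{c:\,p_a\in\SSE(c)}\area(c)\psi_{\SSE}(c)$ along the chain itself, which satisfies the recurrence $\eta_a=\eta_{a-1}+\sum_{j=a+1}^n \area(c_{a,j})\psi_{\SSE}(c_{a,j})+\sum_{i=1}^{a-1}\area(c_{i,n-a+1})\psi_{\SSE}(c_{i,n-a+1})$; since each of the two new terms is a single vertical (resp.\ horizontal) slab, these are already computed, so the $\eta_a$ and the symmetric $\SNW$ quantities are obtained in $O(1)$ amortized time each. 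The block decomposition contributes $O(\log n)$ levels, each processed in $O(n\log n)$ by Lemma~\ref{le:blockBB}, for $O(n\log^2 n)$ total.

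\textbf{For general point sets}, I would follow Theorem~\ref{thm:ar_general}: partition $\A$ into $k=\lceil\sqrt n\rceil$ horizontal bands, each with at most $k$ horizontal slabs, and for each band run the band algorithm of Lemma~\ref{le:ar_band} but with Lemma~\ref{le:blockBB} in place of Lemma~\ref{le:block}, cutting the band at the $x$-coordinates of its defining points into empty blocks and computing all $\sigma_*(V(\cdot))$, $\sigma_*(H(\cdot))$ within each. This costs $O(((k)^2+n)\log n)$ per band and $O(n^{3/2}\log n)$ overall. Then, for each band, form the prefix sums $\sigma_*(V_{\le}(i,B))$ and the suffix sums $\sigma_*(V_{\ge}(i,B))$ over all $i$, and analogous ``$R_*(p,B)$'' corner-sums inside the band containing $p$, so that the three corner regions with apex $p$ (NE, NW, SE) each decompose into one in-band corner-sum plus $O(\sqrt n)$ whole-slab contributions from the bands above/below. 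Summing the relevant $\sigma_{\SNE},\sigma_{\SNW},\sigma_{\SSE}$ values as dictated by Lemma~\ref{le:percellbb} gives $\phi(p,\V{abb})$ in $O(\sqrt n)$ time per point, hence $O(n^{3/2})$ additional time, and the bound follows.

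\textbf{The main obstacle} I anticipate is purely bookkeeping rather than conceptual: keeping straight which of the three weight functions is summed over which of the three corner quadrants, and making sure that for the NW and SE quadrants (which grow in the opposite direction from the NE quadrant used in Section~\ref{sec:anchoredrectangles}) we maintain the suffix-sum versions of every prefix structure. There is no new algebraic difficulty, because Lemma~\ref{le:blockBB} has already absorbed the extra care needed for the three distinct rational functions $R_1,R_2,R_3$ arising from the $\psi_*$ formulas; in particular the key structural facts inside an empty block (that $\NE+\NW$ is column-independent, that $\NE+\SE$ decreases by exactly $1$ per column, and that $\NE+\NW+\SE$ shifts by a row-independent $\delta_i$) are exactly what let the multipoint-evaluation trick of Lemma~\ref{le:multipoint_evaluation} go through, just as $\NE$ shifting by $\delta_i$ did in Lemma~\ref{le:block}. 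So the proof is a faithful re-run of Section~\ref{sec:anchoredrectangles} with the three-function substitution, plus the explicit $\eta_a$ recurrence above for the decreasing-chain case; we omit the repeated details.
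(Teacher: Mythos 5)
Your proposal is correct and follows essentially the same route as the paper: reduce via Lemma~\ref{le:reductionbb} and Lemma~\ref{le:percellbb}, rerun the chain and band/block machinery of Section~\ref{sec:anchoredrectangles} with Lemma~\ref{le:blockBB} in place of Lemma~\ref{le:block}, add the symmetric suffix-sum structures to cover the whole bounding box, and handle the decreasing-chain staircase with the same $\eta_a$ recurrence the paper uses. The paper's own proof is exactly this sketch (it likewise omits the repeated details), so there is nothing to add.
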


Because of Lemma~\ref{le:reductionbb} we also obtain the following.
\begin{theorem}
\label{thm:bb}
	The Shapley values of the \AreaBoundingBox game for $n$ points can be computed 
	in $O(n^{3/2} \log n)$ time. If the points form a chain,
	then we need $O(n \log^2 n)$ time.
\end{theorem}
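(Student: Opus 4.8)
The plan is to obtain Theorem~\ref{thm:bb} as an immediate corollary of the reduction in Lemma~\ref{le:reductionbb} combined with the bounds of Theorem~\ref{thm:abb}. Lemma~\ref{le:reductionbb} states that a $T(n)$-time algorithm for the \AreaAnchoredBoundingBox problem on $n$ points in one quadrant yields a $(T(n)+O(n\log n))$-time algorithm for the \AreaBoundingBox game, and that a $T_c(n)$-time algorithm for the anchored problem on chains in a quadrant yields a $(T_c(n)+O(n\log n))$-time algorithm for \AreaBoundingBox on chains.

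First I would plug in the bounds supplied by Theorem~\ref{thm:abb}, namely $T(n)=O(n^{3/2}\log n)$ in general and $T_c(n)=O(n\log^2 n)$ for chains. This gives total running times
\[
	O(n^{3/2}\log n)+O(n\log n)=O(n^{3/2}\log n)
\]
in the general case and
\[
	O(n\log^2 n)+O(n\log n)=O(n\log^2 n)
\]
for chains, exactly matching the claim. The ingredients that make this work are all internal to Lemma~\ref{le:reductionbb}: the inclusion-exclusion decomposition expresses $\V{bb}$ as a pointwise sum of nine characteristic functions, each one a constant game, an \AreaBand game, or (up to a reflecting isometry) an \AreaAnchoredBoundingBox game on points in a single quadrant; linearity of Shapley values then lets us add the per-component Shapley values; the constant games and \AreaBand games cost only $O(n\log n)$ by Lemma~\ref{le:basicgames}; and when $P$ is a chain, each anchored-box component is again a chain, possibly with its increasing/decreasing character flipped by the isometry.

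Since all the real work has already been carried out in Lemma~\ref{le:reductionbb} and Theorem~\ref{thm:abb}, there is no substantive obstacle here: the proof is pure bookkeeping. If I had to name something delicate, it would be confirming that the reflecting isometries used in the reduction really do map chains to chains and that the nine pieces genuinely sum to $\V{bb}$ pointwise; but both are routine and are already asserted in the proof of Lemma~\ref{le:reductionbb}.
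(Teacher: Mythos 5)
Your proposal matches the paper exactly: Theorem~\ref{thm:bb} is stated there as an immediate consequence of Lemma~\ref{le:reductionbb} applied with the bounds of Theorem~\ref{thm:abb}, which is precisely your bookkeeping argument. The proof is correct and takes essentially the same route.
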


\section{Conclusions}
\label{sec:conclusions}

We have discussed the efficient computation of Shapley values,
a classical topic in game theory, for coalitional games defined for points in the plane
and characteristic functions given by the area of geometric objects.
For axis-parallel problems we used computer algebra to get
faster algorithms. For non-axis-parallel problems we provided
efficient algorithms based on decomposing the sum into parts
and grouping permutations that contribute equally to a part.

In game theory, quite often one considers coalitional games 
where some point, say the origin, has to be included in the solutions
that are considered. For example, we could use
the minimum enclosing disk that contains also the origin.
We do this for the anchored versions of the games. 
This setting is meaningful in several scenarios, for example,
when we split the costs to connect to something.
Our results also hold in this setting through an easy adaptation.

The problems we consider here are a new type of stochastic problems
in computational geometry. The relation to other problems in 
stochastic computational geometry is unclear. For example, 
computing the expected length of the minimum spanning tree (MST) in the plane 
for a stochastic point set is \#P-hard~\cite{KamousiCS11}.
Does this imply the same for the coalitional game based on the length
of the Euclidean MST? 
Is there a two-way relation between Shapley values and 
the stochastic version for geometric problems?
In particular, is there a FPRAS for computing
the Shapley values of the game defined using the Euclidean MST? 
For the stochastic model this is shown by Kamousi et al.~\cite{KamousiCS11}. 
Note that the length of the Euclidean spanning tree is not monotone: adding
points may reduce its length. Thus, some Shapley values could
be potentially $0$, which, at least intuitively, 
makes it harder to get approximations.

Finally, it would be worth to understand whether there is some relation
between Shapley values in geometric settings and the concept of depth in point sets,
in particular for the Tukey depth. For stochastic points, this relation
has been explored and exploited by Agarwal et al.~\cite{AgarwalHSYZ17}.

Our algorithms generalize to higher dimensions, increasing the degree
of the polynomial describing the running time. Are there substantial improvements
possible for higher dimensions?
In dimension $d$, computing the volume of the bounding box can be done in $O(dn)$ time,
while the obvious algorithms to compute the Shapley values for the corresponding
game require $n^{\Theta(d)}$ time. 
Is this linear dependency on $d$ in the degree of the polynomial actually needed, 
under some standard assumption in complexity theory?

\subsection*{Acknowledgments.}
The authors are very grateful to Sariel Har-Peled for fruitful discussions. In particular, 
the possible link to Tukey depth was pointed out by him.

\end{document}